\newtheorem{theorem}{Theorem}[section]
\newtheorem{lemma}[theorem]{Lemma}
\newtheorem{proposition}[theorem]{Proposition}
\newtheorem{corollary}[theorem]{Corollary}
\theoremstyle{definition}
\newtheorem{definition}{Definition}[section]
\newtheorem{remark}[definition]{Remark}
\numberwithin{equation}{section}
\def\cN{\mathcal{N}}
\def\cF{\mathcal{F}}
\def\cG{\mathcal{G}}
\def\wN{\widetilde{\mathcal{N}}_+}
\def\b1{\mathds{1}}
\def\1{\mathds{1}}
\def\Tr{\mathrm{Tr}}
\def\Re{\mathrm{Re}}
\def\a0{\mathfrak{a}_0}
\newcommand{\vertiii}[1]{{\left\vert\kern-0.25ex\left\vert\kern-0.25ex\left\vert #1 
    \right\vert\kern-0.25ex\right\vert\kern-0.25ex\right\vert}}
\begin{document}
\title[Exponential bounds of BEC for dilute Bose gases]{Exponential bounds of the condensation for dilute Bose gases}

\author[P.T. Nam]{Phan Th\`anh Nam}
\address{Department of Mathematics, LMU Munich, Theresienstrasse 39, 80333 Munich, Germany} 
\email{nam@math.lmu.de}

\author[S. Rademacher]{Simone Rademacher}
\address{Department of Mathematics, LMU Munich, Theresienstrasse 39, 80333 Munich, Germany} 
\email{simone.rademacher@math.lmu.de}

\date{\today}

\begin{abstract}
We consider $N$ bosons on the unit torus $\Lambda =[0,1]^3$ in the Gross-Pitaevski regime where the interaction potential scales as $N^2 V(N(x-y))$. We prove that the low-lying eigenfunctions and the Gibbs state at low temperatures exhibit the Bose-Einstein condensation in a strong sense, namely the probability of having $n$ particles outside of the condensation decays exponentially in $n$. 
\end{abstract}

\maketitle

\section{Introduction}

%Bose--Einstein condensation (BEC) is a special phenomenon of the thermal equilibrium of Bose gases at low temperatures where a macroscopic fraction of particles occupy a common one-body quantum state. This  was predicted in 1924 by Bose \cite{Bose-24} and Einstein \cite{Einstein-24}  and has been observed experimentally in alkali gases since 1995 \cite{WC-95,K-95}, but the rigorous understanding of the BEC from first principles of quantum mechanics remains a major challenge in mathematical physics. In fact, the works \cite{Bose-24,Einstein-24} cover only the ideal gas, while in reality interactions between particles correspond to many important quantum effects such as  superfluidity and quantized vortices. The aim of the present paper is to give a justification of the BEC for a class of dilute Bose gases where the number of particles outside of the condensate is controlled in a rather strong sense. 

Bose--Einstein condensation (BEC) is a special phenomenon of the thermal equilibrium of Bose gases at low temperatures where a macroscopic fraction of particles occupy a common one-body quantum state. This  was predicted in 1924 by Bose \cite{Bose-24} and Einstein \cite{Einstein-24}  and has been observed experimentally in alkali gases since 1995 \cite{WC-95,K-95}, but the rigorous understanding of the BEC from first principles of quantum mechanics remains a major challenge in mathematical physics. In fact, the works \cite{Bose-24,Einstein-24} cover only the ideal gas, while in reality interactions between particles correspond to many important quantum effects such as  superfluidity and quantized vortices. 

On the mathematical side, the justification of the BEC for the ground state of interacting Bose gases 
in the thermodynamic limit remains open. However, recently the Lee--Huang--Yang formula \cite{LeeHuaYan-57} on the ground state energy has been established rigorously; see \cite{FouSol-20,FouSol-22} for lower bounds and \cite{YauYin-08,BasCenSch-21} for upper bounds (the corresponding formula for the free energy at low temperature has been also verified in \cite{HHNST,HHSST}). These works use localization methods that reduce the problem to the justification of BEC in smaller domains, where relevant error estimates essentially rely on results in the Gross--Pitaevskii regime. The aim of the present paper is to give a justification of the BEC for a class of interacting Bose gases in the Gross--Pitaevskii regime where the number of particles outside of the condensate is controlled in a rather strong sense, which in particular improves the results in  \cite{LS-02,BCCS_optimal}.

\subsection{Main results} We consider $N$ bosons on the torus $\Lambda =[0,1]^3$ in the Gross-Pitaevski regime where the system is  described by the Hamiltonian 
\begin{align}
\label{def:HN-intro}
H_N = \sum_{j=1}^N  (-\Delta_j  )+  \sum_{1 \leq i < j \leq N} N^2 v(N (x_i - x_j)) 
\end{align}
on $L_s^2 ( \Lambda^N)$, the symmetric subspace of $L^2 ( \Lambda^N)$ given by 
\begin{align}
L^2_s (\Lambda^N) := \lbrace \psi \in L^2 ( \Lambda^N) \;\; \vert \; \;  \psi( x_1, \dots, x_N) = \psi ( x_{\pi (1)}, \dots, x_{\pi (N)} ) , \; \; \forall \pi \in S_N \rbrace
\end{align}
where $S_N$ denotes the set of permutations. Here, we fix a non-negative compactly supported potential  $v$, thus ensuring that the scattering length of the interaction potential $N^2 v(Nx)$ is proportional to $N^{-1}$. This models dilute gases in the typical setting of experiments \cite{WC-95,K-95}.

In 2002, Lieb and Seiringer \cite{LS-02} proved that the ground state $\Psi_N$ of $H_N$ exhibits the complete Bose--Einstein condensation on the condensate wave function $u_0=1$, namely 
\begin{align} \label{eq:BEC-1}
\lim_{N\to \infty} \frac{1}{N} \langle \Psi_N, \cN_+ \Psi_N\rangle =0,\quad \cN_+= \sum_{i=1}^N Q_i, \quad Q=1-|u_0\rangle \langle u_0|
\end{align}
where we introduced the notation 
\begin{align}
\label{def:Qi}
Q_i = 1 \otimes \cdots \otimes 1 \otimes Q \otimes 1 \cdots \otimes 1 
\end{align}
for the operator that acts as identity on all the particles but the $i$-th, on which it acts as $Q$.

Recently, Boccato, Brennecke, Cenatiempo and Schlein \cite{BCCS_cond,BCCS_optimal} proved the  improved bound 
\begin{align} \label{eq:BEC-2}
\langle \Psi_N, \cN_+ \Psi_N\rangle \le \mathcal{O}(1),
\end{align}
which served as an important input in their proof of the validity of Bogoliubov's excitation spectrum \cite{BCCS}. For the generalization concerning inhomogeneous trapped Bose gases in $\mathbb{R}^3$, we refer to \cite{LS,NRS} for results similar to \eqref{eq:BEC-1},  \cite{NNRT,BSS_optimal_2} for results similar to \eqref{eq:BEC-2}, and  \cite{NT,BSS_bogo} for the justification of Bogoliubov's excitation spectrum.

Our main result is the following improvement of \eqref{eq:BEC-2}. 

\begin{theorem}[Exponential bound for low-lying eigenfunctions] 
\label{thm:main}Let $v\in L^3(\Lambda)$ be non-negative, compactly supported and spherically symmetric. Then there exists a constant $\kappa>0$ depending only on $v$ such that if $\psi_N$ is an eigenfunction of $H_N$ defined in \eqref{def:HN-intro}  with energy 
\begin{align}
\label{ass:ev}
\langle \psi_N, H_N \psi_N\rangle \le E_N + \mathcal{O}(1), \quad  E_N=\inf \sigma(H_N), 
\end{align}
then it holds that 
\begin{align} \label{eq:thm-mf}
\langle \psi_N, e^{ \kappa \mathcal{N}_+} \psi_N \rangle \leq  \mathcal{O}(1). 
\end{align}
\end{theorem}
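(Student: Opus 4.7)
My plan is to upgrade the moment bound \eqref{eq:BEC-2} to the exponential bound \eqref{eq:thm-mf} by a Gronwall-type differential inequality applied to the renormalized Hamiltonian of \cite{BCCS_optimal,BCCS_cond}. First, I would pass to the excitation Fock space $\cF_+^{\le N}$ via the standard excitation map $U_N$ and conjugate by a generalized (quadratic + cubic) Bogoliubov transformation $T$, producing the renormalized operator $\wG_N = T^{*}U_N H_N U_N^{*} T$ and the excitation vector $\xi_N = T^{*}U_N\psi_N$. By unitarity of $U_N$ and $T$, the eigenvalue equation transfers: $\wG_N \xi_N = E\,\xi_N$ with $E = \langle\psi_N, H_N\psi_N\rangle \le E_N + \mathcal{O}(1)$. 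The main output of \cite{BCCS_optimal,BCCS_cond} is then the operator inequality
\begin{align}\label{pf:bbcs-bound}
\wG_N - E_N \ \ge\ c\,(\mathcal{K} + \mathcal{V}_N + \cN_+) - C
\end{align}
on $\cF_+^{\le N}$, where $\mathcal{K}$ and $\mathcal{V}_N$ are the kinetic and scaled interaction energies; this already yields $\langle \xi_N, \cN_+\xi_N\rangle \le C$.

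The new ingredient is to study $g(\kappa):=\langle\xi_N, e^{\kappa\cN_+}\xi_N\rangle$ together with the symmetrized dressed operator
\begin{align}
\wG_N^{(\kappa)} \,:=\, \tfrac{1}{2}\big(e^{-\kappa\cN_+/2}\wG_N e^{\kappa\cN_+/2} + e^{\kappa\cN_+/2}\wG_N e^{-\kappa\cN_+/2}\big).
\end{align}
The eigenvalue equation and self-adjointness of $\wG_N$ give the identity
\begin{align}
E\, g(\kappa) \,=\, \langle e^{\kappa\cN_+/2}\xi_N,\ \wG_N^{(\kappa)}\, e^{\kappa\cN_+/2}\xi_N\rangle.
\end{align}
By the quartic/cubic/quadratic structure of $\wG_N$, every monomial changes $\cN_+$ by some $k\in\{0,\pm 1,\pm 2,\pm 3,\pm 4\}$, and conjugation followed by symmetrization multiplies the $k$-shifting piece $\wG_N^{(k)}$ by $\cosh(k\kappa/2)$. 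Hence
\begin{align}
\wG_N^{(\kappa)} - \wG_N \,=\, \sum_{k\ne 0}\big(\cosh(k\kappa/2)-1\big)\, \wG_N^{(k)},
\end{align}
and applying the relative bounds of \cite{BCCS_optimal} to each $\wG_N^{(k)}$ together with $\cosh(k\kappa/2)-1 = O(\kappa^{2})$ yields $\pm\big(\wG_N^{(\kappa)} - \wG_N\big) \le C\kappa^{2}\,(\mathcal{K} + \mathcal{V}_N + \cN_+ + 1)$ uniformly in $N$. Combined with \eqref{pf:bbcs-bound}, for $\kappa$ below some threshold $\kappa_0$ this gives $\wG_N^{(\kappa)} - E_N \ge (c/2)\cN_+ - C'$. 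Plugging back into the identity and using $\langle e^{\kappa\cN_+/2}\xi_N, \cN_+ e^{\kappa\cN_+/2}\xi_N\rangle = g'(\kappa)$ together with $E - E_N = \mathcal{O}(1)$ produces the differential inequality $g'(\kappa) \le C''\, g(\kappa)$, so Gronwall yields $g(\kappa) \le e^{C''\kappa_0}$ on $[0,\kappa_0]$.

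The bound transfers back to $\psi_N$ because $U_N$ commutes with $\cN_+$ and $T$ satisfies the standard moment estimates $T^{*}\cN_+^{m}T \le C^{m}(\cN_+ + 1)^{m}$ from \cite{BCCS_optimal}, whose summation gives $T^{*} e^{\kappa\cN_+} T \le e^{C\kappa(\cN_+ + 1)}$; hence \eqref{eq:thm-mf} follows after possibly shrinking $\kappa$. The main technical obstacle I anticipate is the perturbation estimate on the cubic and quartic off-diagonal pieces of $\wG_N^{(\kappa)} - \wG_N$: these involve the singular interaction kernels, and the relative bounds of \cite{BCCS_optimal} must be applied uniformly in the $\cosh(k\kappa/2)-1$ prefactors. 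No essentially new estimates should be required, but careful bookkeeping is needed to keep all constants independent of $N$ and continuous in $\kappa$ at $\kappa=0$.
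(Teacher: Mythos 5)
Your overall strategy is the same as the paper's: renormalize on the excitation space, use the eigenvalue equation together with a symmetrization identity (your identity $E\,g(\kappa)=\langle e^{\kappa\cN_+/2}\xi_N,\wG_N^{(\kappa)}e^{\kappa\cN_+/2}\xi_N\rangle$ is algebraically equivalent to the double-commutator identity \eqref{eq:claim11}), show the dressing error is $O(\kappa^2)$ relative to $\mathcal{H}_N+\cN_++1$, invoke coercivity, and close with Gronwall (in $\kappa$ rather than in the auxiliary parameter $s$, which is cosmetic). However, the two steps you dismiss as requiring ``no essentially new estimates'' are exactly where the paper's work lies, and as written they contain genuine gaps. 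First, the bound $\pm(\wG_N^{(\kappa)}-\wG_N)\le C\kappa^2(\mathcal{K}+\mathcal{V}_N+\cN_++1)$ is precisely \eqref{eq:G-dc} of Proposition \ref{prop:G}, and it does not follow by applying the relative bounds of \cite{BCCS_optimal} ``to each $\wG_N^{(k)}$'': the renormalized Hamiltonian is not a finite sum of explicit monomials; it contains the remainder operators $d_p$, $\check d_x\check d_y$, $\check b_x\check d_y$ given by infinite operator series (Lemma \ref{lemma:dp}), and the available estimates control these remainders as a whole, not their individual $\cN_+$-shifting components. Controlling those components is equivalent to controlling $[e^{\kappa\cN_+},d_p]$ and the corresponding double commutators, which is the new technical input of the paper (Lemma \ref{lemma:eN-dp} together with the term-by-term reanalysis of $\mathcal{G}_N^{(0)},\dots,\mathcal{G}_N^{(4)}$ in Section \ref{sec:proof-prop-G}). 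Moreover, since you take $T$ to contain the cubic kernel as well, the structural claim that every piece shifts $\cN_+$ by $|k|\le 4$ is false: nested commutators with a cubic kernel produce components of arbitrarily large shift, so $\cosh(k\kappa/2)-1$ is not uniformly $O(\kappa^2)$ and one would have to trade the growth $e^{|k|\kappa/2}$ against smallness in $N$ of the high-shift components, an argument you do not supply. (The paper avoids this by renormalizing with the quadratic $e^{B(\eta)}$ only; the coercivity \eqref{eq:G-lb} already holds at that level, so the cubic conjugation is an unnecessary complication for this theorem.)

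Second, the transfer back to $\psi_N$ is not justified as stated. You invoke $T^*\cN_+^mT\le C^m(\cN_++1)^m$ with a constant $C$ uniform in $m$ and sum the series to get $T^*e^{\kappa\cN_+}T\le e^{C\kappa(\cN_++1)}$. The cited references prove moment bounds for each fixed $m$ with constants $C_m$ whose $m$-dependence is not tracked (and for the cubic conjugation the situation is the same), so the geometric growth needed for the summation is not available off the shelf; establishing it would itself be a nontrivial exponential-moment estimate. The paper instead proves the transfer directly by a separate Gronwall argument comparing $e^{\tilde\kappa\wN}$, with $\wN=e^{B(\eta)}\cN_+e^{-B(\eta)}$, to $e^{c_0\tilde\kappa\cN_+}$ (estimate \eqref{eq:claim0}, based on Lemma \ref{lemma:bounds-wN}), accepting a fixed loss $c_0$ in the exponent. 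In short, the scheme you propose is the correct one and mirrors the paper, but the $O(\kappa^2)$ double-commutator estimate and the exponential transfer through the Bogoliubov transformation must be proven, not quoted; they constitute the actual content of the proof.
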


Here are two quick remarks on our theorem concerning the existing literature.

%\begin{remark}[Exponential decay]
%Theorem \ref{thm:main} can be interpreted as the probability of having $n$ particles outside of the condensation decays exponentially in $n$. In fact, from \eqref{eq:thm-mf} it follows immediately that 
%\begin{align}
%\label{eq:exp-decay}
%\| \mathds{1}_{\cN_+=n} \psi_N \| \leq C e^{- \kappa n} \; .  
%\end{align} 
%for constants $C,\kappa>0$ independent of $N$ and $n$. 
%\end{remark} 

\begin{remark}[Moment vs. exponential bounds] A moment bound of the form $\langle \psi_N,  \cN_+^k  \psi_N\rangle \le C_k$ was  obtained in \cite[Proposition 4.1]{BCCS} using an induction argument in $k$. The exponential bound \eqref{eq:thm-mf} would follow if one could show that $C_k\le k! C^k$ for all $k$. However, it seems to us that this conclusion does not readily follow from \cite{BCCS}. It is interesting that the approach in \cite{BCCS} works for every $\psi$ in the spectral subspace $\1_{[E_N,E_N+ \mathcal{O}(1)]}(H_N)$, while our method focuses only on eigenfunctions. % (on the other hand, our proof is simplified greatly for the moment bound). 
%  I am asking, because I thought about such bounds a while ago, motivated by some student project and at the time it was my impression that a more careful proof of Prop. 4.1 in https://arxiv.org/pdf/1801.01389.pdf would also give an exponential moment bound as in your main theorem (morally speaking, I expect that the commutator term in (4.6) in that paper should be estimated directly (with a different argument than the one written in the paper) by some upper bound of the form 
%	
%			C (k+1) D_k (1+zeta)^{k+1},
%
%so that, by induction, one may obtain on such spectral subspaces an upper bound \langle \cN_+^k \rangle \leq k! C^k (1+zeta)^k and thus \langle e^{\kappa \cN_+} \rangle +O(1) for any \kappa>0 sufficiently small. 
\end{remark}
%
%\begin{remark}[Exponential bounds in related models] Our result extends to the less singular regimes where the interaction potential $N^2V(Nx)$ is replaced by 
%\begin{align}
%\label{eq:beta}
%v_{N,\beta}(x)=N^{3\beta-1} v(N^\beta x), \quad \text{with parameter} \quad \beta \in [0,1)
%\end{align}
%with a parameter. In the mean-field regime $\beta=0$, an equivalent form of \eqref{eq:thm-mf} was already settled by Mitrouskas \cite{M} (very recently, this result was extended by Mitrouskas-Pickl \cite{MP-2023} to include trapped bosons and also include the repulsive Coulomb potential). We will illustrate our method by giving a short proof in the mean-field regime. In principle, the difficulty increases when $\beta$ becomes larger, and the Gross-Pitaevski regime $\beta=1$ is the most challenging case where strong correlations at short distances lead to a leading order correction in the ground state energy and the excitation spectrum. In another direction, a related exponential decay of excitations was derived in \cite[Proposition 4.2]{BS-2023} to investigate the ground state energy of the Fr\"ohlich Polaron model.
%\end{remark} 

\begin{remark}[Probabilistic interpretation and extensions to related models] As a consequence of Theorem \ref{thm:main} and Markov's inequality, we obtain  
\begin{align}
\label{eq:expdecay}
\langle \psi_N, \1_{\{ \mathcal{N}_+ \geq n \}} \psi_N \rangle \le \langle \psi_N,  e^{\kappa (\cN_+-n)}\psi_N \rangle   \leq C e^{- \kappa n} . 
\end{align}
for all $0 \leq n  \leq N$. Consequently, the probability of finding $n$ particles outside the condensate decays exponentially in $n$. Our bounds \eqref{eq:thm-mf}-\eqref{eq:expdecay} extend easily to the less singular regimes where the interaction potential $N^2V(Nx)$ is replaced by 
\begin{align}
\label{eq:beta}
v_{N,\beta}(x)=N^{3\beta-1} v(N^\beta x), \quad \text{with parameter} \quad \beta \in [0,1). 
\end{align}
In the mean-field regime $\beta=0$, the bound \eqref{eq:expdecay} was settled by Mitrouskas \cite{M} (very recently, this result was extended by Mitrouskas-Pickl \cite{MP-2023} to include trapped bosons and also include the repulsive Coulomb potential). In the discussion below, we will illustrate our method by giving a short proof in the mean-field regime. In principle, the difficulty increases when $\beta$ becomes larger, and the Gross-Pitaevski regime $\beta=1$ is the most challenging case where strong correlations at short distances lead to a leading order correction in the ground state energy and the excitation spectrum. In another direction, a related exponential decay of excitations was derived in \cite[Proposition 4.2]{BS-2023} to investigate the ground state energy of the Fr\"ohlich Polaron model.
\end{remark} 

%Now let us discuss some consequences of Theorem \ref{thm:main} in relation to the probabilistic interpretation. First, 
%%\begin{remark}[Probabilistic interpretation]\label{rmk:exp-pro} 
%%namely the probability of having n particles outside of the condensation decays exponentially in n.
%%From the first principle of quantum mechanics, the ground state $\psi_N$ gives rise to a probability distribution: Similarly as in \eqref{def:Qi}, we define for a self-adjoint one-particle operator $O$ on $L^2( \Lambda)$, the $N$- particle operator $O_i$ that acts as $O$ on the $i$-th particle and as identity on the remaining $(N-1)$ particles. Then, by spectral calculus, the operator $O_i$ gives rise to a random variable with law  
%%\begin{align}
%%\label{eq:prob}
%%\mathbb{P} [ O_i \in A ] = \langle \psi_N, \mathds{1}_A (O_i) \psi_N \rangle 
%%\end{align}
%%for any $A \subset \mathbb{R}$ and where $\mathds{1}_A$ denotes the characteristic function of the set $A$. 
%\eqref{eq:thm-mf} together with Markov's inequality then implies that 
%\begin{align}
%\label{eq:expdecay}
%\langle \psi_N, \1_{\{ \mathcal{N}_+ \geq n \}} \psi_N \rangle \le \langle \psi_N,  e^{\kappa (\cN_+-n)}\psi_N \rangle   \leq C e^{- \kappa n} . 
%\end{align}
%for all $0 \leq n  \leq N$. Consequently, the probability of finding $n$ particles outside the condensate decays exponentially in $n$. In the mean-field regime $\beta=0$, the bound \eqref{eq:expdecay} was obtained in \cite[Theorem 3.1]{M}. 
%%\end{remark}

Next, from the exponential bound in Theorem \ref{thm:main} and the ground state  structure established in  \cite{BCCS}, we obtain a result concerning the large deviations for $\cN_+$. Motivated by the probabilistic approach to Bose-Einstein condensates introduced in \cite{BKS}, given the ground state $\psi_N$ of $H_N$ and a self-adjoint operator $A$  on $L^2_s (\Lambda^N)$, we denote
$$
\mathbb{E}(A) = \langle \psi_N, A \psi_N\rangle, \quad \mathbb{P}(A>x) = \langle \psi_N, \1_{(x,\infty)}(A) \psi_N\rangle
$$
where $ \1_{(x,\infty)}(A) $ is defined by the spectral theorem. 

\begin{corollary}[Large deviations for $\mathcal{N}_+$]
\label{thm:LDE} Let $v$ be as in Theorem \ref{thm:main} and let $\psi_N$ be the ground state of $H_N$.  Then for  $\lambda>0$ small,  we have 
\begin{align}\label{eq:exp-N+-comp}
\limsup_{N \rightarrow \infty} \left| \log  \mathbb{E} ( e^{\lambda ( \mathcal{N}_+ - \mu)} ) - \frac{\lambda^2}{2}  \sigma^2 \right| \le  \mathcal{O}( \lambda^3)
\end{align}
where
\begin{align}\label{def:sigma0}
\mu = \sum_{p \in \Lambda_+^*} \sinh^2( \nu_p),\quad \sigma^2 = \sum_{p \in \Lambda_+^*} \sinh^2( \nu_p) \cosh^2( \nu_p) , \quad \nu_p = \frac{1}{4} \log \bigg( \frac{p^2}{p^2 + 16 \pi \a0} \bigg). 
\end{align}
where the scattering length of the potential $\a0$ is defined in \eqref{def:a}. 
%, we have the following  result of large deviations of in the simple case $O_i=Q_i$ around the condensate's depletion
%\begin{align}
%\label{def:mu0}
%\mu = \sum_{p \in \Lambda_+^*} \sinh^2( \nu_p) 
%\end{align}
Consequently, there exists a constant $\lambda_0>0$ such that for all $x>0$ we have
%we have the following  result of large deviations of in the simple case $O_i=Q_i$ around the condensate's depletion
%\begin{align}
%\label{def:mu0}
%\mu = \sum_{p \in \Lambda_+^*} \sinh^2( \nu_p) 
%\end{align}
%there exists $\lambda_0 >0$ such that  for all $x > 0$,
% 
%\begin{align}
%\label{eq:LDE-N}
%\limsup_{N \rightarrow \infty} \log \mathbb{P} \big[ \mathcal{N}_+ - \mu > x\big] \leq \inf_{\lambda > \lambda_0} \big[ - \lambda x + \frac{\lambda^2}{2 \sigma^2} + O ( \lambda^3) \big] \; .
%\end{align}
%where $\sigma$ and $\mu$ are given by \eqref{def:sigma0} resp. \eqref{def:mu0}. 
\begin{align}\label{eq:fund-cor-2}  
\limsup_{N \rightarrow \infty}  \mathbb{P}  (  \mathcal{N}_+ - \mu > x ) \le  \inf_{\lambda > \lambda_0} \big[ - \lambda x + \frac{\lambda^2}{2 \sigma^2} + O ( \lambda^3) \big] \; .
\end{align}
In particular, for $0<x<\lambda_0 /\sigma^2$, the above estimate reduces to 
\begin{align}\label{eq:fund-cor-3}
\limsup_{N \rightarrow \infty} \log \mathbb{P}  (  \mathcal{N}_+ - \mu > x )  \leq  -\frac{x^2}{2 \sigma^2} + O( x^3) \; . 
\end{align}
\end{corollary}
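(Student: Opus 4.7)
The plan is to pass the exponential moment bound from Theorem \ref{thm:main} through the structural factorisation of the ground state established in \cite{BCCS}, compute the moment generating function explicitly on the resulting Bogoliubov vacuum state, and apply Chernoff's optimisation to deduce \eqref{eq:fund-cor-2}--\eqref{eq:fund-cor-3}. Concretely, let $U_N\colon L^2_s(\Lambda^N) \to \cF_+^{\leq N}$ denote the excitation map (which intertwines $\cN_+$ with the number operator $\cN$ on the excitation Fock space), and let $T = \exp(\tfrac{1}{2}\sum_p \nu_p(a_p^* a_{-p}^* - a_p a_{-p}))$ be the Bogoliubov transformation with the coefficients from \eqref{def:sigma0}. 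By \cite{BCCS} one may write $\psi_N = U_N^* T \eta_N$ with $\eta_N \to \Omega$ in $L^2$ (in the Gross--Pitaevskii regime an additional cubic transformation is required, whose contribution I absorb into the $O(\lambda^3)$ remainder). Unitarity then gives
\begin{align*}
\mathbb{E}(e^{\lambda \cN_+}) = \langle \eta_N,\, e^{\lambda\, T^* \cN T} \eta_N \rangle,
\end{align*}
and the $L^2$-convergence $\eta_N \to \Omega$, combined with a uniform exponential bound $\langle \eta_N, e^{\lambda \cN}\eta_N\rangle \leq C$ inherited from \eqref{eq:thm-mf} via the standard comparison $T \cN T^* \leq C'(\cN+1)$, allows me to replace $\eta_N$ by $\Omega$ with vanishing error for $\lambda$ sufficiently small.

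The Bogoliubov rule $T^* a_p T = \cosh(\nu_p) a_p + \sinh(\nu_p) a_{-p}^*$ together with $\nu_{-p}=\nu_p$ gives explicitly
\begin{align*}
T^* \cN T - \mu = \sum_{p \in \Lambda_+^*} (\cosh^2\nu_p + \sinh^2 \nu_p)\, a_p^* a_p + \sum_{p \in \Lambda_+^*} \sinh\nu_p \cosh\nu_p\, (a_p^* a_{-p}^* + a_p a_{-p}),
\end{align*}
whose $\Omega$-expectation vanishes. Expanding $e^{\lambda(T^*\cN T -\mu)}$ in $\lambda$, the $\lambda^2$ coefficient reduces to $\tfrac12\|(T^*\cN T - \mu)\Omega\|^2$, and the Wick contractions $\langle\Omega, a_p a_{-p} a_q^* a_{-q}^*\Omega\rangle = \delta_{p,q}+\delta_{p,-q}$ combined with the decay $|\nu_p| = O(|p|^{-2})$ evaluate this to $\sigma^2$. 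Higher-order terms are controlled uniformly in $N$ by iterating the Bogoliubov rules together with the absolute convergence of $\sum_p \sinh^{2k}\nu_p$ for each $k \geq 1$. This establishes \eqref{eq:exp-N+-comp}.

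For the tail bounds, Markov's inequality combined with \eqref{eq:exp-N+-comp} gives
\begin{align*}
\limsup_{N\to\infty} \log \mathbb{P}(\cN_+ - \mu > x) \leq -\lambda x + \frac{\lambda^2}{2}\sigma^2 + O(\lambda^3)
\end{align*}
for each $\lambda$ in a neighbourhood of zero where the expansion is valid. Taking the infimum over $\lambda$ in this range yields \eqref{eq:fund-cor-2}. For $x$ sufficiently small, the quadratic minimiser $\lambda_*(x) = x/\sigma^2$ is admissible, and substitution produces the Gaussian tail \eqref{eq:fund-cor-3}.

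The principal obstacle is the weighted convergence needed to justify the substitution $\eta_N \rightsquigarrow \Omega$ under the exponential, namely $\|e^{\lambda \cN/2}(\eta_N - \Omega)\| \to 0$, since the approximation rate of \cite{BCCS} is stated only in $L^2$. The strategy is to interpolate this $L^2$-rate with the uniform a priori exponential bound coming from Theorem \ref{thm:main}, which is straightforward in principle but requires careful tracking of how $\cN$ conjugates under the full (Bogoliubov plus cubic) factorisation of \cite{BCCS}, and is the technical heart of the argument.
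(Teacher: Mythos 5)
Your overall strategy is the same as the paper's in spirit (use the \cite{BCCS} ground-state structure, evaluate on the Bogoliubov vacuum, then Markov/Chernoff), and your tail-bound step is fine. But there are two genuine gaps, both located exactly at the places you defer. First, the cubic transformation: the approximation of \cite{BCCS} is $U_N\psi_N\approx e^{B(\eta)}e^{A}e^{B(\tau)}\Omega$ with the cubic $e^{A}$ sitting \emph{between} the two quadratic transformations, and its effect is an $N$-dependent error on the state, not something of size $O(\lambda^3)$; you cannot ``absorb it into the $O(\lambda^3)$ remainder''. Removing it requires a quantitative argument (the paper proves $\|e^{A}e^{B(\tau)}\Omega-e^{B(\tau)}\Omega\|^2\le CN^{-3/4}$ by estimating $A_\sigma,A_\gamma$ on $(\cN_++1)^2$-bounded vectors, see \eqref{eq:psiN-norm-2}--\eqref{eq:psiN-norm-3}); without this, your claim $\eta_N\to\Omega$ in $L^2$ is not justified. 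Second, the ``uniform exponential bound $\langle\eta_N,e^{\lambda\cN}\eta_N\rangle\le C$ inherited from \eqref{eq:thm-mf} via $T\cN T^*\le C'(\cN+1)$'' does not follow as stated: $A\le B$ does not imply $e^{A}\le e^{B}$, so an operator inequality at the level of $\cN$ gives no control of $e^{\lambda\cN}$ under conjugation. Transferring the exponential bound through the (generalized) Bogoliubov transformations requires a commutator/Gronwall argument of the type the paper carries out for $\wN$ (Lemma \ref{lemma:bounds-wN} and the step \eqref{eq:claim0}), and this, together with the removal of $e^{A}$, is precisely the ``technical heart'' you acknowledge but do not supply. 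Your weighted replacement $\|e^{\lambda\cN/2}(\eta_N-\Omega)\|\to0$ is then plausible by a cutoff in $\cN$ interpolated with the $L^2$ rate, but as written it rests on the two unproven inputs above.

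It is worth noting that the paper avoids the weighted convergence entirely: it Taylor-expands $\mathbb{E}\,e^{\lambda(\cN_+-\mu)}$ only to second order, uses Theorem \ref{thm:main} solely to bound the third-order remainder by $O(\lambda^3)$ uniformly in $N$, and computes just the first and second moments of $\cN_+-\mu$ via the simplified norm approximation \eqref{eq:psiN-norm-3}, for which plain $L^2$ closeness plus polynomial moment bounds suffice (see \eqref{eq:Psi-cNk-trans}--\eqref{eq:log-mom}). Your route of computing the full moment generating function on the quasi-free state is viable in principle (the vacuum computation itself is exact and analytic in small $\lambda$ since $\nu\in\ell^2$), but it demands strictly more than the paper's argument: exponentially weighted state comparison rather than second moments. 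If you want to keep your route, you must prove the exponential bound for the conjugated state by a Gronwall argument as in Section \ref{sec:proof-thm} and remove the cubic transformation as in the paper; otherwise, switching to the paper's second-order Taylor strategy closes both gaps at once.
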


It was proved in  \cite[Appendix A]{BCCS} that
$$\lim_{N\to \infty}  \langle \psi_N, \cN_+ \psi_N\rangle = \sum_{p \in \Lambda_+^*} \sinh^2( \nu_p)=\mu.$$ Thus Corollary \ref{thm:LDE} refines the decay property \eqref{eq:expdecay} and gives a detailed description for deviations of $\cN_+$ from its mean value.

\begin{remark}[Extensions of large deviations] As carried out in Section \ref{sec:proof-LDE}, the proof of Theorem \ref{thm:LDE} can be generalized easily with $\cN_+=\sum_{i=1}^N Q_i$ replaced by ${\rm d}\Gamma(O)= \sum_{i=1}^N O_i$ for any self-adjoint bounded operator $O$ on $L^2( \Lambda)$ such that $O = QOQ$. To be precise, we will show that there exists $\lambda_0 >0$ such that for all $0<x< \lambda_0/ \widetilde{\sigma}^2$, 
\begin{align}
\label{eq:LDE-O}
\limsup_{N \rightarrow \infty} \log \mathbb{P} \big[ \sum_{i=1}^N O_i - \widetilde{\mu} > x\big] \leq -\frac{ x }{2 \widetilde{\sigma}^2 }+ O ( x^3) 
\end{align}
 with 
 $$\widetilde{\mu} = \sum_{p \in \Lambda_+^*} \sinh( \nu_p) \widehat{O}_{p,p}, \quad \widetilde{\sigma} = \sum_{p,q} \vert  \widehat{O}_{p,q} \vert^2 \cosh^2( \nu_q) \sinh^2( \nu_p)$$ 
 where $\widehat{O}$ denotes the Fourier transform of the kernel of the operator $O$. %The proof of \eqref{eq:LDE-O} can be found at the end of Section \ref{sec:proof-LDE}. 

Our result can be interpreted as a first step towards a more general asymptotic formula concerning large deviations in the Gross-Pitaevskii regime. In fact, it is natural to expect that a  similar bound also holds even if $O\not= QOQ$. Of course, in this case the contribution from the condensate becomes very large, and hence a suitable scaling limit has to be introduce in order to put the relevant large deviations in a rigorous form. We {\em conjecture} that for every or every self-adjoint bounded operator $O$ on $L^2( \Lambda)$, there exists a constant $\sigma_0>0$ depending on $O$ such that for $x>0$ small, 
\begin{align}
 \limsup_{N \rightarrow \infty} \frac{1}{N} \log \mathbb{P} \bigg[\frac{1}{N} \sum_{i=1}^N \big[ O_i - \langle \varphi, O \varphi \rangle \big] > x  \bigg] \leq - \frac{x^2}{2\sigma_0^2} + \mathcal{O}(x^{5/2}) .\label{eq:lde1}
\end{align}

In the mean-field regime ($\beta =0$), large deviations of the form \eqref{eq:lde1}, with $O\not= QOQ$, have been characterized \cite{KRS,R,RSe}
% in the case $\langle u_0, O u_0 \rangle \ne 0$ through an estimate of the form 
%\begin{align}
% \limsup_{N \rightarrow \infty} \frac{1}{N} \log \mathbb{P} \bigg[\frac{1}{N} \sum_{i=1}^N \big[ O_i - \langle \varphi, O \varphi \rangle \big] > x  \bigg] \leq - \frac{x^2}{\sigma^2} + \mathcal{O}(x^{5/2}) \label{eq:lde1}
%\end{align}
%resp. 
%\begin{align}
% \liminf_{N \rightarrow \infty} \frac{1}{N} \log \mathbb{P} \bigg[ \frac{1}{N}\sum_{i=1}^N \big[ O_i - \langle \varphi, O \varphi \rangle \big] > x  \bigg] \geq - \frac{x^2}{\sigma^2} -  C_1 x^{5/2} \label{eq:lde2}
%\end{align}
%for positive constants $C_1,C_2>0$ and 
%for sufficiently small $x>0$.  
where the variance $\sigma_0$ is computed explicitly using Bogoliubov approximation. In this case, the matching lower bound also holds (see  \cite{R,RSe}, thus the bound refines earlier results on central limit theorems \cite{RS}. 
%Note that  \cite{R,RSe} proves a lower bound on the limies inferior of the l.h.s. of \eqref{eq:lde1}, too, that matches the upper bound in leading order in $x$. 

For more singular potentials, results on the law of large numbers and central limit theorems around Bose-Einstein condensates have been proven for correlated random variables  \cite{Rsing,COS} and for empirical measures \cite{PRV}. However, the  asymptotic formula \eqref{eq:lde1} in the Gross-Pitaevskii regime remains a very interesting {\em open problem}. %, and our result in mark a first step in this direction and prove a large deviation result. 

\end{remark}

Finally, let us discuss another extension of Theorem \ref{thm:main} concerning the Gibbs state at positive temperatures. In Theorem \ref{thm:main}, we consider each eigenfunction of $H_N$ separately. It is also possible to consider all eigenfunctions at the same time, namely we turn to the thermal equilibrium of the system given by the Gibbs state 
\begin{align}
\Gamma_\beta := \frac{e^{- \beta H_N}}{ Z(\beta)}, \quad \text{where} \quad Z( \beta) = \Tr e^{-\beta H_N} \label{def:Gibbs}
\end{align}
at a positive temperature $T = 1/\beta >0$. This is the unique miminizer of the free energy functional 
\begin{align}
\mathcal{F} ( \Gamma) = \Tr \left[ H_N \Gamma \right] - \frac{1}{\beta} S( \Gamma ), \quad \text{with} \quad S( \Gamma )= - \Tr \left[ \Gamma \ln (\Gamma ) \right]
\end{align}
over the set of all mixed states on $L_s^2(\Lambda^N)$   (the set of all non-negative operators on $L_s^2(\Lambda^N)$ with trace $1$). Our bound in Theorem \ref{thm:main} extends to the Gibbs state at low temperatures.
%\begin{align}
%\mathcal{S} = \lbrace \Gamma \in \mathcal{B} ( L_s^2( \Lambda ) \vert \; 0\leq \Gamma \leq 1, \; \Tr \Gamma = 1, \, \Tr \left[  \sum_{i =1}^N p_i^2 \Gamma \right] < \infty \rbrace \; . 
%\end{align}
%The unique minimizer is the Gibbs state indexed by the inverse temperature $\beta = 1/T$ and given by 
%\begin{align}
%\Gamma_\beta := \frac{e^{- \beta H_N}}{ Z(\beta)}, \quad \text{where} \quad Z( \beta) = \Tr e^{-\beta H_N} \label{def:Gibbs}
%\end{align}
%denotes the partition function. 

\begin{theorem}[Exponential bound for the Gibbs state at low temperature] 
\label{thm:posT} 
Let $v\in L^3(\Lambda)$ be non-negative, compactly supported and spherically symmetric. Then for every fixed temperature $T=\beta^{-1}>0$ and for a sufficiently small $\kappa >0$, the Gibbs state $\Gamma_\beta$ given by \eqref{def:Gibbs} satisfies 
\begin{align}
\Tr \left[ e^{\kappa \cN_+} \Gamma_\beta  \right] \leq \mathcal{O}(1).  \label{eq:N-posT}
\end{align}
\end{theorem}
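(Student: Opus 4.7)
Writing the Gibbs state in the eigenbasis $\{\psi_n\}$ of $H_N$ with eigenvalues $E_n$, we have $\Gamma_\beta = Z(\beta)^{-1}\sum_n e^{-\beta E_n}|\psi_n\rangle\langle\psi_n|$, so
\begin{align*}
\Tr\bigl[e^{\kappa\cN_+}\Gamma_\beta\bigr]=\frac{1}{Z(\beta)}\sum_n e^{-\beta E_n}\,\langle\psi_n,e^{\kappa\cN_+}\psi_n\rangle.
\end{align*}
Theorem \ref{thm:main} controls each inner product only for the low-lying eigenfunctions with $E_n-E_N=\mathcal{O}(1)$, while the spectrum of $H_N$ above $E_N$ is unbounded; the plan is to couple Theorem \ref{thm:main} with an energy-dependent extension for the higher eigenfunctions and then close the estimate by summing against the Gibbs weights.

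Concretely, I would first establish the following refinement of Theorem \ref{thm:main}: for $\kappa>0$ sufficiently small there exist constants $c,C>0$ depending only on $v$ such that for every eigenfunction $\psi_n$ of $H_N$,
\begin{align*}
\langle\psi_n,e^{\kappa\cN_+}\psi_n\rangle \leq C\,e^{c\kappa(E_n-E_N)}.
\end{align*}
The proof of this refinement should follow the same scheme as Theorem \ref{thm:main} — a cascade of BCCS-type Bogoliubov conjugations reducing $H_N-E_N$ to a diagonal Hamiltonian modulo error terms controlled by powers of $\cN_+$ — but now tracking explicitly how the constants depend on the excess energy. The natural target is an inductive family of moment estimates $\langle\psi_n,\cN_+^k\psi_n\rangle \lesssim C^k(E_n-E_N+1)^k$ valid uniformly in $n$, which Taylor-sum to the displayed exponential bound.

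Granting this refinement and choosing $\kappa$ small enough so that $c\kappa<\beta$, the Gibbs estimate follows by direct computation:
\begin{align*}
\Tr\bigl[e^{\kappa\cN_+}\Gamma_\beta\bigr] \leq C\cdot\frac{\sum_n e^{-(\beta-c\kappa)(E_n-E_N)}}{\sum_n e^{-\beta(E_n-E_N)}} = C\cdot\frac{Z_{\mathrm{rel}}(\beta-c\kappa)}{Z_{\mathrm{rel}}(\beta)},
\end{align*}
where $Z_{\mathrm{rel}}(\beta):=e^{\beta E_N}Z(\beta)=\sum_n e^{-\beta(E_n-E_N)}$. Since the low-lying spectrum of $H_N-E_N$ converges to the $N$-independent Bogoliubov spectrum \cite{BCCS}, one has $Z_{\mathrm{rel}}(\beta)\to\prod_{p\in\Lambda_+^*}(1-e^{-\beta E_p})^{-1}$, which is finite for every $\beta>0$, so the displayed ratio remains bounded uniformly in $N$.

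The main obstacle is the refined eigenfunction bound itself. The proof of Theorem \ref{thm:main} presumably uses the hypothesis $\langle\psi,H_N\psi\rangle-E_N=\mathcal{O}(1)$ in several places — notably in controlling the cubic and quartic remainders of the Bogoliubov approximation — so removing it requires a careful, explicitly energy-dependent version of those estimates. This is where the bulk of the work lies; the remaining steps, in particular the partition-function ratio bound, are comparatively routine given BCCS's spectrum analysis.
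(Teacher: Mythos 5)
Your overall architecture coincides with the paper's: an energy-dependent exponential bound for each eigenfunction, summation against the Gibbs weights with $\kappa$ small compared to $\beta$, and a uniform bound on a ratio of (relative) partition functions. However, two steps as you propose them have genuine gaps. First, the central refinement $\langle\psi_n,e^{\kappa\cN_+}\psi_n\rangle\le Ce^{c\kappa(E_n-E_N)}$ is exactly the crux, and the route you suggest for it does not work: a factorial-free moment bound $\langle\psi_n,\cN_+^k\psi_n\rangle\lesssim C^k(E_n-E_N+1)^k$ uniformly in $k$ would force $\limsup_k\langle\cN_+^k\rangle^{1/k}\le C(E_n-E_N+1)$, i.e. the eigenfunction would have no components with more than $C(E_n-E_N+1)$ excitations, which is false for the interacting eigenstates; if instead you allow $k$-dependent constants, you are back to the obstruction pointed out in Remark 1.2 (one would need $C_k\le k!C^k$, which is precisely what is not available from the induction of \cite{BCCS}). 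The paper avoids moments entirely: it reruns the Gronwall argument of Theorem \ref{thm:main} for an eigenvector $\xi_j$ of $\mathcal{G}_N$, where the only change is that the eigenvalue equation $(\mathcal{G}_N-E_N)\xi_j=(E_j-E_N)\xi_j$ adds the term $(E_j-E_N)\|e^{\kappa\cN_+}\xi_j\|^2$ to the identity \eqref{eq:claim11}; combined with \eqref{eq:G-lb}--\eqref{eq:G-dc} of Proposition \ref{prop:G} this gives $\langle\xi_j(s),\cN_+\xi_j(s)\rangle\le C(E_j-E_N+1)\|\xi_j(s)\|^2$ and hence $\langle\xi_j,e^{2\kappa\cN_+}\xi_j\rangle\le e^{C\kappa(E_j-E_N+1)}$, with no new estimates needed. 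So the "main obstacle" you defer is in fact resolvable with the tools already at hand, but not by the inductive moment scheme you sketch.

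Second, your justification of the partition-function step is not valid as stated: the convergence of the \emph{low-lying} spectrum of $H_N-E_N$ to the Bogoliubov spectrum controls only eigenvalues with $E_n-E_N=\mathcal{O}(1)$, whereas $Z_{\rm rel}(\beta-c\kappa)=\sum_n e^{-(\beta-c\kappa)(E_n-E_N)}$ involves the whole spectrum, and one also needs uniformity in $N$; note moreover that a lower bound of the form $H_N-E_N\ge c\,\cN_+-C$ alone would not suffice, since each sector $\{\cN_+=k\}$ is infinite-dimensional. What is actually needed (and what the paper proves in \eqref{eq:Z-bounds}) is the operator bound $\mathcal{G}_N-E_N\ge\tfrac12\mathcal{K}-C$ from \eqref{eq:G-lb}, which dominates $e^{\beta' E_N}Z(\beta')$ by the free Bose partition function uniformly in $N$, together with a lower bound $e^{\beta E_N}Z(\beta)\ge c_\beta$ (for which the ground-state term already suffices, or the Sobolev bound $\mathcal{G}_N-E_N\le C(\mathcal{K}^2+1)$ used in the paper). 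With these two repairs your plan becomes the paper's proof; as written, both the refined eigenfunction bound and the uniform trace bounds are asserted rather than established, and the specific mechanisms you propose for them would fail.
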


%\begin{remark} For low temperatures, $T\sim 1$, the second order of the free energy can be deduced from the analysis of the excitation spectrum \cite{BCCS} (see also \cite{HST,Brooks-2023} for simplified proofs, and \cite{HHNST} for corresponding results in thermodynamic limit). However, properties of Gibbs state are  less understood; in particular \eqref{eq:N-posT} is new.
%%
%%
%%
%%More recently, Boccato, Brennecke, Cenatiempo and Schlein \cite{BCCS_cond,BCCS_optimal}   improved \eqref{eq:BEC-1} and provided with the optimal convergence rate
%%\begin{align} \label{eq:BEC-2}
%%\langle \Psi_N, \cN_+ \Psi_N\rangle \le \mathcal{O}(1),
%%\end{align}
%%which served as an important input in their proof of the validity of Bogoliubov's excitation spectrum \cite{BCCS}. For the generalization concerning inhomogeneous trapped Bose gases in $\mathbb{R}^3$, we refer to \cite{LS,NRS} for results similar to \eqref{eq:BEC-1},  \cite{NNRT,BSS_optimal_2} for results similar to \eqref{eq:BEC-2}, and  \cite{AT-23,BSS-22} for the justification of Bogoliubov's excitation spectrum. 
%\end{remark}
%
%\begin{remark} For higher temperatures, we do not expect that \eqref{eq:N-posT} holds. In particular, when $T \sim N^{2/3}$, namely $T$ is comparable to the critical temperature of the BEC phase transition, we do not expect the  complete BEC \eqref{eq:BEC-1} since the number of excited particles is also proportional to $N$ (see \cite{DS,BDS-23,CD-23} for rigorous results).
%\end{remark}

\begin{remark}[Low vs. high temperatures] For low temperatures, $T\sim 1$, the second order of the free energy can be deduced from the analysis of the excitation spectrum \cite{BCCS} (see also \cite{HST,Brooks-2023} for simplified proofs, and \cite{HHNST} for corresponding results in thermodynamic limit). However, properties of Gibbs state are  less understood; in particular \eqref{eq:N-posT} is new.
%
%
%
%More recently, Boccato, Brennecke, Cenatiempo and Schlein \cite{BCCS_cond,BCCS_optimal}   improved \eqref{eq:BEC-1} and provided with the optimal convergence rate
%\begin{align} \label{eq:BEC-2}
%\langle \Psi_N, \cN_+ \Psi_N\rangle \le \mathcal{O}(1),
%\end{align}
%which served as an important input in their proof of the validity of Bogoliubov's excitation spectrum \cite{BCCS}. For the generalization concerning inhomogeneous trapped Bose gases in $\mathbb{R}^3$, we refer to \cite{LS,NRS} for results similar to \eqref{eq:BEC-1},  \cite{NNRT,BSS_optimal_2} for results similar to \eqref{eq:BEC-2}, and  \cite{AT-23,BSS-22} for the justification of Bogoliubov's excitation spectrum. 
%\end{remark}
%
%\begin{remark} 
For higher temperatures, we do not expect that \eqref{eq:N-posT} holds. In particular, when $T \sim N^{2/3}$, namely $T$ is comparable to the critical temperature of the BEC phase transition, we do not expect the  complete BEC \eqref{eq:BEC-1} since the number of excited particles is also proportional to $N$ (see \cite{DS,BDS-23,CD-23} for rigorous results).
\end{remark}

\subsection{Ideas of the proof}

Now let us explain our proof strategy.  To make the ideas transparent, we will first illustrate our method by giving a short proof of \eqref{eq:thm-mf} in the mean-field regime, and then explain additional arguments needed for the Gross--Pitaevskii regime.

\bigskip
\noindent
{\em Mean-field regime:} Let us start by proving \eqref{eq:thm-mf} in the mean-field regime, where the potential $N^2v(Nx)$ is replaced by $(N-1)^{-1}v$ with a periodic potential $v$ satisfying $0\le \widehat w \in \ell^1(2\pi \mathbb{Z}^3)$. In this case, our result is comparable to \cite[Theorem 3.1]{M}, but our proof below is different. Our argument goes back to the moment estimates obtained in \cite[Lemma 3]{Nam-18} and \cite[Lemma 3]{NamNap-21}, but now we aim at exponential estimates. 

We consider the mean-field Hamiltonian, which can be written in the momentum space as 
\begin{align}
H_N^{\rm mf}= \sum_{p \in 2\pi \mathbb{Z}^3 } p^2 a_p^*a_p + \frac{1}{2(N-1)}\sum_{p,q,\ell \in 2\pi \mathbb{Z}^3} \widehat{v} (\ell) \; a^*_{p - \ell} a^*_{ q + \ell} a_p a_q    \label{def:Ham-mf}
\end{align}
where $a_p^*,a_p$ are the standard creation and annihilation operators  on the bosonic Fock space $\mathcal{F} = \bigoplus_{n \geq 0} L_s^2 ( \Lambda^n)$. They  satisfy the canonical commutation relations 
\begin{align}
\label{eq:comm}
\left[ a_p^*, a_q \right] = \delta_{p,q} ,  \quad \left[ a_p^*, a_q^* \right] = \left[ a_p, a_q \right] =0,\quad \forall p,q\in  \Lambda^*=2 \pi \mathbb{Z}^3. 
\end{align}
In particular, the condensate  is described by the constant function $u_0=1$, corresponding to the zero momentum. The number of particles outside the condensate, often called, the number of excitations, can be written as 
\begin{align}
\cN_+ =  \sum_{p \in \Lambda_+^*} a_p^*a_p, \quad \text{with} \quad \Lambda_+^* = 2 \pi \mathbb{Z}^3 \setminus \lbrace 0 \rbrace  \; . 
\end{align}
Let us prove \eqref{eq:thm-mf} for the ground state $\psi_N$ of $H_N^{\rm mf}$. We define, for $s \in [0,1]$ and $\kappa>0$ small enough,  
\begin{align}
\label{eq:xi-s-beginning}
\xi_{N} (s) := e^{s \kappa \mathcal{N}_+} \psi_N \in L_s^2 ( \Lambda^N).
\end{align}
%Then we have 
%\begin{align}
% \| \xi_N (1) \|^2 =  \| e^{\kappa \cN_+} \psi_N \|^2 \quad \text{and} \quad \| \xi_N (0) \|^2 = \| \psi_N \|^2 =1 . \label{eq:xsi-eins-mf}
%\end{align}
Since $\|\xi_N(0)\|=1$, to bound $\| \xi_N (1) \|^2$  it thus suffices to control 
\begin{align}
\label{eq:deriv-0-mf}
\partial_s \| \xi_N (s) \|^2 =  2 \kappa \langle \xi_N (s), \mathcal{N}_+  \xi_N (s) \rangle. 
\end{align}
In the mean-field regime, by Onsager's inequality (that was used first in \cite{Onsager} and later in a concise form in \cite[Eq. 8]{Seiringer-11}) we have immediately  the lower bound 
\begin{align}
\label{eq:Onsager}
H_N^{\rm mf} - E_N^{\rm mf} \geq C^{-1} \mathcal{N}_+ - c
\end{align}
with the ground state energy $E_N^{\rm mf}$ of $H_N^{\rm mf}$ and constants $C,c>0$. Combining with the ground state equation $(H_N^{\rm mf}- E_N) \psi_N = 0$, we can estimate the right-hand side of \eqref{eq:deriv-0-mf} as
\begin{align}
\label{eq:bound-N-mf}
C^{-1}\langle \xi_N (s), \mathcal{N}_+  \xi_N (s) \rangle  &\leq   \langle \xi_N (s), \left( H_N^{\rm mf} -E_N^{\rm mf} \right)  \xi_N (s) \rangle \nonumber\\
&= -  \frac{1}{2} \langle \psi_N,  \;  \left[ e^{ s\kappa \mathcal{N}_+} , \; \; \left[  e^{s \kappa \mathcal{N}_+} ,  \;  H_N^{\rm mf}\right] \right]  \psi_N \rangle\; . 
\end{align}
The right-hand side of \eqref{eq:bound-N-mf} can be computed explicitly 
\begin{align}
& \left[ e^{s\kappa \mathcal{N}_+ }, \; \left[ e^{s\kappa \mathcal{N}_+}, \; H_N^{\rm mf} \right] \right] \notag \\
&\quad =  \frac{2}{N-1}  \sinh^2( s\kappa)  \; e^{s\kappa \mathcal{N}_+} \sum_{\substack{\ell \in \Lambda_+^*}} \widehat{v} ( \ell ) \;  \left[  a^*_{-\ell} a^*_\ell a_0 a_0  \;  -  a^*_0 a_0^* a_\ell a_{- \ell} \right]  e^{s\kappa \mathcal{N}_+}   \notag \\
&\quad\quad +  \frac{1}{N-1}  \sinh^2( s\kappa /2)   e^{s\kappa \mathcal{N}_+} \sum_{\substack{p, \ell \in \Lambda_+^* \\ p \not= \ell}} \widehat{v} ( \ell ) \left[  a^*_{p-\ell} a^*_0 a_p a_{\ell} + a^*_{p- \ell} a_{- \ell}^* a_p a_{0} \right] \; e^{s\kappa \mathcal{N}_+}\notag\\
&\quad\quad +  \frac{1}{N-1}    \sinh^2 ( s\kappa /2) e^{s\kappa \mathcal{N}_+} \sum_{\substack{\ell,q \in \Lambda_+^* \\ q\not= - \ell }} \widehat{v} ( \ell ) \left[   a^*_{0} a^*_{q + \ell} a_{\ell} a_q +  a^*_{-\ell} a_{q+\ell}^* a_0 a_{q} \right]  \;  e^{s\kappa \mathcal{N}_+}  \; .  \label{eq:second-comm}
\end{align}
Here we used $\mathcal{N}_+ a_0 =a_0 \mathcal{N}_+$ and $\mathcal{N}_+ a_p = a_p (\mathcal{N}_+- 1)$ for $p \in \Lambda_+^*$. We can estimate the three summands of the right hand side of \eqref{eq:second-comm} separately. For this we recall the bounds for $a^*(h) = \sum_{p \in \Lambda_+^*} h_p a^*_p$ for any $h \in \ell^2( \Lambda_+^*)$ and any Fock space vector $\xi \in \mathcal{F}$ 
\begin{align}
\label{eq:bounds-a1}
\|  a (h) \xi \| \leq \| h \|_{\ell^2} \| \mathcal{N}_+^{1/2} \xi \|, \quad \|  a^* (h) \xi \| \leq \| h \|_{\ell^2} \| (\mathcal{N}_+ + 1)^{1/2} \xi \| ,
\end{align}
and 
\begin{align}
\label{eq:bounds-a2}
\vert  \sum_{p \in \Lambda_+^*} h_p \langle \xi_1, \; a_p^*a_{-p}^* \xi_2 \rangle \vert \leq& \| h \|_{\ell^2} \| ( \mathcal{N}_+ + 1 )^{1/2} \xi_2 \| \; \| \mathcal{N}_+^{1/2} \xi_1 \| .
% \notag \\
%  \vert  \sum_{p \in \Lambda_+^*} h_p \langle \xi_1, \; a_pa_{-p} \xi_2 \rangle \vert \leq& \| h \|_{\ell^2} \| ( \mathcal{N}_+ + 1 )^{1/2} \xi_1 \| \; \| \mathcal{N}_+^{1/2} \xi_2 \|\; . 
\end{align}
Furthermore for any operator $H$ on $\ell^2( \Lambda^*_+)$ with kernel $H_{p,q}$, we have 
\begin{align}
\label{eq:bounds-a3}
 \| \sum_{p, q \in \Lambda_+^*} H_{p,q} a_p^*a_q \xi \|   \leq \| H \|_{{\rm op}} \| \mathcal{N}_+ \xi \| \; . %=  \| {\rm d}\Gamma(H) \xi \|
\end{align}
On the one hand, we observe that $\mathcal{N}_+$ commutes with $a_0$ and $\mathcal{N}_+ a_p = a_p ( \mathcal{N}_+ -1)$ for $p \not=0$. Therefore, for the first term of the r.h.s. of \eqref{eq:second-comm}, we find that for any vector $\psi \in L_s^2 ( \Lambda^N)$,
\begin{align}
\vert \langle & \psi, e^{s\kappa \mathcal{N}_+} \sum_{0\ne \ell \in \mathbb{Z}^d} \widehat{v} ( \ell ) \;  \left[  a^*_{-\ell} a^*_\ell a_0 a_0  \;  +  a^*_0 a_0^* a_\ell a_{- \ell} \right]  e^{s\kappa \mathcal{N}_+}  \psi  \rangle \vert \notag \\
&\leq 2 \| \widehat{v} \|_{\ell^2 ( \mathbb{Z}^d)} \|  ( \mathcal{N}_+ +1 )^{1/2} a_0 a_0 e^{s\kappa \mathcal{N}_+}\psi\| \| \mathcal{N}_+ ^{1/2}e^{s\kappa \mathcal{N}_+}  \psi \|^2 \nonumber\\
&\le C N \|( \mathcal{N}_+ +1 )^{1/2} e^{s\kappa \mathcal{N}_+} \psi\|^2
\end{align}
where we used \eqref{eq:bounds-a2} and  $a_0^* a_0 \leq N$ on $L_s^2 ( \Lambda^N)$.  
% we thus conclude that there exists $C>0$ such that 
%\begin{align*}
%\vert \langle & \psi, e^{s\kappa \mathcal{N}_+} \sum_{\substack{\ell \in \Lambda_+^* }} \widehat{v} ( \ell ) \;  \left[  a^*_{-\ell} a^*_\ell a_0 a_0  \;  +  a^*_0 a_0^* a_\ell a_{- \ell} \right]  e^{s\kappa \mathcal{N}_+}  \psi \rangle \vert 
%\leq C N \|( \mathcal{N}_+ +1 )^{1/2} e^{\kappa \mathcal{N}_+} \psi\|^2 
%\end{align*}
%On the other hand we find with similar ideas 
Similarly, for the second term of the r.h.s. of \eqref{eq:second-comm}, we have
\begin{align}
\vert \langle \psi, & \; e^{s\kappa\mathcal{N}_+}\sum_{\substack{\ell,p \in \Lambda_+^* \\  p \not= \ell}} \widehat{v} ( \ell ) \left[  a^*_{p-\ell} a^*_0 a_p a_{\ell} + a^*_{p- \ell} a_{- \ell}^* a_p a_{0} \right] \; e^{s\kappa \mathcal{N}_+} \; \psi \rangle \vert \notag \\
\leq& \left( \sum_{\substack{\ell,p \in \Lambda_+^* \\  p \not= \ell}} \vert \widehat{v} ( \ell ) \vert^2 \; \| a_0 a_{p - \ell} e^{s\kappa \mathcal{N}_+ } \psi \|^2 \right)^{1/2} \left( \sum_{\substack{\ell,p \in \Lambda_+^* \\  p \not= \ell}}  \; \| a_p a_{ \ell} e^{s\kappa \mathcal{N}_+ } \psi \|^2 \right)^{1/2} \notag \\
&+ \left( \sum_{\substack{\ell, p \in \Lambda_+^* \\  p \not= \ell}} \vert \widehat{v} ( \ell ) \vert^2 \; \| a_0 a_{p } e^{s\kappa \mathcal{N}_+ } \psi \|^2 \right)^{1/2} \left( \sum_{\substack{\ell,p \in \Lambda_+^* \\  p \not= \ell}} \; \| a_{p-\ell} a_{ -\ell} e^{s\kappa \mathcal{N}_+ } \psi \|^2 \right)^{1/2} \nonumber\\
&\le CN  \|( \mathcal{N}_+ +1 )^{1/2} e^{s\kappa \mathcal{N}_+} \psi \|^2 
\end{align}
where we also used $\mathcal{N}_+ \leq N$. 
% we thus conclude similarly as before that 
%\begin{align*}
%\vert \langle \psi, \; e^{s\kappa \mathcal{N}_+}\sum_{\substack{\ell,p \in \Lambda_+^* \\  p \not= \ell}} \widehat{v} ( \ell ) \left[  a^*_{p-\ell} a^*_0 a_p a_{\ell} + a^*_{p- \ell} a_{- \ell}^* a_p a_{0} \right] \; e^{s\kappa \mathcal{N}_+} \; \psi \rangle \vert 
%\leq CN  \|( \mathcal{N}_+ +1 )^{1/2} e^{s\kappa \mathcal{N}_+} \psi \|^2 .
%\end{align*}
For the last term of the r.h.s. of \eqref{eq:second-comm} we proceed similarly and thus arrive at 
\begin{align*}
\vert \langle  \psi, \; \left[ e^{s\kappa \mathcal{N}_+ }, \; \left[ e^{s\kappa \mathcal{N}_+}, \; H_N^{\rm mf} \right] \right]  \psi \rangle \vert 
\leq C \sinh^2( s\kappa/2)\langle \psi, e^{\kappa \mathcal{N}_+} \left( \mathcal{N}_+ + 1 \right) e^{\kappa \mathcal{N}_+} \psi \rangle . 
\end{align*}
For small $\kappa>0$ and $s\in [0,1]$ we have $\sinh^2( s\kappa/2) \leq C \kappa^2 $  for a universal positive constant $C>0$. We thus arrive at 
\begin{align}
\vert \langle  \psi_N, \; \left[ e^{s\kappa \mathcal{N}_+ }, \; \left[ e^{s\kappa \mathcal{N}_+}, \; H_N^{\rm mf} \right] \right]  \psi_N \rangle \vert 
\leq C  \kappa^2  \langle \xi_N (s),  \left( \mathcal{N}_+ + 1 \right) \xi_N (s) \rangle \label{eq:second-comm-end} \; . 
\end{align}
We recall \eqref{eq:bound-N-mf} and find that 
\begin{align}
C^{-1} \langle\xi_N(s), \;  \mathcal{N}_+ \xi_N (s) \rangle \leq& \frac{1}{2} \vert \langle  \psi_N, \; \left[ e^{s\kappa \mathcal{N}_+ }, \; \left[ e^{s\kappa \mathcal{N}_+}, \; H_N^{\rm mf} \right] \right]  \psi_N \rangle \vert \notag \\
&\leq C \kappa^2 \langle \xi_N (s),  \left( \mathcal{N}_+ + 1 \right) \xi_N (s) \rangle
\end{align}
and thus that for sufficiently small $\kappa>0$ we have 
\begin{align}
\langle \xi_N(s), \;  \mathcal{N}_+ \xi_N (s) \rangle \leq C \kappa^2 \| \xi_N (s) \|^2. 
\end{align}
Combining the latter bound with \eqref{eq:deriv-0-mf} we arrive at 
\begin{align}
\| \xi_N (1) \|^2 = \| \xi_N (0) \|^2 + \int_0^t  \partial_s \| \xi_N (s) \|^2 ds  \leq& 1+ C \kappa^2 \int_0^t \| \xi_N (s) \|^2 ds 
\end{align}
that yields with Gronwall's inequality the desired estimate 
\begin{align}
\label{eq:xi-s-end}
 \langle \psi_N, \; e^{2 \kappa \mathcal{N}_+} \psi_N \rangle = \| \xi_N (1) \|^2  \leq C e^{C \kappa^2} \; . 
\end{align}
Summarizing, our proof in the mean-field regime relies on two crucial bounds for the Hamiltonian $H^{\rm mf}_N$, namely the lower bound \eqref{eq:Onsager} and the estimate on the double commutator \eqref{eq:second-comm-end}. In scaling regimes with singular interactions of the particles (i.e. $\beta >0$ in \eqref{eq:beta}) similar estimates hold true only after regularizing the Hamiltonian with appropriate unitary transformations that extract the particles' strong correlations. We explain our strategy in the following in more detail:

 \bigskip

\noindent
{\em Gross--Pitaevskii regime:} In the  Gross--Pitaevskii regime, we need to extract strong correlations at short distances before applying the above strategy. To do this, we first use a unitary transformation introduced in \cite{LNSS} to factor out the contribution of the condensate, and then use a generalized Bogoliubov transformation developed in  \cite{BS,BCCS_cond,BCCS_optimal,BCCS} to capture the correlation structure. 

Let us write the Hamiltonian $H_N$ in \eqref{def:HN-intro} as 
\begin{align}
H_N= \sum_{p \in \mathbb{Z}^d } p^2 a_p^*a_p + \frac{1}{2N}\sum_{p,q,\ell \in \mathbb{Z}^d} \widehat{v} (\ell/N) \; a^*_{p - \ell} a^*_{ q + \ell} a_p a_q   \; . \label{def:Ham}
\end{align}
Controlling $\cN_+$ in the ground state of $H_N$, or more generally excited states with low energy, is our main goal. To this end we first factor out the condensate's contribution using the unitary $\mathcal{U}_N$ 
\begin{align}
\label{def:UN}
\mathcal{U}_N : L^2_s( \Lambda) \rightarrow \mathcal{F}^{\leq N}_{\perp u_0}=\bigoplus_{n=0}^N L^2_{\perp u_0} ( \Lambda)^{\otimes_s n}
\end{align}
introduced in  \cite{LNSS}, which maps any $N$-particle wave function  
\begin{align}
\psi_N = \eta_0 u_0^{\otimes_s N} + \eta_1 \otimes_s u_0^{\otimes_s (N-1)} + \cdots + \eta_N , \quad \text{with} \quad \eta_j \in L^2_{\perp u_0}( \Lambda )^{\otimes_s^j}
\end{align}
onto its excitation vector $(\eta_0, \cdots, \eta_N)$. Here $L^2_{\perp u_0} ( \Lambda)$ denotes the orthogonal complement of $u_0$ in $L^2( \Lambda)$. In the  following, we will focus on the excitation Hamiltonian $\mathcal{U}_N H_N \mathcal{U}_N^*$ on $\mathcal{F}^{\leq N}_{\perp u_0}$. 

 %The operator $\mathcal{U}_N$ plays the role of Bogoliubov's c-number substitution \cite{Bogoliubov-47} where $a_0$ and $a_0^*$ are replaced by $\sqrt{N-\cN_+}$. 

%, onto its excitation vector $\lbrace \eta_0, \cdots, \eta_N\rbrace$ that is an element of the Fock space of excitations 
%\begin{align}
%\mathcal{F}_{\perp u_0}^{\leq N} := \bigoplus_{n=0}^N L^2_{\perp u_0} ( \Lambda)^{\otimes_s n} \; . 
%\end{align}
%By conjugation with the unitary $\mathcal{U}_N$ the Hamiltonian $H_N$ is mapped to the so-called excitation Hamiltonian given by 
%\begin{align}
%\mathcal{G}_N := \mathcal{U}_N^* H_N \mathcal{U}_N \; . 
%\end{align}

In the Gross-Pitaevski regime the particles experience rare but strong interactions, and hence the correlations of the particles play a crucial role. To capture the correlation structure of particles, we use the solution $f$ of the scattering equation
\begin{align}
\label{eq:Neumann} 
\left( - \Delta + \frac{1}{2} v  \right) f  = 0 
\end{align}
with boundary condition $f(x) \rightarrow 1$ as $\vert x \vert \rightarrow \infty $. Recall that the scattering length $\mathfrak{a}_0$ of the potential $v$ is given by 
\begin{align}
\label{def:a}
\mathfrak{a}_0 = \int dx \; v(x) f(x).
\end{align}
By scaling, the scattering solution of $ N^2 v (  N \cdot )$ is $f_N(x)=f(Nx)$, and the corresponding scattering length is $a_0/N$. In the following we denote $v_N (x)  =N^3 v (Nx)$. By technical reason, in the following we will replace $f_N$ by $f_{N,\ell}$ with $0<\ell<1/2$ (independent of $N$) the solution to the Neumann boundary problem 
\begin{align}
\label{eq:Neumann} 
\left( - \Delta + \frac{1}{2N} v_N (x) \right) f_{N, \ell} (x) = \lambda_{N, \ell} f_{N, \ell} (x) 
\end{align}
on the ball $B_\ell := \lbrace x \in \mathbb{R}^3: \vert x \vert \leq \ell \rbrace $ with the normalization condition that $f_{N, \ell} (x) = 1$ for $\vert x \vert \ge \ell$. Then following the ideas in \cite{BS,BCCS_cond,BCCS_optimal,BCCS} we implement the particles' correlation structure through a Bogoliubov transformation given by 
\begin{align}
\label{def:bogo}
e^{B_\eta } \quad \text{with} \quad B_\eta := \exp \big(\frac{1}{2} \sum_{p \in \Lambda_+^*} \left( \eta_p  b_p^*b_{-p}^* -  \overline{\eta}_p b_p b_{-p} \right) \big),\quad b_p = \sqrt{1-\mathcal{N}_+/N} a_p.   
\end{align}
Here, the sequence $\eta \in \ell^2( \Lambda_+^*)$ is chosen as
\begin{align}
\label{def:eta} \eta_p = - N \widehat{\omega}_{N, \ell} (p) \quad \text{for all} \quad p \in \Lambda^*_+ . 
\end{align}
where 
$$\omega_{N, \ell} (x) = 1- f_{N, \ell} (x),\quad \widehat{\omega}_{N, \ell} (p) = \int_{\Lambda} \omega_{N, \ell} (x) e^{-ip \cdot x} dx  \quad \text{for all} \quad p \in \Lambda^*.
$$ 
%that compactly supported on $\Lambda$ has Fourier coefficients
%\begin{align}
%\end{align}
%With this notations, we now define the Bogoliubov transformations kernel $\eta: \Lambda_+^* \rightarrow \mathbb{R}$ through

Then we define the new excitation Hamiltonian with correlation structure as 
\begin{align}
\label{def:G}
\mathcal{G}_{N}:= e^{B ( \eta )}\mathcal{U}_N H_N \mathcal{U}_N^* e^{-B( \eta )} \; . 
\end{align}
We will show that $\mathcal{G}_{N}$ is bounded from below by a positive multiple of $\mathcal{H}_N = \mathcal{K} + \mathcal{V}_N$ with 
\begin{align}
\label{def:HN}
\mathcal{K} = \sum_{p \in \Lambda_+^*} p^2 a_p^*a_p, \quad \text{and} \quad \mathcal{V}_N = \sum_{\substack{p,q,r \in \Lambda_+^* \\  r\not= -p,-q}} \widehat{v}(  p/N )a_{p+r}^*a_q^*a_pa_{q+r} \; . 
\end{align} 
In particular, the proof of Theorem \ref{thm:main} is based on the following properties of $\mathcal{G}_{N}$.
\begin{proposition}
\label{prop:G}
Under the same assumptions as in Theorem \ref{thm:main}, we have 
\begin{align}
\label{eq:G-lb}
 \mathcal{G}_{N}  -E_N   \geq \frac{1}{2} \mathcal{H}_N - C   \; . 
\end{align}
where $E_N$ denotes the ground state of $E_N$. Furthermore, for sufficiently small $\kappa >0$ we have for any Fock space vector $\psi \in \mathcal{F}_{\perp u_0}^{\leq N}$
\begin{align}
\label{eq:G-dc}
\vert \langle \psi, \, \left[ e^{\kappa \cN_+}, \; \left[ e^{\kappa \cN_+}, \mathcal{G}_N \right] \right] \psi \rangle \vert \leq C  \kappa^2 \langle \psi, e^{\kappa \cN_+}\left( \mathcal{H}_N  + (\cN_+ + 1) \right) e^{\kappa \cN_+} \psi \rangle \; . 
\end{align}
Here $C=C_v>0$ depends only on the potential $v$. 
\end{proposition}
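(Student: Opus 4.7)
My strategy would mirror the mean-field argument above, with $\mathcal{G}_N$ playing the role of $H_N^{\rm mf}$. The starting point is the explicit normal-ordered decomposition of $\mathcal{G}_N$ worked out in \cite{BCCS_cond, BCCS_optimal, BCCS}, which has the schematic form
\[
\mathcal{G}_N = C_N^{(0)} + \mathcal{Q}_N + \mathcal{C}_N + \mathcal{H}_N + \mathcal{E}_N,
\]
where $C_N^{(0)}$ is a scalar constant close to $E_N$, $\mathcal{Q}_N$ is quadratic in the modified operators $b,b^*$ (a diagonal part $b^*b$ plus off-diagonal pieces of type $b^*b^* + bb$), $\mathcal{C}_N$ is cubic in $b,b^*$, and $\mathcal{E}_N$ collects lower-order quartic remainders beyond $\mathcal{H}_N = \mathcal{K} + \mathcal{V}_N$.

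For the lower bound \eqref{eq:G-lb}, I would refine the coercivity estimates already proved in \cite{BCCS_optimal}. The off-diagonal part of $\mathcal{Q}_N$ is absorbed by its diagonal part up to a constant (either via Cauchy--Schwarz or via a further generalised Bogoliubov rotation); the cubic part $\mathcal{C}_N$ is controlled by $\delta\,\mathcal{H}_N + C_\delta(\cN_+ + 1)$ for small $\delta>0$; and $\mathcal{E}_N$ is similarly absorbed into a small fraction of $\mathcal{H}_N$. Combining these manipulations with the a priori BEC bound $\langle\cN_+\rangle \leq C$ available from \cite{BCCS_optimal}, and using that $E_N$ is precisely the bottom of the spectrum of $\mathcal{G}_N$, yields \eqref{eq:G-lb}.

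For the double commutator estimate \eqref{eq:G-dc}, the essential algebraic observation is that if $T$ satisfies $[\cN_+, T] = nT$ for some integer $n$, then a direct computation gives
\[
\bigl[e^{\kappa\cN_+},\bigl[e^{\kappa\cN_+}, T + T^*\bigr]\bigr] \;=\; 4\sinh^2(n\kappa/2)\, e^{\kappa\cN_+}(T + T^*)\,e^{\kappa\cN_+}.
\]
Consequently every piece of $\mathcal{G}_N$ that commutes with $\cN_+$ -- namely $\mathcal{K}$, $\mathcal{V}_N$, the diagonal part of $\mathcal{Q}_N$, and the diagonal parts of $\mathcal{E}_N$ -- contributes zero. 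The only non-trivial contributions come from the off-diagonal part of $\mathcal{Q}_N$ ($n = \pm 2$), from the cubic part $\mathcal{C}_N$ ($n = \pm 1$), and from the off-diagonal part of $\mathcal{E}_N$ ($n = \pm 2, \pm 4$). Since $\sinh^2(n\kappa/2) \leq C\kappa^2$ for bounded $n$ and small $\kappa$, it suffices to bound $|\langle \phi, (T+T^*)\phi\rangle|$, with $\phi := e^{\kappa\cN_+}\psi$, by $\langle \phi, (\mathcal{H}_N + \cN_+ + 1)\phi\rangle$ for each off-diagonal $T$ that appears. The quadratic pieces are handled by \eqref{eq:bounds-a1}--\eqref{eq:bounds-a2} together with $\eta \in \ell^2(\Lambda_+^*)$, and the quartic remainders by \eqref{eq:bounds-a3} combined with $\mathcal{V}_N$.

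I expect the main difficulty to lie in the cubic part $\mathcal{C}_N$. In the Gross--Pitaevskii regime the coupling $\widehat{v}(p/N)$ is not uniformly $\ell^2$-summable in $p$, so a naive $\ell^2$-bound on the cubic kernel fails. The plan is to adapt the splitting technique of \cite{BCCS_cond, BCCS_optimal}: separate each cubic term into high- and low-momentum pieces, control the former by Cauchy--Schwarz extracting a $\mathcal{V}_N^{1/2}$ factor, and dominate the latter by $\mathcal{K}^{1/2}$ combined with the summable decay $|\eta_p| \lesssim p^{-2}$. This is precisely what forces $\mathcal{H}_N$, rather than merely $\cN_+ + 1$, to appear on the right-hand side of \eqref{eq:G-dc}.
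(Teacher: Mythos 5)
Your overall strategy (split $\mathcal{G}_N$ into its conjugated pieces, use the identity $[e^{\kappa\cN_+},[e^{\kappa\cN_+},T]]=4\sinh^2(n\kappa/2)\,e^{\kappa\cN_+}T\,e^{\kappa\cN_+}$ for terms with $[\cN_+,T]=nT$, then bound each term against $\mathcal{H}_N+\cN_++1$, going to position space / extracting $\mathcal{V}_N^{1/2}$ for the cubic and quartic pieces) is indeed the skeleton of the paper's proof, and your treatment of the lower bound \eqref{eq:G-lb} matches the paper, which takes it essentially from \cite{BCCS,BCCS_optimal}. However, there is a genuine gap in your argument for \eqref{eq:G-dc}: you assume that $\mathcal{G}_N$ admits a decomposition into normal-ordered monomials in $b,b^*$ of \emph{definite} particle-number change $n$, plus remainders $\mathcal{E}_N$ for which quadratic-form bounds of BCCS type suffice. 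In the Gross--Pitaevskii regime this is not the case. The generalized Bogoliubov transformation $e^{B(\eta)}$ of \eqref{def:bogo} does not act exactly as in \eqref{eq:action-bogo}: because of the $\sqrt{1-\cN_+/N}$ factors its action produces remainder operators $d_p$, $d_p^*$ (and $\check d_x$, $\check d_x\check d_y$ in position space) which are infinite series mixing terms that raise and lower $\cN_+$, so they satisfy no relation $[\cN_+,d_p]=n\,d_p$ and your algebraic identity does not apply to them. Moreover, the error bounds available from \cite{BCCS_cond,BCCS_optimal,BCCS} for these remainders (e.g. \eqref{eq:estimates-dp}--\eqref{eq:estimates-db}) are norm/quadratic-form estimates, and operator inequalities of the type $\pm\mathcal{E}_N\le \delta\mathcal{H}_N+C_\delta(\cN_++1)$ do not transfer to bounds on $[e^{\kappa\cN_+},[e^{\kappa\cN_+},\mathcal{E}_N]]$; commutators simply do not respect such inequalities. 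So the claim that ``the off-diagonal part of $\mathcal{E}_N$'' contributes only with $n=\pm2,\pm4$ and a $\sinh^2$ gain is unfounded as stated.

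Closing this gap is precisely the technical core of the paper: one must go back to the expansion of $d_p$ in nested commutators (Lemma \ref{lemma:dp}) and prove \emph{new} estimates showing that $[e^{\lambda\cN_+},d_p]$ and $e^{-\lambda\cN_+}[e^{\lambda\cN_+},[e^{\lambda\cN_+},d_p]]$ obey the same bounds as $d_p$ itself with extra factors $\lambda$ and $\lambda^2$ respectively (Lemma \ref{lemma:eN-dp}, and the analogous statements for $\check d_x\check d_y$, $\check b_x\check d_y$ and for $\widetilde{\mathcal{N}}_+=e^{B(\eta)}\cN_+e^{-B(\eta)}$ in Lemma \ref{lemma:bounds-wN}); these are then fed term by term into the analysis of $\mathcal{G}_N^{(0)},\dots,\mathcal{G}_N^{(4)}$. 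Without an argument of this kind (or a proof that one may dispense with the generalized transformation altogether, which the authors explicitly say they were unable to do for exponential bounds), your proposal does not yield \eqref{eq:G-dc}. A smaller remark: the difficulty with the cubic term is not really the lack of $\ell^2$-summability of $\widehat v(p/N)$ per se --- it is handled by passing to position space and extracting $\mathcal{V}_N^{1/2}$, as you suggest --- the genuinely new work is again the commutators with the $d$-remainders appearing inside $\mathcal{G}_N^{(3)}$.
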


These bounds enable us to use the previous strategy in the mean-field regime, with $H_N^{\rm mf}$ replaced by  $\mathcal{G}_{N}$.  While the first bound \eqref{eq:G-lb} essentially follows from the analysis in \cite{BCCS,BCCS_optimal}, the new bound \eqref{eq:G-dc}  is important for us, and it requires several refined estimates. 

Before ending the introduction, let us make a technical remark concerning the generalized Bogoliubov transformation in \eqref{def:bogo}. The idea of using a transformation which is quadratic in $N^{-1/2} a_0^* a_p$ to diagonalize the interacting Hamiltonian goes back to the work of Seiringer \cite{Seiringer-11} on the excitation spectrum in the mean-field regime (see also \cite{GreSei-13} for the extension to trapped systems). After removing the condensate by $\mathcal{U}_N$ in \eqref{def:UN}, we find that $N^{-1/2} a_0^* a_p \mapsto b_p$ given in \eqref{def:bogo}. The idea of using the generalized Bogoliubov transformation $e^{B_\eta }$ where the kernel $\eta$ captures only the high-momentum part via the scattering solution in \eqref{eq:Neumann} goes back to the work of Brennecke--Schlein \cite{BS} in the dynamical problem, and extended further in \cite{BCCS_cond,BCCS_optimal} in the stationary problem. This gives an efficient way to renormalize the interacting Hamiltonian, leaving out only contributions of order $1$ which were further computed in \cite{BCCS} to obtain the excitation spectrum. As explained in  \cite{NT}, actually the analysis of the excitation spectrum can be done using only the standard Bogoliubov transformation with $b_p$ replaced by $a_p$.  However, we are not able to use this simplification to achieve the exponential bounds in the present paper (although we can do this for the moment bound $\langle \cN^k\rangle \le \mathcal{O}(1)$). In particular, we will benefit greatly from the precise asymptotic behavior of the generalized Bogoliubov transformation $e^{B_\eta }$  established in \cite{BS} where the error to the standard actions of the Bogoliubov transformation is estimated carefully. We hope that although our detailed analysis is inevitably complicated, the general idea is transparent from the above discussion.%This makes the analysis seemingly complicated, but hopefully the overall idea 

\medskip

\noindent
{\bf Structure of the paper.} In Section \ref{sec:prel} we collect useful properties of the excitation Hamiltonian $\mathcal{G}_N$ and of the second nested commutator with the exponential of the number of excitations. Then we prove Proposition \ref{prop:G} in Section \ref{sec:proof-thm}. Finally, we conclude Theorems \ref{thm:main} and  \ref{thm:posT} in Section \ref{sec:proof-thm}. 

\medskip

\noindent
{\bf Acknowledgements.}  We would like to thank Lea Bo\ss mann, Christian Brennecke, Morris Brooks, and David Mitrouskas for helpful remarks. This work was partially funded by the Deutsche Forschungsgemeinschaft  (DFG project Nr. 426365943) and by the European Research Council via the ERC Consolidator Grant  RAMBAS (Project Nr. 10104424).

\section{Preliminaries} 
\label{sec:prel}

In this Section we collect preliminary results necessary for the proof of Theorem \ref{thm:main} and Proposition \ref{prop:G}. First, in Section \ref{subsec:Fock-exc}, we compute the excitation Hamiltonian $\mathcal{G}_N$ defined in \eqref{def:G}. Second, in Section \ref{subsec:prel-est}, we discuss preliminary estimates that we need to study the properties of $\mathcal{G}_N$ in Section \ref{sec:proof-prop-G}. 

%Below the critical temperature $T \leq T_c = N^{1/3}$, the Bose gas was proven to exhibit Bose Einstein condensation \cite{DS}, i.e. the number of particles orthogonal to the condensate is bounded. We shall prove that for sufficiently small temperatures $T\leq T_0$ even the exponential of the number of the particles outside the condensate is bounded uniform in $N$. 
%
%
%
%At zero temperature, trapped Bose gases relax to the ground state $\psi_N \in L_s^2( \Lambda)$ whose energy of the Hamiltonian \eqref{def:HN} was proven \cite{LSY,LY} to satisfy 
%\begin{align}
%N^{-1} E_N := N^{-1}\inf_{\substack{\psi \in H^1, \| \psi \| =1}} \langle \psi, H_N \psi \rangle  = N^{-1} \langle \psi_N, \; H_N \psi_N \rangle  \rightarrow 4 \pi \mathfrak{a}_0 \quad \text{as} \quad N \rightarrow \infty \; . 
%\end{align}
%Later on, also the next-to-leading order that is $O(1)$ was rigorously derived for different settings using Bogoliubov theory \cite{BCCS,BSS_bogo,H,HST,NNRT}. 
%Furthermore, ground state $\psi_N$ of the Hamiltonian $H_N$ exhibits Bose-Einstein condensation, i.e. $O(N)$ of the particles occupy the same quantum state, called condensate wave function \cite{BCCS_cond,BOS,BCCS_optimal,BSS_optimal_2,LS,NRS}. 
%
%

\subsection{Excitation Hamiltonian} 
\label{subsec:Fock-exc}

To study the excitations of the condensate wave function, we consider the excitation Hamiltonian, i.e. the Hamiltonian $H_N$ mapped through the unitary $\mathcal{U}_N$ defined in \eqref{def:UN} onto Fock space of excitations $\mathcal{F}^{\leq N}$ with respect to the on which the excitation Hamiltonian 
\begin{align}
\mathcal{L}_N := \mathcal{U}_N H_N \mathcal{U}_N^* 
\end{align} 
and is given by the sum $\mathcal{L}_N =  \mathcal{L}_N^{(0)}  + \mathcal{L}_N^{(2)} + \mathcal{L}_N^{(3)} + \mathcal{L}_N^{(4)} $ of the terms 
\begin{align}
\mathcal{L}_N^{(0)} =& \frac{N-1}{2N} \widehat{v} (0) (N- \mathcal{N}_+ ) + \frac{\widehat{v}(0)}{2N} \mathcal{N}_+ ( N- \mathcal{N}_+ ) ,\notag \\
\mathcal{L}_N^{(2)} =& \sum_{p \in \Lambda_+^*} p^2 a_p^* a_p + \sum_{p \in \Lambda_+^*} \widehat{v} (p/N ) \left[ b_p^*b_p - \frac{1}{N} a_p^*a_p \right] + \frac{1}{2} \sum_{p \in \Lambda_+^*} \widehat{v}(p/N ) \left[ b_p^*b_{-p}^* + b_p b_{-p} \right] ,\notag \\
\mathcal{L}_N^{(3)} =& \frac{1}{\sqrt{N}} \sum_{\substack{p,q, \in \Lambda_+^* \\ p + q \not=0}} \widehat{v} (  p/N) \left[ b^*_{p+q} a_{-p}^*a_q + a_q^*a_{-p} b_{p+q} \right] ,\notag \\
\mathcal{L}_N^{(4)} =& \frac{1}{2N} \sum_{\substack{p,q \in \Lambda_+^*, r \in \Lambda^* \\ r \not= -p,-q }} \widehat{v} ( r/N ) a_{p+r}^*a_q^*a_p a_{q+r} \; . \label{def:Lj}
\end{align}
Here we introduced the modified creation and annihilation operators 
\begin{align}
b_p^* = a_p^* \sqrt{1- \mathcal{N}_+/N}, \quad \text{and} \quad b_p = \sqrt{1- \mathcal{N}_+/N} a_p.
\end{align}
They effectively behave as standard creation and annihilation operators in the limit of $N \rightarrow \infty$, but they help us to stick with the truncated Fock space where $\cN_+ \le N$. Their commutation relations
\begin{align}
\label{eq:comm-b}
[b_p^*, b_q^*] = [b_p,b_q] =0, \quad [b_p,b_q^*] = \delta_{p,q} ( 1- \cN_+ /N) - a_q^*a_p 
\end{align}
agree with the CCR \eqref{eq:comm} up to a contribution that is of order $N^{-1}$. Similarly to the estimates \eqref{eq:bounds-a1}-\eqref{eq:bounds-a3} for the standard creation and annihilation operators, the modified creation and annihilation operators satisfy 
\begin{align}
\label{eq:bounds-b1}
\|  b (h) \xi \| \leq \| h \|_{\ell^2} \| \mathcal{N}_+^{1/2} \xi \|, \quad \|  b^* (h) \xi \| \leq \| h \|_{\ell^2} \| (\mathcal{N}_+ + 1)^{1/2} \xi \| ,
\end{align}
where, similarly as for the standard creation and annihilation operators, we write $b^*(h) = \sum_{p \in \Lambda_+^*} h_p b_p^*$ and 
\begin{align}
\label{eq:bounds-b2}
\vert  \sum_{p \in \Lambda_+^*} h_p \langle \xi_1, \; b_p^*b_p^* \xi_2 \rangle \vert \leq& \| h \|_{\ell^2} \| ( \mathcal{N}_+ + 1 )^{1/2} \xi_2 \| \; \| \mathcal{N}_+^{1/2} \xi_1 \| .
\end{align}
% \notag \\
%  \vert  \sum_{p \in \Lambda_+^*} h_p \langle \xi_1, \; b_pb_p \xi_2 \rangle \vert \leq& \| h \|_{\ell^2} \| ( \mathcal{N}_+ + 1 )^{1/2} \xi_1 \| \; \| \mathcal{N}_+^{1/2} \xi_2 \|\; . 
%\end{align}
Furthermore for any operator $H$ on $\ell^2(\Lambda_+^*)$ with kernel $H_{p,q}$ %$ \in \ell^\infty (\Lambda_+^* \times \Lambda_+^*) $
\begin{align}
\label{eq:bounds-b3}
 \| \sum_{p, q \in \Lambda_+^*} H_{p,q} b_p^*b_q \xi \| \leq \|H\|_{\rm op}%\| H \|_{\ell^\infty ( \Lambda_+^* \times \Lambda_+^*)} 
 \| \mathcal{N}_+ \xi \| \; . 
\end{align}

In the Gross-Pitaevski regime the particles' correlation structure plays a crucial role that we shall implement through the Bogoliubov transformation given by \eqref{def:bogo} with respect to the function $\eta \in \ell^2( \Lambda_+^*)$ defined in \eqref{def:eta} in terms of $\widehat{\omega}_{N, \ell} $ with $\omega_{N, \ell} (x) = 1 - f_{N, \ell} (x)$. The following Lemma collects properties of the scattering solution $f_{N,\ell}$ and $\omega_{N, \ell}$. 

\begin{lemma}[Lemma 3.1 \cite{BCCS}]\label{lemma:scattering} Let $v \in L^3( \Lambda)$ be non-negative, compactly supported and spherically symmetric. Fix $0 < \ell < \frac12$ and let $f_{N, \ell}$ denote the ground state of the solution of the Neumann problem \eqref{eq:Neumann}. 
\begin{enumerate}
\item[(i)] We have $\lambda_{N, \ell} = \frac{3 \widehat{v}(0)}{8 \pi N \ell^3} (1 + O( N^{-1}))$ and $0 \leq f_{N, \ell}, \omega_{N, \ell} \leq 1$.
\item[(ii)] There exists $C>0$ such that $\widehat{\omega}_{N, \ell} (p) \leq \frac{C}{N p^2}$ for all $p \in \Lambda_+^*$. 
\end{enumerate}
\label{lemma:scattering} 
\end{lemma}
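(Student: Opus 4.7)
The plan is to treat this as a classical analysis of the Neumann scattering problem, separating the qualitative claims ($0\le f_{N,\ell},\omega_{N,\ell}\le 1$) from the quantitative eigenvalue asymptotics and the Fourier decay. Since the equation $(-\Delta+\tfrac{1}{2N}v_N)f_{N,\ell}=\lambda_{N,\ell}f_{N,\ell}$ on $B_\ell$ is supplemented by the matching condition $f_{N,\ell}\equiv 1$ outside $B_\ell$, smoothness across $\partial B_\ell$ forces the Neumann boundary condition $\partial_\nu f_{N,\ell}=0$ on $|x|=\ell$. I would first note that $f_{N,\ell}$ is the ground state of a self-adjoint operator with non-negative potential, so by the maximum/minimum principle (or positivity of the ground state combined with the constant boundary value $1$) we obtain $0\le f_{N,\ell}\le 1$, hence $0\le\omega_{N,\ell}\le 1$.

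For the eigenvalue asymptotics in (i), I would integrate the eigenvalue equation over $B_\ell$: the Laplacian term contributes a boundary flux that vanishes by the Neumann condition, leaving
\begin{equation*}
\lambda_{N,\ell}\int_{B_\ell} f_{N,\ell}\,dx \;=\; \frac{1}{2N}\int_{B_\ell} v_N(x)\,f_{N,\ell}(x)\,dx.
\end{equation*}
For the denominator I would write $\int_{B_\ell}f_{N,\ell}=|B_\ell|-\int_{B_\ell}\omega_{N,\ell}$ and show that $\int\omega_{N,\ell}=O(N^{-1})$ using the representation of $\omega_{N,\ell}$ via the Green's function of $-\Delta$ on $B_\ell$ with Neumann data, combined with $\|v_N/N\|_{L^1}=N^{-1}\widehat v(0)$. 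For the numerator I would write $\int v_Nf_{N,\ell}=\int v_N-\int v_N\omega_{N,\ell}$ and estimate the correction by rescaling $y=Nx$: the rescaled $\omega$ satisfies a zero-energy scattering equation on the ball of radius $N\ell$, and standard elliptic estimates give $\omega_{N,\ell}(x)\le C/(N|x|)$ on $B_\ell\setminus\mathrm{supp}(v_N)$, producing the claimed $(1+O(N^{-1}))$ correction and recovering the leading constant $3\widehat v(0)/(8\pi N\ell^3)$ since $|B_\ell|=\tfrac{4\pi}{3}\ell^3$.

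For (ii), I would rewrite the equation as
\begin{equation*}
-\Delta\omega_{N,\ell}(x) \;=\; \frac{v_N(x)}{2N}f_{N,\ell}(x)\;-\;\lambda_{N,\ell}f_{N,\ell}(x)\cdot\mathbf 1_{B_\ell}(x)
\;=:\;g_N(x),
\end{equation*}
noting that $\omega_{N,\ell}$ is supported in $B_\ell\subset\Lambda$ (with $\ell<1/2$) and hence defines a legitimate function on the torus. Passing to Fourier series gives $|p|^2\widehat\omega_{N,\ell}(p)=\widehat g_N(p)$ for $p\in\Lambda_+^*$. Then $|\widehat g_N(p)|\le\|g_N\|_{L^1}\le\frac{1}{2N}\int v_Nf_{N,\ell}+\lambda_{N,\ell}|B_\ell|\le C/N$, using $\int v_N=\widehat v(0)$, the bound $0\le f_{N,\ell}\le 1$, and the leading-order estimate for $\lambda_{N,\ell}$ already proven. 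Dividing by $|p|^2$ gives the desired $\widehat\omega_{N,\ell}(p)\le C/(Np^2)$.

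The technically subtle step is the quantitative justification of $\int_{B_\ell}\omega_{N,\ell}=O(N^{-1})$ and the analogous bound needed for the correction $\int v_N\omega_{N,\ell}$; both reduce, after rescaling, to pointwise control of the scattering solution on the whole space, for which I would invoke the classical fact that the zero-energy scattering solution $f$ of $(-\Delta+\tfrac12 v)f=0$ satisfies $1-f(y)=O(1/|y|)$ at infinity together with matching at the Neumann boundary $|y|=N\ell$. Apart from this bookkeeping, the argument is elementary; for full details one may simply refer to \cite{BCCS}, where the lemma is proved in the cited form.
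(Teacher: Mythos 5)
The paper does not actually prove this lemma: as the bracketed attribution indicates, it is imported verbatim from \cite{BCCS} (Lemma 3.1 there), so the only benchmark is the standard argument given in that reference, which is essentially the route you take — positivity and boundedness of the Neumann ground state, the integrated identity $\lambda_{N,\ell}\int_{B_\ell}f_{N,\ell}=\frac{1}{2N}\int v_N f_{N,\ell}$ obtained from the Neumann condition, and the Fourier-space relation $p^2\widehat\omega_{N,\ell}(p)=\widehat g_N(p)$ with $\|g_N\|_{L^1}\le C/N$. Your argument for (ii) is fine as sketched, given $0\le f_{N,\ell}\le 1$ and $\lambda_{N,\ell}=O(N^{-1})$.

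There is, however, a genuine gap in your treatment of the numerator in (i). You assert that $\int v_N\,\omega_{N,\ell}$ is a relative $O(N^{-1})$ correction to $\int v_N=\widehat v(0)$, on the strength of $\omega_{N,\ell}(x)\le C/(N|x|)$ outside ${\rm supp}\,v_N$. But that decay only begins at $|x|\sim N^{-1}$, where $C/(N|x|)$ is of order one, and on the support of $v_N$ the function $\omega_{N,\ell}$ is itself of order one: rescaling $y=Nx$, the Neumann solution converges locally to the zero-energy scattering solution $f$ of \eqref{eq:Neumann}, so $\int v_N\,\omega_{N,\ell}\to\int v\,(1-f)=\widehat v(0)-\mathfrak{a}_0$ (with $\mathfrak{a}_0=\int vf$ as in \eqref{def:a}), an order-one quantity. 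Hence the integrated identity yields $\lambda_{N,\ell}=\frac{3\,\mathfrak{a}_0}{8\pi N\ell^3}\bigl(1+O(N^{-1})\bigr)$: the leading constant is the scattering-length integral $\int vf$, not $\widehat v(0)$, and this is also how the asymptotics reads in \cite{BCCS}; the $\widehat v(0)$ in the statement as transcribed here is a Born-approximation slip, and your sketch "recovers" it only through the incorrect estimate of $\int v_N\omega_{N,\ell}$. A smaller gloss: the bound $f_{N,\ell}\le 1$ does not follow from a bare maximum principle, since the equation carries the positive term $\lambda_{N,\ell}f_{N,\ell}$ on the right-hand side (at an interior maximum one only learns $-\Delta f_{N,\ell}\ge 0$, which is consistent with the equation); the proofs in the literature obtain $0\le f_{N,\ell}\le 1$ from radial monotonicity of the Neumann ground state, and you should either reproduce such an argument or cite it explicitly rather than appeal to the maximum principle.
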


We recall that from \eqref{def:eta} we have 
\begin{align}
 \eta_p = - N \widehat{\omega}_{N, \ell} (p) \quad \text{for all} \quad p \in \Lambda^*_+ 
\end{align}
and thus it follows from Lemma \ref{lemma:scattering} that 
\begin{align}
\vert \eta_p \vert \leq C p^{-2}, \quad \text{thus} \quad \eta \in \ell^2 ( \Lambda_+^*)
\end{align}
Note that by an appropriate choice of $\ell$, the norm $\| \eta \|_{\ell^2}$ can be choosen arbitrary small that will be important later. We remark that in the following we neglet the dependence of $\ell$ in the notation of. The scattering equation \ref{eq:Neumann} shows that
\begin{align}
\label{eq:id-eta}
p^2 \eta_p + \frac{1}{2N} \widehat{v}( p/N) + \frac{1}{2 N} \sum_{q \in \Lambda^*} \widehat{v}((p-q)/N) \eta_q = N \lambda_{N, \ell} \widehat{\chi}_\ell (p) + \lambda_{N, \ell} \sum_{q \in \Lambda^*} \widehat{\chi}_\ell (p-q) \eta_q 
\end{align}
where $\chi_\ell$ denotes the characteristic function on the ball $B_\ell$ with radius $\ell$. In the following we will study the excitation Hamiltonian $\mathcal{G}_N$ defined in \eqref{def:G}. We introduce the splitting 
\begin{align}
\label{def:G-split} 
\mathcal{G}_N := \mathcal{G}_N^{(0)} + \mathcal{G}_N^{(2)} +\mathcal{G}_N^{(3)} +\mathcal{G}_N^{(4)}
\end{align}
where the single contributions $\mathcal{G}_N^{(j)}$ are given by 
\begin{align}
\mathcal{G}_N^{(j)} := e^{-B ( \eta) } \mathcal{L}_N^{(j)} e^{B ( \eta)} 
\end{align}
with $\mathcal{L}_N^{(j)}$ given by \eqref{def:Lj}. We can explicitly compute the terms $\mathcal{G}_N^{(j)}$ using that the Bogoliubov transform's action on creation and annihilation operators is explicitly known and given by 
\begin{align}
\label{eq:action-bogo}
e^{-B( \eta)} b_p e^{B (\eta)} = \gamma_p b_p + \sigma_p b^*_{-p} + d_p, \quad \text{and} \quad  e^{-B( \eta)} b_p^* e^{B (\eta)} = \gamma_p b^*_p + \sigma_p b_{-p} + d_p^* 
\end{align}
where we introduced the shorthand notation 
\begin{align}
\label{eq:sh-sigma}
\sigma_p := \sinh( \eta_p) , \quad \gamma_p = \cosh( \eta_p) \quad \text{with $\eta_p$ given by \eqref{def:eta} } \; . 
\end{align}
Note that Lemma \ref{lemma:scattering} implies that with the splitting 
\begin{align}
\label{eq:sigma-splitting}
\sigma_p = \eta_p + \beta_p, \quad \gamma_p = 1 + \alpha_p
\end{align}
we have 
\begin{align}
\label{eq:bounds-sigma}
\| \sigma_p \|_{\ell^2}, \| \alpha_p \|_{\ell^2}, \| \beta_p \|_{\ell^2} \leq C, \quad \text{and} \quad \| \gamma_p \|_{\ell^\infty} \leq C \; . 
\end{align}
The remainders $d_p, d_p^*$ satisfy (following from \cite[Lemma 2.3]{BCCS}) for any $k \in \mathbb{Z}$  and all $p \in \Lambda_+^*$
\begin{align}
\label{eq:estimates-dp}
\| ( \mathcal{N}_+  + 1)^{k/2} d_p \psi \| \leq C_k N^{-1} \left( \| b_p ( \mathcal{N}_+ 1)^{(k+2)/2} \psi \| + \vert \mu_p \vert \; \| ( \mathcal{N}_+  + 1)^{3/2} \psi \| \right) 
\end{align}
and 
\begin{align}
\label{eq:estimates-dp*}
\| ( \mathcal{N}_+  + 1)^{k/2} d_p^* \psi \| \leq C_k N^{-1} \| ( \mathcal{N}_+  + 1)^{3/2} \psi \| \; . 
\end{align}
In the proof it will turn out to be useful to estimate some of the terms in position space. For this we define the remainders $\check{d}_x, \check{d}_x^*$ in position space by 
\begin{align}
\label{def:dx}
e^{-B (\eta)} \check{b}_x e^{B ( \eta)} = b( \check{\gamma}_x ) + b^*( \check{\sigma}_x ) + \check{d}_x, \quad e^{-B (\eta)} \check{b}_x^* e^{B ( \eta)} = b^*( \check{\gamma}_x ) + b( \check{\sigma}_x ) + \check{d}_x^*
\end{align}
with $\check{\gamma}_x(y) = \sum_{q \in \Lambda^*} \cosh( \eta_q) e^{-iq \cdot (x-y)}$ and $\check{\sigma}_x(y) = \sum_{q \in \Lambda^*} \sinh( \eta_q) e^{-iq \cdot (x-y)}$. It follows (see for example \cite[Eq. (3.20)-(3.21)]{BCCS}) that with the splitting 
\begin{align}
\check{\gamma}_x = \delta_x + \check{\alpha}_x, \quad \check{\sigma}_x = \check{\eta}_x + \check{\beta}_x
\end{align}
we have 
\begin{align}
\label{eq:sigma-x-bounds}
\| \alpha \|_{L^2 ( \Lambda \times \Lambda)}, \| \sigma \|_{L^2 ( \Lambda \times \Lambda)}, \| \beta \|_{L^2 ( \Lambda \times \Lambda)}  \leq C, \quad \text{and} \quad  \| \gamma \|_{L^\infty ( \Lambda \times \Lambda )} \leq C \; . 
 \end{align}

From \cite[Lemma 3.1]{BCCS_optimal} we have 
\begin{align}
\label{eq:estimates-dd}
\| ( \mathcal{N}_+ + 1)^{k/2}  \check{d}_x \check{d}_y \psi \| \leq  CN^{-2} \Big[ & \| \eta \|^2 \| ( \mathcal{N}_+ +1 )^{(k+6)/2} \psi \| + \| \eta \| \vert \check{\eta}(x-y) \vert \; \| ( \mathcal{N}_+ + 1)^{(k+6)/2} \psi \| \notag \\
&+ \| \eta \|^2 \| a_x ( \mathcal{N}_+ + 1)^{(k+5)/2} \psi \| + \| \eta \|^2 \| a_y ( \mathcal{N}_+ + 1)^{(k+5)/2} \psi \|  \notag \\
&+ \| \eta\|^2 \| a_xa_y ( \mathcal{N}_+ + 1)^{(k+4)/2} \psi \| \Big] 
\end{align}
and 
\begin{align}
\label{eq:estimates-db}
\| ( \mathcal{N}_+ + 1)^{k/2}  \check{b}_x \check{d}_y \psi \| \leq  CN^{-1} \Big[ & \| \eta \|^2 \| ( \mathcal{N}_+ +1 )^{(k+4)/2} \psi \| + \| \eta \| \vert \check{\eta}(y-x)) \vert \; \| ( \mathcal{N}_+ + 1)^{(k+4)/2} \psi \| \notag \\
&+   \| \eta \| \vert  \| a_x ( \mathcal{N}_+ + 1)^{(k+3)/2} \psi \| + \| \eta\|^2 \| a_xa_y ( \mathcal{N}_+ + 1)^{(k+2)/2} \psi \| \Big] \; . 
\end{align}

In particular, it follows from \cite[Corollary 3.5]{BCCS_optimal}), that these estimates \eqref{eq:estimates-dp}, \eqref{eq:estimates-dp*}, \eqref{eq:estimates-dd} remain true when replacing $d_p, d_p^*$ resp. $d_p^{\sharp_1}d_{\alpha p}^{\sharp_2}$ with their (double commutator) with $\cN_+$:
\begin{align}
\label{eq:bound-comm}
\| ( \mathcal{N}_+  + 1)^{k/2} \left[ \mathcal{N}_+,  d_p \right]  \psi \| \leq C_k  N^{-1} \left( \| b_p ( \mathcal{N}_+ + 1)^{(k+2)/2} \psi \| + \vert \mu_p \vert \; \| ( \mathcal{N}_+  + 1)^{3/2} \psi \| \right) 
\end{align} 
resp. 
\begin{align}
\label{eq:bound-double-comm}
\| ( \mathcal{N}_+  + 1)^{k/2} \left[ \mathcal{N}_+, \left[ \mathcal{N}_+,  d_p \right] \right] \psi \| \leq C_k N^{-1} \left( \| b_p ( \mathcal{N}_+ + 1)^{(k+2)/2} \psi \| + \vert \mu_p \vert \; \| ( \mathcal{N}_+  + 1)^{3/2} \psi \| \right) 
\end{align}
and similarly for the other operators. For our proof we need refined estimates for the remainder terms. More precisely we need to control single and double commutators with $e^{\kappa \cN_+}$. In the next subsection we show how to control these (double) commutators.

\subsection{Preliminary estimates} 
\label{subsec:prel-est}

We collect some preliminary results on commutators with the exponential of the number of excitations that we need to prove Proposition \ref{prop:G}.  For this we first introduce some more notation. For $k \in \mathbb{N}$ and $p_{i} \in \Lambda_+^*$ with $i \in \lbrace 1, \dots, k \rbrace $, let $B_{p_1, \dots, p_k}$ denote an operator of the form 
\begin{align}
\label{def:Ak}
B_{p_1, \dots, p_k} = b^{\sharp_1}_{p_1} \dots b^{\sharp_k}_{p_k}
\end{align}
where $\sharp_i \in \lbrace \cdot, * \rbrace$. Then we define $\natural^* (B_{p_1, \dots, p_k})$ (resp. $\natural^\cdot(B_{p_1, \dots, p_k})$) by the number of creation (resp. annihilation) operators of $B_{p_1, \dots, p_k}$, and by 
\begin{align}
\sharp( B_k)  := \natural^* (B_{p_1, \dots, p_k}) - \natural^\cdot (B_{p_1, \dots, p_k}) \; . 
\end{align}
their difference. For the proof of Proposition \ref{prop:G} we will need to control the second nested commutator with respect to $e^{\kappa \cN}$. The next Lemma provides a formula to control such commutators w.r.t. to operators of the form $B_{p_1, \dots, p_k}$.

\begin{lemma}
\label{lemma:double}
For $k \in \mathbb{N}$ let $B_{p_1, \dots, p_k}$ be defined as in \eqref{def:Ak}. Then for $\sharp \in \lbrace \cdot, * \rbrace$ we have 
\begin{align}
\left[ e^{\kappa \mathcal{N}_+}, \; B_{p_1, \dots, p_k} \right] =& 2 e^{ - \natural( B_{p_1, \dots, p_k}) \kappa/2}  \sinh ( \natural( B_{p_1, \dots, p_k}) \kappa /2 )  e^{\kappa \mathcal{N}_+} B_{p_1, \dots, p_k},  \notag \\ 
\left[ e^{\kappa \mathcal{N}_+}, \; B_{p_1, \dots, p_k} \right] =& 2 e^{  \natural( B_{p_1, \dots, p_k}) \kappa/2}  \sinh ( \natural( B_{p_1, \dots, p_k}) \kappa /2 )  B_{p_1, \dots, p_k} e^{\kappa \mathcal{N}_+}, \label{eq:comm-Ak-eins}
\end{align}
and furthermore
\begin{align}
\left[ e^{\kappa \mathcal{N}_+}, \; \left[ e^{\kappa \mathcal{N}_+}, \; B_{p_1, \dots, p_k} \right]\right] 
&= 4 \sinh^2 ( \natural( B_{p_1, \dots, p_k}) \kappa /2 ) \; e^{\kappa \mathcal{N}_+} B_{p_1, \dots, p_k}  \; e^{\kappa \mathcal{N}_+} \; . \label{eq:comm-Ak-zwei}
\end{align}
\end{lemma}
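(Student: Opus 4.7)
The plan is to reduce the computation to the elementary fact that $e^{\kappa \mathcal{N}_+}$ shifts each modified creation/annihilation operator by a fixed scalar factor. The key observation is that $\mathcal{N}_+$ commutes with the multiplier $\sqrt{1-\mathcal{N}_+/N}$, so from $\mathcal{N}_+ a_p = a_p(\mathcal{N}_+ -1)$ and $\mathcal{N}_+ a_p^*= a_p^*(\mathcal{N}_+ +1)$ for $p\in\Lambda_+^*$ one immediately obtains
\begin{equation*}
[\mathcal{N}_+, b_p] = -b_p, \qquad [\mathcal{N}_+, b_p^*] = b_p^*.
\end{equation*}
Equivalently, $\mathcal{N}_+ b_p^\sharp = b_p^\sharp (\mathcal{N}_+ + \epsilon(\sharp))$ with $\epsilon(*)=+1$ and $\epsilon(\cdot)=-1$, which by the spectral calculus gives
\begin{equation*}
e^{\kappa \mathcal{N}_+} b_p^\sharp = e^{\epsilon(\sharp)\kappa}\, b_p^\sharp\, e^{\kappa \mathcal{N}_+}.
\end{equation*}

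First I would iterate this identity through a monomial: for $B_{p_1,\dots,p_k}=b^{\sharp_1}_{p_1}\cdots b^{\sharp_k}_{p_k}$, commuting $e^{\kappa \mathcal{N}_+}$ past each factor one after another produces a total scalar $\prod_{i=1}^k e^{\epsilon(\sharp_i)\kappa}=e^{\natural(B)\kappa}$, where $\natural(B)=\natural^*(B)-\natural^\cdot(B)$. Hence
\begin{equation*}
e^{\kappa \mathcal{N}_+} B_{p_1,\dots,p_k} = e^{\natural(B)\kappa}\, B_{p_1,\dots,p_k}\, e^{\kappa \mathcal{N}_+}.
\end{equation*}
From here the first commutator identity in \eqref{eq:comm-Ak-eins} follows directly by writing
\begin{equation*}
e^{\kappa \mathcal{N}_+} B - B e^{\kappa \mathcal{N}_+} = (e^{\natural(B)\kappa}-1)\, B\, e^{\kappa \mathcal{N}_+},
\end{equation*}
and using the algebraic identity $e^{x}-1 = 2 e^{x/2}\sinh(x/2)$ with $x=\natural(B)\kappa$. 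The second form in \eqref{eq:comm-Ak-eins} is obtained analogously by rewriting $B e^{\kappa \mathcal{N}_+}= e^{-\natural(B)\kappa} e^{\kappa \mathcal{N}_+} B$ and factoring $1-e^{-\natural(B)\kappa}=2e^{-\natural(B)\kappa/2}\sinh(\natural(B)\kappa/2)$ the other way around.

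Finally, I would iterate the single commutator to obtain the double one: starting from
\begin{equation*}
[e^{\kappa \mathcal{N}_+}, B] = 2 e^{-\natural(B)\kappa/2}\sinh(\natural(B)\kappa/2)\, e^{\kappa \mathcal{N}_+} B,
\end{equation*}
one applies $[e^{\kappa \mathcal{N}_+},\cdot]$ once more. Since $e^{\kappa \mathcal{N}_+}$ commutes with itself, only the factor $B$ on the right matters, and applying the second form of \eqref{eq:comm-Ak-eins} yields
\begin{equation*}
[e^{\kappa \mathcal{N}_+},[e^{\kappa \mathcal{N}_+},B]] = 2 e^{-\natural(B)\kappa/2}\sinh(\natural(B)\kappa/2)\, e^{\kappa \mathcal{N}_+}\cdot 2 e^{\natural(B)\kappa/2}\sinh(\natural(B)\kappa/2)\, B\, e^{\kappa \mathcal{N}_+},
\end{equation*}
and the two exponential prefactors cancel to produce $4\sinh^2(\natural(B)\kappa/2)$, giving \eqref{eq:comm-Ak-zwei}. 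There is no real obstacle here; the whole lemma is a bookkeeping exercise built on the single commutation relation $[\mathcal{N}_+,b_p^\sharp]=\epsilon(\sharp)b_p^\sharp$, and the only thing to watch is choosing the correct side (left vs.\ right) to move $e^{\kappa \mathcal{N}_+}$ past $B$ so that the exponential prefactors align and combine into $\sinh^2$.
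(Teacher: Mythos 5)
Your proposal is correct and takes essentially the same route as the paper: both rest on the shift identity $e^{\kappa \mathcal{N}_+} B_{p_1,\dots,p_k} = e^{\natural(B)\kappa} B_{p_1,\dots,p_k} e^{\kappa \mathcal{N}_+}$ coming from $[\mathcal{N}_+, b_p^{\sharp}] = \pm b_p^{\sharp}$, and then write the commutator as $(e^{\natural(B)\kappa}-1)Be^{\kappa\mathcal{N}_+}$ resp.\ $(1-e^{-\natural(B)\kappa})e^{\kappa\mathcal{N}_+}B$ and factor via $e^{x}-1=2e^{x/2}\sinh(x/2)$. The only cosmetic difference is that you obtain the double commutator by iterating the single-commutator identity, while the paper multiplies the two scalar prefactors directly — the same computation.
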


\begin{proof} The Lemma is an immediate consequence of the commutation relations \eqref{eq:comm} that show  
\begin{align}
\left[ e^{\kappa \mathcal{N}_+}, \; B_{p_1, \dots, p_k} \right] =& \left( 1- e^{ - \natural( B_k) \kappa} \right) e^{\kappa \mathcal{N}_+} B_{p_1, \dots, p_k},  \notag \\
 \left[ e^{\kappa \mathcal{N}_+}, \; B_{p_1, \dots, p_k} \right] =& \left( e^{  \natural( B_k) \kappa}- 1 \right)  B_{p_1, \dots, p_k} e^{\kappa \mathcal{N}_+} \; 
\end{align}
yielding the desired identities \eqref{eq:comm-Ak-eins}. Furthermore we have 
\begin{align}
\left[ e^{\kappa \mathcal{N}_+},  \left[ e^{\kappa \mathcal{N}_+}, \; B_{p_1, \dots, p_k} \right]\right] =& \left( 1- e^{ - \sharp( B_{p_1, \dots, p_k}) \kappa} \right) \left( e^{ \natural( B_{p_1, \dots, p_k}) \kappa} -1 \right)e^{\kappa \mathcal{N}_+} B_{p_1, \dots, p_k} \; e^{\kappa \mathcal{N}_+} \notag \\
&= 4 \sinh^2 ( \natural( B_{p_1, \dots, p_k}) \kappa /2 ) \; e^{\kappa \mathcal{N}_+} B_{p_1, \dots, p_k} \; e^{\kappa \mathcal{N}_+} \; 
\end{align}
and thus identity \eqref{eq:comm-Ak-zwei} follows. 
\end{proof}

In particular it follows from Lemma \ref{lemma:double} that 
\begin{align*}
\| \left[ e^{\kappa \mathcal{N}_+}, \; B_{p_1, \dots, p_k} \right] \psi \| &\leq C \kappa \| B_k e^{\kappa \cN_+}\psi \| \\
\| e^{- \kappa \cN_+}\left[ e^{\kappa \mathcal{N}_+},   \left[ e^{\kappa \mathcal{N}_+}, \; B_{p_1, \dots, p_k} \right] \right] \psi \| &\leq C \kappa^2 \| B_k ^{\kappa \cN_+}\psi \|  \; . 
\end{align*}

Next we prove some similar properties for the remainders $d_p^*,d_p$ of the generalized Bogoliubov transform defined in \eqref{eq:action-bogo}. More precisely, we consider commutators of the form 
 \begin{align}
[ e^{ \kappa \mathcal{N}_+},  [ e^{\lambda \kappa \mathcal{N}_+}, d_p^{\sharp} ]] 
 \end{align}
with $\sharp \in \lbrace \cdot, * \rbrace$ and $\kappa \in \mathbb{R}$. For this, we use properties of $d_p,d_p^*$ proven in \cite{BCCS} that are based on the expansion  
\begin{align}
e^{-B(\eta)} b_p e^{B( \eta)} =& \sum_{n=1}^{m-1} (-1)^n \frac{{\rm ad}_{B(\eta)}^{(n)} (b_p)}{n!} \notag \\
 &+ \int_0^1 ds_1 \int_0^{s_1}  ds_2\dots \int_0^{s_{m-1}} ds_m e^{-s_m B( \eta)}{\rm ad}_{B( \eta)}^{(m)} (b_p) e^{s_m B( \eta)} \; . 
\end{align}
The nested commutators are defined recursively through  
\begin{align}
{\rm ad}_{B ( \eta)}^{(0)} (A) = A \quad \text{and} \quad {\rm ad}_{B( \eta)}^{(n)} = \left[ B( \eta), {\rm ad}_{B (\eta)}^{(n-1)} (A) \right] \; . 
\end{align}
It follows from \cite[Lemma 3.2]{BCCS_optimal} that the nested commutators of $b_p,b_p^*$ are given in terms of the following operators: For $f_1, \dots , f_n \in \ell^2( \Lambda_+^*)$, $\sharp = ( \sharp_1, \dots, \sharp_n), \flat = (\flat_0, \dots, \flat_{n-1}) \in \lbrace \cdot, * \rbrace^n $ we define the $\Pi^{(2)}$-operator of order $n $ by 
\begin{align}
\label{def:Pi2}
\Pi_{\sharp, \flat}^{(2)} (f_1, \dots, f_n ) = \sum_{p_1, \dots, p_n \in \Lambda_+^*} b_{\alpha_0 p_1}^{\flat_0} a_{\beta_1 p_1}^{\sharp_1} a_{\alpha_1 p_2}^{\flat_1} a_{\beta_2 p_2}^{\sharp_2} a_{\alpha_2 p_3}^{\flat_2} \dots a_{\beta_{n-1} p_{n-1}}^{\sharp_{n-1}} a_{\alpha_{n-1} p_n}^{\flat_{n-1}} b_{\beta_n p_n}^{\sharp_n} \prod_{\ell=1}^n f_\ell (p_\ell ) 
\end{align}
were for $\ell = 0,1, \dots, n $ we define $\alpha_\ell =1$ if $\flat_\ell =*$., $\alpha_\ell = -1$ if $\flat_\ell = \cdot$, $\beta_\ell =1$ if $\sharp_\ell =\cdot$ and $\beta_\ell = -1$ of $\sharp_\ell = *$. Moreover, we require that for every $j=1, \dots, n-1$ we have either $\sharp_j = \cdot$ and $\flat_j = *$ or $\sharp_j = *$ and $\flat_j = \cdot$ (so that the product $a_{\beta_\ell p_\ell}^{\sharp_\ell} a_{\alpha_\ell p_{\ell + 1}}^{\flat_\ell}$ preserves the number of particles for all $\ell =1, \dots, n-1$). Then, the operator $\Pi_{\sharp, \flat}^{(2)} (f_1, \dots, f_n ) $ leaves the truncated Fock space invariant. Moreover if for some $\ell =1, \dots, n$, $\flat_{\ell -1} = \cdot$ and $\sharp_\ell =*$, we furthermore require that $f_\ell \in \ell^1 ( \Lambda_+^*)$ (so that we can normal order the operators).  
For $g, f_1, \dots, f_n \in \ell^2( \Lambda_+^*)$, $\sharp = ( \sharp_1, \dots, \sharp_n)\in \lbrace \cdot, *\rbrace^n, \flat = ( \flat_0, \dots, \flat_n) \in \lbrace \cdot, * \rbrace^{n+1}$ we define a $\Pi^{(1)}$-operator of order $n$ by 
\begin{align}
&\Pi^{(1)}_{\sharp,\flat} (f_1, \dots, f_n; g) \notag \\
&\quad = \sum_{p_1, \dots, p_n \in \Lambda_+^*} b_{\alpha_0, p_1}^{\flat_0} a_{\beta_1 p_1}^{\sharp_1} a_{\alpha_1 p_2}^{\flat_1} a_{\beta_2p_2}^{\sharp_2}a_{\alpha_2 p_3}^{\flat_2} \dots a_{\beta_{n-1}p_{n-1}}^{\sharp_{n-1}} a_{\alpha_{n-1} p_n}^{\flat_{n-1}} a_{\beta_n p_n}^{\sharp_n} a^{\flat_n} (g) \prod_{\ell = 1}^n f_\ell (p_\ell)
\end{align}
where $\alpha_\ell$ and $\beta_\ell$ are defined as before. Also here, we require that for all $\ell= 1, \dots, n$ either $\sharp_\ell = \cdot$ and $\flat_\ell =*$ or $\sharp = *$ and $\flat_\ell = \cdot$. Note that the $\Pi^{(1)}$ leaves the truncated Fock space invariant. We require that $f_\ell \in \ell^1 ( \Lambda_+^*)$ if $\flat_{\ell -1} = \cdot$ and $\sharp_\ell = *$ for some $\ell=1, \dots, n$. It follows from \cite{BS} that nested commutators ${\rm ad}_{B(\eta)} (b_p)$ can be expressed in the following form. 

\begin{lemma}[Lemma 3.2 \cite{BS}]
\label{lemma:dp}
Let $\eta \in \ell^2( \Lambda_+^*)$ be such that $\eta_p = \eta_{-p}$ for all $p \in \ell^2( \Lambda_+^*)$. To simplify the notation, assume also $\eta$ to be real valued. Let $B( \eta)$ be defined as in \eqref{def:bogo}, $n \in \mathbb{N}$ and $p \in \Lambda_+^*$. Then the nested commutator ${\rm ad}_{B(\eta)}^{(n)} (b_p)$ can be written as the sum of exactly $2^n n!$ terms with the following properties. 

\begin{enumerate}
\item[(i)] Possibly up to a sign, each term has the form 
\begin{align}
\label{eq:lemma(i)}
\Lambda_1 \Lambda_2 \dots \Lambda_i N^{-k} \Pi_{\sharp, \flat}^{(1)} ( \eta^{j_1}. \dots, \eta^{j_k}; \eta_p^s \varphi_{\alpha p} )
\end{align}
for some $i,k,s \in \mathbb{N}$, $j_1, \dots, j_k \in \mathbb{N} \setminus \lbrace 0 \rbrace$, $\sharp \in \lbrace \cdot, * \rbrace^k, \flat \in \lbrace \cdot, * \rbrace^{k+1}$ and $\alpha \in \lbrace \pm \rbrace$ chosen so that $\alpha =1$ if $\flat_k = \cdot$ and $\alpha =-1$ of $\flat_k = *$ (recall that $\varphi_p (x) = e^{-ip \cdot x}$). In \eqref{eq:lemma(i)} each operator $\Lambda_w : \mathcal{F}^{\leq N} \rightarrow \mathcal{F}^{\leq N}$, $w=1, \dots, i$ is either a factor of $(N-\mathcal{N}_+)/N$, a factor $( N - ( \mathcal{N}_+ - 1))/N$ or an operator of the form 
\begin{align}
\label{eq:lemma(ii)}
N^{-h}\Pi_{\sharp',\flat'}^{(2)} ( \eta^{z_1}, \eta^{z_2}, \dots, \eta^{z_h} )
\end{align}
for some $h,z_1, \dots, z_h \in \mathbb{N} \setminus \lbrace 0 \rbrace, \sharp, \beta \in \lbrace \cdot, * \rbrace^h$. 
\item[(ii)]  If a term of the form \eqref{eq:lemma(i)} cantains $m \in \mathbb{N}$ factors $(N-\mathcal{N}_+)/N$ or $( N- ( \mathcal{N}_+ +1))/N$ and $j \in \mathbb{N}$ factors of te form \eqref{eq:lemma(i)} with $\Pi^{(2)}$ operators pf order $h_1, \dots, h_j \in \mathbb{N} \setminus \lbrace 0 \rbrace$, then we have 
\begin{align}
m+ (h_1 + 1) + \dots + (h_j +1) + (k+1) = n+1 
\end{align}
\item[(iii)]  If a term of the form \eqref{eq:lemma(i)} contains (considering all $\Lambda$-operators and the $\Pi^{(1)}$-operator) the arguments $\eta^{i_1}, \dots, \eta^{i_m}$ and the factor $\eta_p^s$ for some $m,s \in \mathbb{N}$ and $i_1, \dots, i_m \in \mathbb{N} \setminus \lbrace 0 \rbrace$, then 
\begin{align}
i_1 + \dots + i_m + s =n \; . 
\end{align}
\item[(iv)]  There is exactly one term having the form \eqref{eq:lemma(i)} with $k=0$ and such that all $\Lambda$-operators are factors of $(N-\mathcal{N}_+)/N$ or of $( N+1-\mathcal{N})/N$. It is given by 
\begin{align}
\left( \frac{N-\mathcal{N}_+}{N}\right)^{n/2} \left( \frac{N+1-\mathcal{N}_+}{N} \right)^{n/2} \eta_p^n b_p 
\end{align}
if $n$ is even, and by 
\begin{align}
- \left( \frac{N-\mathcal{N}_+}{N}\right)^{(n+1)/2} \left( \frac{N+1-\mathcal{N}_+}{N} \right)^{(n-1)/2} \eta_p^n b_{-p}^* 
\end{align}
if $n$ is odd. 
\item[(v)]  If the $\Pi^{(1)}$-operator in \eqref{eq:lemma(i)} is of order $k \in \mathbb{N}\setminus \lbrace 0 \rbrace$, it has either the form 
\begin{align}
\sum_{p_1, \dots, p_k} b_{\alpha_0 p_1}^{\flat_0} \prod_{i=1}^{k-1} a_{\beta_ip_i}^{\sharp_i} a_{\alpha_i p_{i+1}}^{\flat_i} a_{-p_k}^* \eta_{p}^{2r} a_p \prod_{i=1}^k \eta_{p_i}^{j_i} 
\end{align}
or the form 
\begin{align}
\sum_{p_1, \dots, p_k} b_{\alpha_0p_1}^{\flat_0} \prod_{i=1}^{k-1} a^{\sharp_i}_{\beta_i p_i} a_{\alpha_i p_{i+1}}^{\flat_i} a_{p_k}\eta_p^{2r+1} a_p^* \prod_{i=1}^k
 \eta_{p_i}^{j_i}
 \end{align}
for some $r \in \mathbb{N}$, $j_1, \dots, j_k \in \mathbb{N} \setminus \lbrace 0 \rbrace$. If it is of order $k=0$, then it is either given by $\eta_p^{2r}b_p$ or by $\eta_p^{2r+1} b_{-p}^*$ for some $r\in \mathbb{N}$. 

\item[(vi)] For every non-normally ordered term of the form 
\begin{align}
\sum_{q \in \Lambda^*} \eta_q^i a_q a_q^* , \quad \sum_{q \in \Lambda^*} \eta_q^i b_q a_q^*, \quad \sum_{q \in \Lambda^*} \eta_q^i a_q b_q^* \quad \text{or} \quad \sum_{q \in \Lambda^*} \eta_q^i b_q b_q^* 
\end{align}
appearing either in the $\Lambda$-operators or in the $\Pi^{(1)}$-operator in \eqref{eq:lemma(i)}, we have $i\geq 2$.
\end{enumerate}
\end{lemma}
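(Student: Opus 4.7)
The natural approach is induction on $n$. The base case $n=0$ is immediate: ${\rm ad}_{B(\eta)}^{(0)}(b_p) = b_p$ matches the unique term described in (iv) with $n=0$, and the count $2^0\cdot 0! = 1$ is consistent.

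For the inductive step I would write $B(\eta) = \frac12 \sum_{q \in \Lambda_+^*} ( \eta_q b_q^* b_{-q}^* - \eta_q b_q b_{-q} )$ and compute $[B(\eta), T]$ for each of the $2^n n!$ terms $T$ guaranteed by the inductive hypothesis. The two summands of $B(\eta)$ contribute a factor of $2$; furthermore, by property (ii) at level $n$ each $T$ contains $n+1$ modified or standard creation/annihilation operators, and commuting $B(\eta)$ past any one of them via \eqref{eq:comm-b} and \eqref{eq:comm} produces a Kronecker delta which is then contracted against $\eta_q$, delivering exactly one new power of $\eta$ as required by (iii). Summing the $2(n+1)$ possibilities yields $2^{n+1}(n+1)!$ new terms in total, matching the claimed count.

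The rest of the argument is structural bookkeeping: one must check that every resulting term has one of the canonical forms in (i), (iv), (v), satisfies the length identity in (ii), and respects the non-normal-ordering constraint (vi). This reduces to a case analysis depending on whether the Kronecker delta lands on the leading $b_{\alpha_0 p_1}^{\flat_0}$ of the $\Pi^{(1)}$-operator, on an internal $a_{\beta_\ell p_\ell}^{\sharp_\ell}$ or $a_{\alpha_\ell p_{\ell+1}}^{\flat_\ell}$, on the trailing operator, or on a factor inside some $\Lambda_w$ of the form \eqref{eq:lemma(ii)}; each case either produces a longer $\Pi^{(1)}$, spins off a new $\Pi^{(2)}$ factor of order one, or raises the order of an existing $\Pi^{(2)}$ by one. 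Property (iv) is then tracked by following the distinguished branch in which every commutator is absorbed into the $\Lambda$-factors via $[\mathcal{N}_+, b_q^\sharp] = \mp b_q^\sharp$, which reproduces the explicit $\eta_p^n b_p$ or $-\eta_p^n b_{-p}^*$ expression depending on the parity of $n$.

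The main technical obstacle is property (vi), the constraint that $\eta$-powers at non-normally ordered positions be at least $2$. Whenever normal ordering is restored using the CCR or the modified relations \eqref{eq:comm-b}, one generates scalar terms of the schematic shape $\sum_q \eta_q^i (1- \mathcal{N}_+/N) - N^{-1} \sum_q \eta_q^i a_q^* a_q$, and one must argue that such a scalar can only arise after a prior commutation has already boosted the exponent, so that $i \geq 2$ persists through the induction. Tracking this exponent through all branches of the case analysis, in parallel with the propagation of (i)--(v), is the essential combinatorial difficulty; for the full verification I would defer to the argument of \cite[Lemma 3.2]{BS}, which executes this bookkeeping in detail.
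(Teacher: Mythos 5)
First, a point of comparison: the paper contains no proof of this statement at all. It is imported verbatim as Lemma 3.2 of \cite{BS} (hence the attribution in the lemma's heading), so the paper's ``proof'' is the citation itself, and your concluding deferral to \cite{BS} is in effect the same treatment the paper gives. Your sketch does identify the strategy actually used there: induction on $n$, the Leibniz rule applied to $[B(\eta),\,\cdot\,]$, and a case analysis according to where the contraction produced by the commutation relations lands (leading $b$-operator, internal pair, trailing operator, or a $\Lambda$-factor).

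As a standalone argument, however, the one step you do try to justify — the count $2^{n+1}(n+1)!$ — is argued incorrectly. You claim that property (ii) implies each term of ${\rm ad}^{(n)}_{B(\eta)}(b_p)$ contains $n+1$ creation/annihilation operators, one commutation site per operator. That is not what (ii) says: the identity $m+(h_1+1)+\cdots+(h_j+1)+(k+1)=n+1$ counts \emph{weights} of factors, not field operators; a $\Pi^{(2)}$-factor of order $h$ contains $2h$ field operators and a $\Pi^{(1)}$-operator of order $k$ contains $2k+1$ of them. Already for $n=1$ the term proportional to $N^{-1}\Pi^{(1)}_{\sharp,\flat}(\eta;\varphi_p)$, i.e.\ $N^{-1}\sum_q \eta_q\, b^*_q a^*_{-q}a_p$, carries three field operators while $n+1=2$. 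The count works out instead because, in the bookkeeping of \cite{BS}, commuting $B(\eta)$ through a factor of weight $w$ produces exactly $2w$ admissible terms (for instance, the commutator with $(N-\mathcal{N}_+)/N$ spawns a new order-one $\Pi^{(2)}$-factor, raising that factor's weight from $1$ to $2$), so that by (ii) each term generates $2(n+1)$ new ones; your factor $2(n+1)$ is right, but for the wrong reason. Beyond this, properties (iv)--(vi) — precisely the structural statements the present paper relies on, e.g.\ in the proof of Lemma \ref{lemma:eN-dp} — are asserted rather than verified and handed back to \cite{BS}. So the proposal is a correct outline of the known inductive proof, but with a flawed counting justification and with the substantive combinatorial content left to the cited reference, which is also all the paper itself does.
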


Lemma \ref{lemma:dp} in particular shows that for small enough $\| \eta \| $ the series 
\begin{align}
e^{-B( \eta)} b_p e^{B (\eta)} = \sum_{n=0}^\infty \frac{(-1)^n}{n!} {\rm ad}_{B ( \eta)}^{(n)} ( b_p ), \quad e^{-B( \eta)} b_p^* e^{B (\eta)} = \sum_{n=0}^\infty \frac{(-1)^n}{n!} {\rm ad}_{B ( \eta)}^{(n)} ( b_p^* ) 
\end{align}
converge absolutely (see \cite[Lemma 3.3]{BCCS_optimal}) and we get an explicitly definition of the remainders by 
\begin{align}
\label{def:d-2}
d_p = \sum_{m \geq 0} \frac{1}{m!} \left[ {\rm ad}_{-B(\eta)}^{(m)} (b_p) - \eta_p^m b_{\alpha_m p}^{\sharp_m} \right], \quad d_p^*= \sum_{m \geq 0} \frac{1}{m!} \left[ {\rm ad}_{-B(\eta)}^{(m)} (b_p^*) - \eta_p^m b_{\alpha_m p}^{\sharp_{m+1}} \right]
\end{align}
where $p \in \Lambda_+^*$, $( \sharp_m, \alpha_m) = (\cdot, +1)$ if $m$ is even and $(\sharp_m, \alpha_m)= (*,-1)$ if $m$ is odd. Moreover we use this represenation to prove the following Lemma.

\begin{lemma}
\label{lemma:eN-dp}
Under the same assumptions and notations of Lemma \ref{lemma:dp}, we have for $0<\lambda<1$ and sufficiently small $\| \eta \| $ and $k \in \mathbb{Z}$
\begin{align}
\label{eq:eN-dp}
\|    ( \mathcal{N}_+  & +1)^{k/2} \left[ e^{\lambda \cN_+}, d_p \right] \psi \| \notag \\
& \leq C \lambda N^{-1} \left(  \| b_p ( \mathcal{N}_+ +1)^{(k+2)/2} e^{\lambda \cN_+}\psi \| + \vert\eta_p \vert \| ( \mathcal{N}_+ +1)^{(3+k)/2} ) e^{\lambda \cN_+} \psi \| \right), \notag \\
\|   ( \mathcal{N}_+  & +1)^{k/2}\left[ e^{\lambda \cN_+}, d_p^* \right] \psi \| 
 \leq C \lambda  N^{-1} \|  ( \mathcal{N}_+ +1)^{(k+3)/2} e^{\lambda \cN_+} \psi \|, 
\end{align}
and 
\begin{align}
\label{eq:eN-dp-double-comm}
\|  ( \mathcal{N}_+  & +1)^{k} e^{- \lambda \cN_+} \left[ e^{\lambda\cN_+}, \left[ e^{\lambda \cN_+}, d_p \right] \right] \psi \|\notag \\
 &\leq C \lambda^2 N^{-1} \left(  \| b_p ( \mathcal{N}_+ +1)^{(k+2)/2} e^{\lambda \cN_+}\psi \| + \vert\eta_p \vert \| ( \mathcal{N}_+ +1)^{(k+3)/2} ) e^{\lambda \cN_+} \psi \| \right), \notag \\
\| ( \mathcal{N}_+  &+1)^{k}e^{- \lambda \cN_+}  \left[ e^{\lambda\cN_+}, \left[ e^{\lambda \cN_+}, d_p^* \right] \right] \psi \| \leq C \lambda^2  N^{-1} \|  ( \mathcal{N}_+ +1)^{(k+3)/2} e^{\lambda \cN_+} \psi \| \; . 
\end{align}
Furthermore, the operators $\check{d}_x, \check{d}^*_x$ defined by \eqref{def:dx} satisfy 
\begin{align}
\label{eq:eN-dx-single}
\|   ( \mathcal{N}_+  & +1)^{k/2} [e^{\lambda \cN_+}, \check{d}_x \check{d}_y] \psi \|  \notag \\
&\leq C \lambda N^{-2} \Big[  \| \eta \|^2 \| ( \mathcal{N}_+ +1 )^{(k+6)/2} e^{\lambda \cN_+} \psi \| + \| \eta \| \vert \check{\eta}(x-y) \vert \; \| ( \mathcal{N}_+ + 1)^{(k+4)/2} e^{\lambda \cN_+}\psi \| \notag \\
&\hspace{3cm} + \| \eta \|^2 \| a_x ( \mathcal{N}_+ + 1)^{(k+5)/2} e^{\lambda \cN_+}\psi \| + \| \eta \|^2 \| a_y ( \mathcal{N}_+ + 1)^{(k+4)/2} e^{\lambda \cN_+}\psi \|  \notag \\
&\hspace{3cm} + \| \eta\|^2 \| a_xa_y ( \mathcal{N}_+ + 1)^{(k+4)/2} e^{\lambda \cN_+} \psi \| \Big] \; 
\end{align}
and 
\begin{align}
\label{eq:eN-dx-double}
\| & ( \mathcal{N}_+ +1)^{k/2} e^{-\lambda \cN_+}\Big[ e^{\lambda \cN_+} \Big[ e^{\lambda \cN_+}, \check{d}_x \check{d}_y\Big]\Big]  \psi \| \notag \\
&\leq C \lambda^2 N^{-2} \Big[  \| \eta \|^2 \| ( \mathcal{N}_+ +1 )^{(k+6)/2} e^{\lambda \cN_+} \psi \| + \| \eta \| \vert \check{\eta}(x-y) \vert \; \| ( \mathcal{N}_+ + 1)^{(k+4)/2} e^{\lambda \cN_+}\psi \| \notag \\
&\hspace{3cm} + \| \eta \|^2 \| a_x ( \mathcal{N}_+ + 1)^{(k+5)/2} e^{\lambda \cN_+}\psi \| + \| \eta \|^2 \| a_y ( \mathcal{N}_+ + 1)^{(k+4)/2} e^{\lambda \cN_+}\psi \|  \notag \\
&\hspace{3cm} + \| \eta\|^2 \| a_xa_y ( \mathcal{N}_+ + 1)^{(k+4)/2} e^{\lambda \cN_+} \psi \| \Big] \; 
\end{align}
Moreover,
\begin{align}
\label{eq:eN-db-single}
&\| ( \mathcal{N}_+  + 1)^{k/2} \Big[ e^{\lambda \cN_+},  \check{b}_x \check{d}_y \Big]  \psi \| \notag \\
& \leq  C \lambda N^{-1}  \Big[  \| \eta \|^2 \| ( \mathcal{N}_+ +1 )^{(k+4)/2}  e^{\lambda \mathcal{N}_+}\psi \| + \| \eta \| \vert \check{\eta}(y-x)) \vert \; \| ( \mathcal{N}_+ + 1)^{(k+4)/2} e^{\lambda \mathcal{N}_+}\psi \| \notag \\
&\quad  + \| \eta \| \vert  \| a_x ( \mathcal{N}_+ + 1)^{(k+3)/2}e^{\lambda \mathcal{N}_+} \psi \| + \| \eta\|^2 \| a_xa_y ( \mathcal{N}_+ + 1)^{(k+2)/2} e^{\lambda \mathcal{N}_+}\psi \| \Big] 
\end{align}
and 
\begin{align}
\label{eq:eN-db-double}
&\| ( \mathcal{N}_+  + 1)^{k/2} e^{- \lambda \cN_+} \Big [ e^{\lambda \cN_+}, \Big[ e^{\lambda \cN_+},  \check{b}_x \check{d}_y \Big] \Big] \psi \| \notag \\
&\leq  C \lambda N^{-1}  \Big[  \| \eta \|^2 \| ( \mathcal{N}_+ +1 )^{(k+4)/2} e^{\lambda \mathcal{N}_+}\psi \| + \| \eta \| \vert \check{\eta}(y-x)) \vert \; \| ( \mathcal{N}_+ + 1)^{(k+4)/2} e^{\lambda \mathcal{N}_+}\psi \| \notag \\
&\quad + \| \eta \| \vert  \| a_x ( \mathcal{N}_+ + 1)^{(k+3)/2}e^{\lambda \mathcal{N}_+} \psi \| + \| \eta\|^2 \| a_xa_y ( \mathcal{N}_+ + 1)^{(k+2)/2} e^{\lambda \mathcal{N}_+}\psi \| \Big] . 
\end{align}
\end{lemma}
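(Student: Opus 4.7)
\textbf{Proof plan for Lemma \ref{lemma:eN-dp}.} The strategy is to reduce everything to the known bounds \eqref{eq:estimates-dp}, \eqref{eq:estimates-dp*}, \eqref{eq:estimates-dd}, \eqref{eq:estimates-db} by using the explicit series representation \eqref{def:d-2} of the remainders together with the commutation formula in Lemma \ref{lemma:double}. The key observation is that every operator appearing in the expansion of $d_p$, $d_p^*$, $\check d_x \check d_y$, $\check b_x \check d_y$ is (after normal ordering) a sum of $\Pi^{(1)}$- and $\Pi^{(2)}$-type operators as defined in \eqref{def:Pi2} and below, i.e.\ a product of a bounded number of creation and annihilation operators with a finite, uniformly bounded value of $\natural(\cdot)$.

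First I would treat $d_p$. Writing $d_p=\sum_{m\ge 0}\frac{1}{m!}\big[\mathrm{ad}^{(m)}_{-B(\eta)}(b_p)-\eta_p^m b^{\sharp_m}_{\alpha_m p}\big]$, Lemma \ref{lemma:dp} expands each nested commutator as a sum of at most $2^m m!$ terms of the form $\Lambda_1\cdots\Lambda_i N^{-k}\Pi^{(1)}_{\sharp,\flat}(\eta^{j_1},\dots,\eta^{j_k};\eta_p^s\varphi_{\alpha p})$. By Lemma \ref{lemma:double} applied to each such monomial $B_{p_1,\dots}$,
\[
[e^{\lambda\cN_+},B_{p_1,\dots}]=2 e^{-\natural\lambda/2}\sinh(\natural\lambda/2)\, e^{\lambda\cN_+}B_{p_1,\dots},
\]
and since $\natural$ is uniformly bounded in $m$ (the net particle change across a $\Pi^{(1)}$-operator is $\pm 1$, independently of its order), we have $|\sinh(\natural\lambda/2)|\le C\lambda$ for $0<\lambda<1$. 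Hence, term by term in the series, $[e^{\lambda\cN_+},d_p]=C\lambda\, e^{\lambda\cN_+}\widetilde d_p$, where $\widetilde d_p$ has the same algebraic structure as $d_p$ (up to bounded constants and reordering of $\Lambda$-operators, which commute with $\cN_+$). Applying the bound \eqref{eq:estimates-dp} — whose derivation in \cite[Lemma 2.3]{BCCS} only uses the operator structure of the series and the smallness of $\|\eta\|$ — to $\widetilde d_p$ acting on $e^{\lambda\cN_+}\psi$ produces the stated estimate for $[e^{\lambda\cN_+},d_p]$. The bound for $[e^{\lambda\cN_+},d_p^*]$ is obtained identically using \eqref{eq:estimates-dp*}. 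For the double commutator, Lemma \ref{lemma:double} yields an extra factor $2\sinh^2(\natural\lambda/2)\le C\lambda^2$ together with $e^{\lambda\cN_+}$ on both sides of $B_{p_1,\dots}$; sandwiching by $e^{-\lambda\cN_+}$ on the left restores the standard form and produces the $\lambda^2 N^{-1}$ bound \eqref{eq:eN-dp-double-comm}.

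For the position-space estimates \eqref{eq:eN-dx-single}--\eqref{eq:eN-db-double}, I would run exactly the same argument applied to the position-space expansions of $\check d_x\check d_y$ and $\check b_x\check d_y$ that underlie \cite[Lemma 3.1]{BCCS_optimal}. These are expressed as (finite or absolutely convergent) sums of $\Pi^{(1)}$- and $\Pi^{(2)}$-monomials smeared against $\check\eta$, $\check\gamma$, $\check\sigma$. Each such monomial has uniformly bounded $\natural$ (the net particle change of $\check d_x\check d_y$ is between $-4$ and $0$; for $\check b_x\check d_y$ between $-3$ and $+1$), so commuting $e^{\lambda\cN_+}$ through once or twice produces exactly one or two factors of $\lambda$, and the bounds \eqref{eq:estimates-dd}, \eqref{eq:estimates-db} applied with $\psi$ replaced by $e^{\lambda\cN_+}\psi$ yield the claimed inequalities.

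The main obstacle is bookkeeping: one has to check that commuting $e^{\lambda\cN_+}$ through the $\Lambda$-operators of the form $(N-\cN_+)/N$ and through the $\Pi^{(2)}$-subfactors does not spoil the structure. This is essentially free since the $\Lambda$-factors are functions of $\cN_+$, while the $\Pi^{(2)}$-operators preserve particle number (so $\natural(\Pi^{(2)})=0$ and they commute with $e^{\lambda\cN_+}$); hence only the outermost $b^\flat_{\alpha_0 p_1}$ and the innermost $b^{\sharp_n}_{\beta_n p_n}$ (or $a^{\flat_n}(g)$ in $\Pi^{(1)}$) contribute to $\natural$, keeping it uniformly bounded and making the termwise argument work. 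Collecting the bounds and summing the series exactly as in \cite{BCCS,BCCS_optimal}, which converges for $\|\eta\|$ sufficiently small, completes the proof.
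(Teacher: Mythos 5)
Your strategy is the same as the paper's: expand $d_p$, $d_p^*$ (and the position-space products) through \eqref{def:d-2} and the structure result of Lemma \ref{lemma:dp}, commute $e^{\lambda\cN_+}$ through term by term to extract a factor $O(\lambda)$ (resp.\ $O(\lambda^2)$ for the double commutator), and then reuse the known remainder bounds \eqref{eq:estimates-dp}--\eqref{eq:estimates-db} on $e^{\lambda\cN_+}\psi$, summing the series for $\|\eta\|$ small.

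One justification step in your write-up is wrong as stated, though repairable. You claim that the $\Pi^{(2)}$-subfactors ``preserve particle number (so $\natural(\Pi^{(2)})=0$ and they commute with $e^{\lambda\cN_+}$)'', and you use this to conclude that the net particle change $\natural$ of each full monomial is uniformly bounded in $m$. In the structure of Lemma \ref{lemma:dp} only the \emph{interior} pairs $a^{\sharp_\ell}a^{\flat_\ell}$ are number-preserving; the outer operators $b^{\flat_0}$ and $b^{\sharp_n}$ of a $\Pi^{(2)}$-operator generically change the particle number by $\pm 2$ (this is exactly how the paper treats them, with $\kappa_u=\pm2$ in its telescoping), so $\natural$ of a full term can grow with the number of $\Lambda$-factors, i.e.\ with $m$. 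Your bound $|\sinh(\natural\lambda/2)|\le C\lambda$ with a constant independent of $m$ therefore does not hold as written. The conclusion survives either by telescoping the conjugation through the factors $\Lambda_1,\dots,\Lambda_i$ and the $\Pi^{(1)}$-operator as the paper does, which yields a factor bounded by $C\lambda(i+1)\le C\lambda(m+1)$, or by bounding $|\sinh(\natural\lambda/2)|\le C^m\lambda$ for $0<\lambda<1$; in both cases the $m$-dependent constant is absorbed by the factor $C^m\|\eta\|^{m-1}/m!$ and the smallness of $\|\eta\|$. With that correction your argument coincides with the paper's proof.
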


\begin{proof}
We start with proving \eqref{eq:eN-dp}. %Since $\left[ e^{\lambda \cN_+}, d_p \right] =  (e^{\lambda \cN_+}d_p e^{-\lambda \cN_+} -d_p ) e^{ \lambda \cN_+}$, 
We find from \eqref{def:d-2} that 
\begin{align} \label{eq:term10}
\| & \left[ e^{\lambda \cN_+}, d_p \right] \psi \| = \| \left( e^{\lambda \cN_+}d_p e^{-\lambda \cN_+} - d_p \right) e^{ \lambda \cN_+}\psi \| \\
&\leq  \sum_{m \geq 0} \frac{1}{m!}  \Big\|\left( e^{\lambda \cN_+}\left[ {\rm ad}_{-B(\eta)}^{(m)} (b_p) - \eta_p^m b_{\alpha_m p}^{\sharp_m} \right] e^{-\lambda \cN_+}- \left[ {\rm ad}_{-B(\eta)}^{(m)} (b_p) - \eta_p^m b_{\alpha_m p}^{\sharp_m} \right] \right) e^{ \lambda \cN_+} \psi \Big\|  .\notag
\end{align}
Moreover, by Lemma \ref{lemma:dp} the difference 
\begin{align}
e^{\lambda \cN_+}\left[ {\rm ad}_{-B(\eta)}^{(m)} (b_p) - \eta_p^m b_{\alpha_m p}^{\sharp_m} \right] e^{-\lambda \cN_+}- \left[ {\rm ad}_{-B(\eta)}^{(m)} (b_p) - \eta_p^m b_{\alpha_m p}^{\sharp_m} \right]
\end{align}
is the sum of one term of the form 
\begin{align}
A_p =  e^{\lambda \cN_+} & \left( \frac{N-\cN_+}{N}\right)^{\frac{m+(1-\alpha_m)/2}{2}} \left( \frac{N+1-\cN_+}{N}\right)^{\frac{m+(1+\alpha_m)/2}{2}} \eta_p b_{\alpha_m p}^{\sharp_m}  e^{-\lambda \cN_+} \notag \\
 &- \left( \frac{N-\cN_+}{N}\right)^{\frac{m+(1-\alpha_m)/2}{2}} \left( \frac{N+1-\cN_+}{N}\right)^{\frac{m+(1+\alpha_m)/2}{2}} \eta_p b_{\alpha_m p}^{\sharp_m} \label{eq:term11}
\end{align}
and $2^m m! - 1$ terms of the form 
 \begin{align}
B_p =  e^{\kappa \lambda \cN_+}  & \Lambda_1 \dots \Lambda_{i_1} N^{-k} \Pi_{\sharp, \flat}^{(1)} ( \eta^{j_1}, \dots, \eta^{j_{k_1}}; \eta_p^{\ell_1} \varphi_{\alpha_{\ell_1} p} g_p ) e^{- \lambda \kappa\cN_+ } \notag \\
 &- \Lambda_1 \dots \Lambda_{i_1} N^{-k} \Pi_{\sharp, \flat}^{(1)} ( \eta^{j_1}, \dots, \eta^{j_{k_1}}; \eta_p^{\ell_1} \varphi_{\alpha_{\ell_1} p} ) \label{eq:term12}
\end{align} 
where $i_1, k_1, \ell_1 \in \mathbb{N}, j_1, \dots, j_k \in \mathbb{N} \setminus \lbrace 0 \rbrace$ and where each operator $\Lambda_r$ is either a factor $( N- \cN_+)/N$, a factor $(N+1-\cN_+)/N$  or a $\Pi^{(2)}$ operator of the form 
\begin{align}
\label{eq:term13}
N^{-h}\Pi^{(2)}_{\sharp,\flat} ( \eta^{z_1}, \dots, \eta^{z_h} ) 
\end{align}
with $h, z_1, \dots, z_h \in \mathbb{N} \setminus \lbrace 0 \rbrace$. We consider \eqref{eq:term11} and \eqref{eq:term12} separately, thus each term that is of the form \eqref{eq:term11} either has $k_1 >0$ or contains at least one operator of the form \eqref{eq:term13}. We start with estimating \eqref{eq:term11} first that vanishes for $m=0$. Thus we have 
\begin{align}
 & \|  A_p e^{ \lambda \cN_+}\psi \| \notag \\
 &=  \Big\| \left( \frac{N-\cN_+}{N}\right)^{\frac{m+(1-\alpha_m)/2}{2}} \left( \frac{N+1-\cN_+}{N}\right)^{\frac{m+(1+\alpha_m)/2}{2}} \eta_p^m \left(   e^{\lambda \cN_+}  b_{\alpha_m p}^{\sharp_m} e^{-\lambda \cN_+} -  b_{\alpha_m p}^{\sharp_m}\right) e^{ \lambda \cN_+}\psi \| \notag \\ 
&\leq  \kappa \lambda C^m \vert \eta_p \vert^m N^{-1}\|( \mathcal{N}_+ + 1)^{3/2}e^{ \lambda \cN_+} \psi \| \; . \label{eq:term14}
\end{align}
For \eqref{eq:term12} we find 
\begin{align}
& B_p = \sum_{u=1}^i \left(  \prod_{t=1}^{u-1} e^{\lambda \cN_+}  \Lambda_t e^{-\lambda \cN_+} \right) \left( e^{\lambda \cN_+}  \Lambda_u e^{-\lambda \cN_+} - \Lambda_u \right) \times \nonumber \\
&\qquad \times \prod_{t=u+1}^{i}   \Lambda_t  N^{-k} \Pi_{\sharp, \flat}^{(1)} ( \eta^{j_1}, \dots, \eta^{j_{k_1}}; \eta_p^{\ell_1} \varphi_{\alpha_{\ell_1} p} )  \\
+& \left( \prod_{t=1}^{i}   \Lambda_t \right)   N^{-k}  \left(  e^{\lambda \cN_+}\Pi_{\sharp, \flat}^{(1)} ( \eta^{j_1}, \dots, \eta^{j_{k_1}}; \eta_p^{\ell_1} \varphi_{\alpha_{\ell_1} p} ) e^{-\lambda \cN_+} - \Pi_{\sharp, \flat}^{(1)} ( \eta^{j_1}, \dots, \eta^{j_{k_1}}; \eta_p^{\ell_1} \varphi_{\alpha_{\ell_1} p} )\right) \;. \nonumber
\end{align}
In case $\Lambda_u$ is of the form $( N- \mathcal{N}_+)/N $ or $( N +1- \mathcal{N}_+)/N$ then $e^{\lambda \cN_+}  \Lambda_u e^{-\lambda \cN_+} - \Lambda_u $ vanishes. Otherwise, if $\Lambda_u$ is an operator of the form $\Pi^{(2)}$ it creates resp. annihilates two particles, thus, we have $e^{\lambda \cN_+}  \Lambda_u e^{-\lambda \cN_+} - \Lambda_u =  ( e^{\lambda \kappa_u} -1)\Lambda_u$ with $\kappa_u =2$ or $\kappa_u = -2$. Similarly, as the operator $\Pi^{(1)}$ creates or annihilates one particle, we have 
\begin{align}
\Pi_{\sharp, \flat}^{(1)}  &( \eta^{j_1}, \dots, \eta^{j_1}, \dots, \eta^{j_{k_1}}; \eta_p^{\ell_1} \varphi_{\alpha_{\ell_1} p} ) e^{-\lambda \cN_+} - \Pi_{\sharp, \flat}^{(1)} ( \eta^{j_1}, \dots, \eta^{j_{k_1}}; \eta_p^{\ell_1} \varphi_{\alpha_{\ell_1} p} ) \notag \\
=&(e^{\lambda \kappa} - 1) \Pi_{\sharp, \flat}^{(1)} ( \eta^{j_1}, \dots,   \eta^{j_1}, \dots, \eta^{j_{k_1}}; \eta_p^{\ell_1} \varphi_{\alpha_{\ell_1} p} ) 
\end{align}
with $\kappa = 1$ or $\kappa=-1$. Therefore we find 
\begin{align}
\Big\| B_p e^{ \lambda \cN_+} \psi \Big\| \leq& \left(\sum_{u=1}^i (e^{\kappa_u} -1)+ (e^{\kappa}-1) \right) \|  \prod_{t=1}^{i}   \Lambda_t N^{-k} \Pi_{\sharp, \flat}^{(1)} ( \eta^{j_1}, \dots, \eta^{j_{k_1}}; \eta_p^{\ell_1} \varphi_{\alpha_{\ell_1} p} ) \psi \| \; .
\end{align}
We consider the case $\ell_1 =0$ and $\ell_1 >0$ separately (see for example \cite[Lemma 3.4]{BCCS_optimal} resp. \cite[Section 5]{BS}) and arrive with $\vert \eta_p \vert \leq \| \eta \|$ at 
\begin{align}
\Big\| B_pe^{ \lambda \cN_+}  \psi \Big\| \leq & \lambda C^m N^{-1} \left( \| \eta \|^{m-\ell_1} \vert \eta_p \vert^{\ell_1} \delta_{\ell_1 >0} \| ( \mathcal{N}_+ +1)^{3/2} \psi \| + \| \eta \|^m \| b_p ( \mathcal{N}_+ +1) e^{ \lambda \cN_+} \psi \| \right) \notag \\
\leq& \lambda C^m N^{-1} \| \eta\|^{m-1}\left(\vert \eta_p \vert \delta_{m>0}  \| ( \mathcal{N}_+ +1)^{3/2}e^{ \lambda \cN_+} \psi \| + \| \eta \| \| b_p ( \mathcal{N}_+ +1) e^{ \lambda \cN_+}\psi \| \right) . \label{eq:term15}
\end{align}
We plug \eqref{eq:term14} and \eqref{eq:term15} into \eqref{eq:term10} and conclude for sufficiently small $\|\eta \|$ at \eqref{eq:eN-dp}. The second bound follows similarly using that in the case $\ell_1 =0$ we only have $\| b_p^* ( \cN_+ + 1) e^{ \lambda \cN_+}\psi \| \leq \| ( \cN_+ + 1)^{3/2} e^{ \lambda \cN_+} \psi \|$. 

The bound on the double commutator follows similarly. We write 
\begin{align}
e^{- \lambda \cN_+} \left[ e^{\lambda\cN_+}, \left[ e^{\lambda \cN_+}, d_p \right] \right] e^{- \lambda \cN_+} =  e^{\lambda \cN_+}  d_p  e^{- \lambda \cN_+} -  e^{-\lambda \cN_+}  d_p  e^{ \lambda \cN_+} \; ,
\end{align}
and thus find 
\begin{align}
& \| e^{- \lambda \cN_+} \left[ e^{\lambda\cN_+}, \left[ e^{\lambda \cN_+}, d_p \right] \right] e^{- \lambda \cN_+}  \psi \| \leq  \sum_{m \geq 0} \frac{1}{m!}  \times \notag \\
& \times \Big\|\left( e^{\lambda \cN_+}\left[ {\rm ad}_{-B(\eta)}^{(m)} (b_p) - \eta_p^m b_{\alpha_m p}^{\sharp_m} \right] e^{-\lambda \cN_+}-  e^{-\lambda \cN_+}\left[ {\rm ad}_{-B(\eta)}^{(m)} (b_p) - \eta_p^m b_{\alpha_m p}^{\sharp_m} \right] e^{\lambda \cN_+}\right)  \psi \Big\| .
\end{align}
By Lemma \ref{lemma:dp} the difference 
\begin{align}
 e^{\lambda \cN_+}\left[ {\rm ad}_{-B(\eta)}^{(m)} (b_p) - \eta_p^m b_{\alpha_m p}^{\sharp_m} \right] e^{-\lambda \cN_+}-  e^{-\lambda \cN_+}\left[ {\rm ad}_{-B(\eta)}^{(m)} (b_p) - \eta_p^m b_{\alpha_m p}^{\sharp_m} \right] e^{\lambda \cN_+}
\end{align}
is the sum of one term of the form 
\begin{align}
A'_p =  e^{\lambda \cN_+} & \left( \frac{N-\cN_+}{N}\right)^{\frac{m+(1-\alpha_m)/2}{2}} \left( \frac{N+1-\cN_+}{N}\right)^{\frac{m+(1+\alpha_m)/2}{2}} \eta_p b_{\alpha_m p}^{\sharp_m}  e^{-\lambda \cN_+} \label{eq:term111}  \\
 &- e^{-\lambda \cN_+}  \left( \frac{N-\cN_+}{N}\right)^{\frac{m+(1-\alpha_m)/2}{2}} \left( \frac{N+1-\cN_+}{N}\right)^{\frac{m+(1+\alpha_m)/2}{2}} \eta_p b_{\alpha_m p}^{\sharp_m}  e^{\lambda \cN_+}  \notag
\end{align}
and $2^m m! - 1$ terms are of the form 
 \begin{align}
B_p =  e^{\kappa \lambda \cN_+}  & \Lambda_1 \dots \Lambda_{i_1} N^{-k} \Pi_{\sharp, \flat}^{(1)} ( \eta^{j_1}, \dots, \eta^{j_{k_1}}; \eta_p^{\ell_1} \varphi_{\alpha_{\ell_1} p} g_p ) e^{- \lambda \kappa\cN_+ } \notag \\
 &-  e^{-\kappa \lambda \cN_+}   \Lambda_1 \dots \Lambda_{i_1} N^{-k} \Pi_{\sharp, \flat}^{(1)} ( \eta^{j_1}, \dots, \eta^{j_{k_1}}; \eta_p^{\ell_1} \varphi_{\alpha_{\ell_1} p} g_p ) e^{ \lambda \kappa\cN_+ } \label{eq:term122}
\end{align} 
where $i_1, k_1, \ell_1 \in \mathbb{N}, j_1, \dots, j_k \in \mathbb{N} \setminus \lbrace 0 \rbrace$ and where each operator $\Lambda_r$ is either a factor $( N- \cN_+)/N$, a factor $(N+1-\cN_+)/N$  or a $\Pi^{(2)}$ operator of the form 
\begin{align}
\label{eq:term133}
N^{-h}\Pi^{(2)}_{\sharp,\flat} ( \eta^{z_1}, \dots, \eta^{z_h} ) 
\end{align}
with $h, z_1, \dots, z_h \in \mathbb{N} \setminus \lbrace 0 \rbrace$. We consider \eqref{eq:term111} and \eqref{eq:term122} separately, thus each term that is of the form \eqref{eq:term111} either has $k_1 >0$ or contains at least one operator of the form \eqref{eq:term133}. We estimate \eqref{eq:term111} first that vanishes for $m=0$. Thus we have 
\begin{align}
 & \|  A_p e^{ \lambda \cN_+}\psi \| = \Big\| \left( \frac{N-\cN_+}{N}\right)^{\frac{m+(1-\alpha_m)/2}{2}} \left( \frac{N+1-\cN_+}{N}\right)^{\frac{m+(1+\alpha_m)/2}{2}}  \times \notag \\
 &\qquad \qquad \qquad  \qquad \qquad \times \eta_p^m \left(   e^{\lambda \cN_+}  b_{\alpha_m p}^{\sharp_m} e^{-\lambda \cN_+} -   e^{-\lambda \cN_+}  b_{\alpha_m p}^{\sharp_m} e^{\lambda \cN_+}\right) \psi \| \notag \\ 
&\leq  \kappa^2 \lambda C^m \vert \eta_p \vert^m N^{-1}\|( \mathcal{N}_+ + 1)^{3/2}e^{ \lambda \cN_+} \psi \| \; . \label{eq:term14}
\end{align}
For \eqref{eq:term12} we find 
\begin{align}
B_p =& \sum_{u=1}^i \left(  \prod_{t=1}^{u-1} e^{\lambda \cN_+}  \Lambda_t e^{-\lambda \cN_+} \right) \left( e^{\lambda \cN_+}  \Lambda_u e^{-\lambda \cN_+} -e^{-\lambda \cN_+}  \Lambda_u e^{\lambda \cN_+}  \right) \prod_{t=u+1}^{i}  e^{\lambda \cN_+}  e^{-\lambda \cN_+} \Lambda_t e^{\lambda \cN_+}  \notag \\
& \quad \times N^{-k} \Pi_{\sharp, \flat}^{(1)} ( \eta^{j_1}, \dots, \eta^{j_{k_1}}; \eta_p^{\ell_1} \varphi_{\alpha_{\ell_1} p} ) \notag \\
+& \left( \prod_{t=1}^{i}   e^{\lambda \cN_+}  \Lambda_t e^{-\lambda \cN_+}  \right)   N^{-k} \notag \\
& \quad \times \left(  e^{\lambda \cN_+}\Pi_{\sharp, \flat}^{(1)} ( \eta^{j_1}, \dots, \eta^{j_{k_1}}; \eta_p^{\ell_1} \varphi_{\alpha_{\ell_1} p} ) e^{-\lambda \cN_+} - \Pi_{\sharp, \flat}^{(1)} ( \eta^{j_1}, \dots, \eta^{j_{k_1}}; \eta_p^{\ell_1} \varphi_{\alpha_{\ell_1} p} )\right) \;.
\end{align}
In case $\Lambda_u$ is of the form $( N- \mathcal{N}_+)/N $ or $( N +1- \mathcal{N}_+)/N$ then $e^{\lambda \cN_+}  \Lambda_u e^{-\lambda \cN_+} - e^{-\lambda \cN_+}  \Lambda_u e^{\lambda \cN_+}  $ vanishes. Otherwise, if $\Lambda_u$ is an operator of the form $\Pi^{(2)}$ it creates resp. annihilates two particles, thus, we have $e^{\lambda \cN_+}  \Lambda_u e^{-\lambda \cN_+} - e^{-\lambda \cN_+}  \Lambda_u e^{\lambda \cN_+}  =  ( e^{\lambda \kappa_u} -e^{-\lambda \kappa_u})\Lambda_u$ with $\kappa_u =2$ or $\kappa_u = -2$. Similarly, as the operator $\Pi^{(1)}$ creates or annihilates one particle, we have 
\begin{align}
&e^{\lambda \cN_+ }\Pi_{\sharp, \flat}^{(1)}  ( \eta^{j_1}, \dots, \eta^{j_1}, \dots, \eta^{j_{k_1}}; \eta_p^{\ell_1} \varphi_{\alpha_{\ell_1} p} ) e^{-\lambda \cN_+} \nonumber\\
&\qquad \qquad \qquad  - e^{-\lambda \cN_+ }\Pi_{\sharp, \flat}^{(1)}  ( \eta^{j_1}, \dots, \eta^{j_1}, \dots, \eta^{j_{k_1}}; \eta_p^{\ell_1} \varphi_{\alpha_{\ell_1} p} ) e^{\lambda \cN_+} \notag \\
=&(e^{\lambda \tilde\kappa} - e^{-\lambda \tilde\kappa}) \Pi_{\sharp, \flat}^{(1)} ( \eta^{j_1}, \dots,   \eta^{j_1}, \dots, \eta^{j_{k_1}}; \eta_p^{\ell_1} \varphi_{\alpha_{\ell_1} p} ) 
\end{align}
with $\tilde\kappa = 1$ or $\tilde\kappa=-1$. Therefore we find 
\begin{align}
\Big\| B_p e^{ \lambda \cN_+} \psi \Big\| \leq& \left(\sum_{u=1}^i (e^{\kappa_u} -e^{\lambda \kappa_u})+ (e^{\kappa}-e^{\lambda \kappa}) \right) \|  \prod_{t=1}^{i}   \Lambda_t N^{-k} \Pi_{\sharp, \flat}^{(1)} ( \eta^{j_1}, \dots, \eta^{j_{k_1}}; \eta_p^{\ell_1} \varphi_{\alpha_{\ell_1} p} ) \psi \| \; .
\end{align}
We consider the case $\ell_1 =0$ and $\ell_1 >0$ separately (see for example \cite[Lemma 3.4]{BCCS_optimal} resp. \cite[Section 5]{BS}) and arrive with $\vert \eta_p \vert \leq \| \eta \|$ at 
\begin{align}
\Big\| B_pe^{ \lambda \cN_+}  \psi \Big\| \leq & \lambda^2 C^m N^{-1} \left( \| \eta \|^{m-\ell_1} \vert \eta_p \vert^{\ell_1} \delta_{\ell_1 >0} \| ( \mathcal{N}_+ +1)^{3/2} \psi \| + \| \eta \|^m \| b_p ( \mathcal{N}_+ +1) e^{ \lambda \cN_+} \psi \| \right) \notag \\
\leq& \lambda^2 C^m N^{-1} \| \eta\|^{m-1}\left(\vert \eta_p \vert \delta_{m>0}  \| ( \mathcal{N}_+ +1)^{3/2}e^{ \lambda \cN_+} \psi \| + \| \eta \| \| b_p ( \mathcal{N}_+ +1) e^{ \lambda \cN_+}\psi \| \right) . \label{eq:term15}
\end{align}
We plug \eqref{eq:term14} and \eqref{eq:term15} into \eqref{eq:term10} and conclude for sufficiently small $\|\eta \|$ at \eqref{eq:eN-dp} for $k=0$. Since $\cN_+$ can be easily commuted through any operators of the form $\Pi^{(1)}, \Pi^{(2)}$ and $\Lambda_i$, the case $k \in \mathbb{Z}$ follows. The second bound follows similarly using that in the case $\ell_1 =0$ we only have $\| b_p^* ( \cN_+ + 1) e^{ \lambda \cN_+}\psi \| \leq \| ( \cN_+ + 1)^{3/2} e^{ \lambda \cN_+} \psi \|$. 

For the remaining estimates \eqref{eq:eN-dx-single}, \eqref{eq:eN-dx-double} and \eqref{eq:eN-db-single}, \eqref{eq:eN-db-double} we observe 
\begin{align}
\Big[ e^{\lambda \cN_+}, \check{d}_x \check{d}_y \Big] &= (  e^{\lambda\cN_+ } \check{d}_x \check{d}_y e^{-\lambda\cN_+ } -\check{d}_x \check{d}_y) e^{\lambda\cN_+ } \notag \\
&=   (e^{\lambda\cN_+ } \check{d}_x e^{-\lambda\cN_+ } -\check{d}_x ) e^{\lambda\cN_+ } \check{d}_y +  \check{d}_x ( e^{\lambda\cN_+ } \check{d}_y e^{-\lambda\cN_+ } - \check{d}_y) e^{\lambda\cN_+ }
\end{align}
resp. 
\begin{align}
e^{-\lambda \cN_+} &  \Big[e^{\lambda \cN_+},  \Big[ e^{\lambda \cN_+}, \check{d}_x \check{d}_y \Big] \notag \\
=& \left( e^{-\lambda \cN_+} \check{d}_x \check{d}_y e^{\lambda \cN_+}-  2 \check{d}_x \check{d}_y +  e^{\lambda \cN_+}\check{d}_x \check{d}_y e^{-\lambda \cN_+} \right) e^{\lambda \cN_+} \notag \\
=&   (e^{-\lambda\cN_+ } \check{d}_x e^{\lambda\cN_+ } -\check{d}_x ) e^{-\lambda\cN_+ } \check{d}_y e^{2\lambda\cN_+ }+  \check{d}_x ( e^{-\lambda\cN_+ } \check{d}_y e^{\lambda\cN_+ } - \check{d}_y) e^{\lambda \cN_+} \notag \\
&-  (e^{\lambda\cN_+ } \check{d}_x e^{-\lambda\cN_+ } -\check{d}_x ) e^{\lambda\cN_+ } \check{d}_y -  \check{d}_x ( e^{\lambda\cN_+ } \check{d}_y e^{-\lambda\cN_+ } - \check{d}_y) e^{\lambda\cN_+ }
\end{align}
and similarly for products of the form $\check{d}_x \check{b}_y$. Then we use the bounds for 
\begin{align}
e^{\lambda\cN_+ } \check{d}_x e^{-\lambda\cN_+ } -\check{d}_x , \quad \text{resp.} \quad e^{- \lambda \cN_+} \check{d}_y e^{ \lambda \cN_+} -  e^{\lambda \cN_+} \check{d}_y e^{ \lambda \cN_+}  \quad \text{and} \quad  e^{-\lambda \cN_+} \check{d}_x  e^{\lambda \cN_+} 
\end{align}
obtained before and then \eqref{eq:eN-dx-single}, \eqref{eq:eN-dx-double} follow by  controlling the commutator of $a_x$ through operators of the form $\Pi^{(1)}, \Pi^{(2)}$ and $\Lambda_i$. Since 
\begin{align}
[\check{a}_x, \int_{\Lambda^2} dydz \check{a}_{y}^* \check{a}_z \eta^{(j)} (y;z)] = \check{a}( \eta^{(j)}_x ) , \quad \text{and} \quad [\check{a}_x, \check{a}^*( \eta_y)] = \eta(x-y)
\end{align}
we then arrive at \eqref{eq:eN-dx-single}, \eqref{eq:eN-dx-double} (see also \cite[Lemma 3.4]{BCCS_optimal}). 

The estimates \eqref{eq:eN-db-single}, \eqref{eq:eN-db-double} follow in the same way using that $e^{\lambda \cN_+} \check{b}_x = \check{b}_x e^{\cN_+ -1}$ from the commutation relations \eqref{eq:comm-b}.
\end{proof}

From the previous Lemma \ref{lemma:eN-dp}, we get estimates on 
\begin{align}
\label{def:wN}
\wN :=& e^{B( \eta)} \mathcal{N}_+ e^{-B( \eta)} 
\end{align}
resp. single and double commutators with $e^{\kappa \cN_+}$. To derive those estimates, we use that 
\begin{align}
\wN =& \cN_+ + \int_0^1 ds \; e^{ s B( \eta)} \sum_{p \in \Lambda_+^*} \eta_p [B( \eta), a_p^*a_p] e^{-s( \eta)} \notag \\
&= \cN_+ + \int_0^1 ds \; e^{ s B( \eta)} \sum_{p \in \Lambda_+^*} \eta_p [b_p^*b_{-p}^* + b_pb_{-p}] e^{-sB( \eta)} 
\end{align}
that we write with \eqref{eq:action-bogo} as 
\begin{align}
\label{eq:wN-1}
\wN =& \cN_+  + \sum_{p \in \Lambda_+^*} \left(  (\gamma_p^2 + \sigma_p^2 -1) b_p^* b_p +  \gamma_p \sigma_p b_p^*b_{-p} + \sigma_p^2 [b_p^*,b_p] \right) \\
&+ \sum_{p \in \Lambda_+^*}  \eta_p \int_0^1 ds \; \left( (\gamma_p^{(s)} b_p + \sigma_p^{(s)} b_{-p}^* ) d_p^{(s)} + {\rm h.c.} \right) + \sum_{p \in \Lambda_+^*} \eta_p \int_0^1 ds \; ( d_p^{(s)} d_p^{(s)} +{\rm h.c.} ) \; 
\end{align}
where we introduced the notation $\gamma_p^{(s)} = \cosh( s \eta_p), \sigma_p^{(s)} = \sinh ( s \eta_p)$ and $d_p^{(s)}$ for the remainder terms defined by \eqref{def:d-2} for the kernel $s \eta_p$.

\begin{lemma}
\label{lemma:bounds-wN}
Let $\wN$ be defined in \eqref{def:wN}. Let $\xi_1, \xi_2 \in \mathcal{F}_{\perp u_0}^{\leq N}$  and $j \in \mathbb{N}_0$. Then, there exists $C>0$ such that 
\begin{align}
\label{eq:bounds-wN}
 \vert \langle \xi_1, \;  \wN  \xi_2 \rangle \vert \leq&  C  \| ( \mathcal{N} +1)^{(1 -j)/2} \xi_1 \| \; \| ( \mathcal{N} +1)^{(j+1)/2} \xi_2 \| \; . 
\end{align}
Furthermore, for $\kappa>0$ we have 
\begin{align}
\label{eq:bound-tildeN-V}
\| e^{\kappa \cN_+} \wN \xi \| \leq&  C e^{ 2 \kappa} \|  \wN e^{\kappa \cN_+} \xi \| 
\end{align}
and 
\begin{align}
\label{eq:single-wN}
\vert \langle \xi_1, \;  \left[ e^{\kappa \mathcal{N}_+}, \; \wN \right] \xi_2 \rangle \vert \leq& C \kappa \| ( \mathcal{N} +1)^{(1 -j)/2} \xi_1 \| \; \| ( \mathcal{N} +1)^{(j+1)/2} e^{\kappa \cN_+} \xi_2 \| \notag \\
\vert \langle \xi_1, \;  \left[ e^{\kappa \mathcal{N}_+}, \; \wN \right] \xi_2 \rangle \vert \leq& C \kappa\| ( \mathcal{N} +1)^{(1 -j)/2}e^{\kappa \cN_+} \xi_1 \| \; \| ( \mathcal{N} +1)^{(j+1)/2}  \xi_2 \| \; 
\end{align}
and 
\begin{align}
\label{eq:double-wN}
\vert \langle \xi_1, \; \left[ e^{\kappa \mathcal{N}_+}, \;  \left[ e^{\kappa \mathcal{N}_+}, \; \wN \right]\right] \xi_2 \rangle \vert \leq C \kappa^2 \| ( \mathcal{N} +1)^{(1 -j)/2} \xi_1 \| \; \| ( \mathcal{N} +1)^{(j+1)/2} \xi_2 \| \; . 
\end{align}
\end{lemma}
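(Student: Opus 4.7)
The strategy is to work directly from the explicit expansion \eqref{eq:wN-1} derived just above the lemma, which writes $\wN$ as the sum of $\cN_+$, a quadratic expression in the modified creation/annihilation operators $b_p,b_p^*$ with coefficient sequences built from $\gamma_p,\sigma_p$, and two integral remainder terms containing the operators $d_p^{(s)}$. The $\ell^2$-bounds collected in \eqref{eq:bounds-sigma} ensure that the coefficient sequences lie in $\ell^2(\Lambda_+^*)$ with norms uniform in $N$, while the remainder estimates of Lemma~\ref{lemma:eN-dp} supply the $N^{-1}$ smallness required to absorb the $d$-contributions into the desired bounds.

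To prove \eqref{eq:bounds-wN} I will bound each summand separately. The $\cN_+$ piece is trivial, and $\sum_p\sigma_p^2[b_p^*,b_p]$ is controlled by $\|\sigma\|_{\ell^2}^2$ after normal ordering. The diagonal quadratic piece $\sum_p(\gamma_p^2+\sigma_p^2-1)b_p^*b_p$ is handled via \eqref{eq:bounds-b3} using $\gamma_p^2-1=2\alpha_p+\alpha_p^2$ and $\|\alpha\|_{\ell^\infty}\le C$; the off-diagonal pairing term with coefficient $\gamma_p\sigma_p\in\ell^2$ is estimated by \eqref{eq:bounds-b1}--\eqref{eq:bounds-b2}. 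For the $d^{(s)}$-remainders I apply \eqref{eq:estimates-dp}--\eqref{eq:estimates-db} uniformly in $s\in[0,1]$, and use Cauchy--Schwarz against the $\ell^2$ sequences $\eta_p\gamma_p^{(s)}$, $\eta_p\sigma_p^{(s)}$ and $\eta_p$; the resulting powers of $(\cN_++1)^{1/2}$ can then be redistributed between $\xi_1$ and $\xi_2$ by commuting through the (modified) ladder operators, producing the asymmetric $j$-dependent form of the bound.

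The commutator estimates \eqref{eq:single-wN}--\eqref{eq:double-wN} follow by applying Lemma~\ref{lemma:double} to the $b$-quadratic terms: single commutators generate a factor $2\sinh(\natural\kappa/2)\le C\kappa$ and double commutators a factor $4\sinh^2(\natural\kappa/2)\le C\kappa^2$, since here $|\natural|\in\{0,2\}$. The $d^{(s)}$ remainders are treated by the commutator bounds of Lemma~\ref{lemma:eN-dp}, which are tailored to produce exactly the $\kappa$ and $\kappa^2$ prefactors one needs, together with the same $N^{-1}$ and $\|\eta\|$ smallness that already appeared in the static bounds. The term $\cN_+$ itself commutes with $e^{\kappa\cN_+}$ and therefore drops out of the commutators. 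The bound \eqref{eq:bound-tildeN-V} is obtained by writing $e^{\kappa\cN_+}\wN=\bigl(e^{\kappa\cN_+}\wN e^{-\kappa\cN_+}\bigr)e^{\kappa\cN_+}$ and noting that conjugation by $e^{\kappa\cN_+}$ multiplies each $b_p^\sharp$ by $e^{\pm\kappa}$, so every monomial in \eqref{eq:wN-1} picks up at most a factor $e^{2\kappa}$; the remainder terms are handled by the same formula together with Lemma~\ref{lemma:eN-dp}.

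The main technical obstacle is the uniform $s\in[0,1]$ treatment of the $d_p^{(s)}$-integrals: the remainders are themselves defined by the infinite series \eqref{def:d-2} and each term involves $\Pi^{(1)}$- and $\Pi^{(2)}$-operators of arbitrary order in $\eta$. However, Lemma~\ref{lemma:eN-dp} has been formulated precisely so that both single and double commutators of $d_p^{(s)}$ and $\check d_x^{(s)}\check d_y^{(s)}$ with $e^{\kappa\cN_+}$ produce an extra factor $\kappa$, respectively $\kappa^2$, compared with the static estimates, while retaining the $N^{-1}$ factor and $\ell^2$-summability in $p$ via $\|\eta\|$ and $|\eta_p|$. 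This reduces the verification of \eqref{eq:single-wN}--\eqref{eq:double-wN} to a careful but essentially mechanical Cauchy--Schwarz in $p$ and bookkeeping of the $(\cN_++1)^{1/2}$ powers, completing the plan.
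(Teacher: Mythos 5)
Your plan follows essentially the same route as the paper's own proof: it starts from the explicit expansion \eqref{eq:wN-1}, controls the quadratic $b$-terms via \eqref{eq:bounds-b1}--\eqref{eq:bounds-b3} and \eqref{eq:bounds-sigma}, obtains the $j$-dependent asymmetric bound by commuting powers of $(\cN_++1)$ through the ladder operators, and treats the $d_p^{(s)}$-remainders and all single/double commutators with Lemma~\ref{lemma:double} and Lemma~\ref{lemma:eN-dp}, exactly as the paper does (including the conjugation identity $e^{\kappa\cN_+}\wN e^{-\kappa\cN_+}$ used for \eqref{eq:bound-tildeN-V}). No substantive difference or gap beyond the level of sketchiness already present in the paper's argument.
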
 

\begin{remark}
\label{remark:wN-norm}
Note that since $\eta \in \ell^2( \Lambda_+^*)$ the estimates  \eqref{eq:bounds-b1}-\eqref{eq:bounds-b3} imply for any $\xi \in \mathcal{F}_{\perp u_0}^{\leq N}$ 
\begin{align}
\| \wN \xi \| \leq C \; \| ( \cN_+ + 1 ) \xi \|
\end{align}
and moreover by Lemma \ref{lemma:double} 
\begin{align}
\| \left[ e^{\kappa \mathcal{N}_+}, \; \wN \right] \xi \| \leq C e^{\kappa} \sinh( \kappa ) \; \| ( \mathcal{N}_+ + 1 ) e^{\kappa \cN_+} \xi \| \; . 
\end{align}
\end{remark}

\begin{proof} From \eqref{eq:wN-1} and \eqref{eq:comm-b} we get 
\begin{align}
\label{eq:wN-11}
\langle & \xi_1, \; \wN \xi_2 \rangle \notag \\
=&  \sum_{p \in \Lambda_+^*}   \left( \gamma_p^2 + \sigma_p^2 \right)  \langle \xi_1 ,  \; b_p^*b_p  \xi_2 \rangle + \sum_{p \in \Lambda_+^*}   \sigma_p \gamma_p  \langle \xi_1, \; \left( b_p^*b^*_{-p} + b_p b_{-p} \right) \xi_2  \rangle  + \| \sigma \|_{\ell^2}^2  \langle\xi_1, \xi_2 \rangle  \notag \\
&+\sum_{p \in \Lambda_+^*}  \eta_p \int_0^1 ds \; \langle \xi_1, \left( (\gamma_p^{(s)} b_p + \sigma_p^{(s)} b_{-p}^* ) d_p^{(s)} + {\rm h.c.} \right) \xi_2 \rangle \notag \\
&+ \sum_{p \in \Lambda_+^*} \eta_p \int_0^1 ds \; \langle \xi_1, ( d_p^{(s)} d_p^{(s)} +{\rm h.c.} ) \xi_2 \rangle \; . 
\end{align}
Inserting $( \mathcal{N}_++1)^{-j}( \mathcal{N}_+ + 1)^j $ with $j \in \mathbb{N}_0$, we furthermore find with the commutation relations \eqref{eq:comm-b} 
\begin{align}
\langle \xi_1, &\; \wN \xi_2 \rangle =  \| \sigma \|_{\ell^2}^2  \langle\xi_1, \xi_2 \rangle +   \sum_{p \in \Lambda_+^*}   \left( \gamma_p^2 + \sigma_p^2 \right)  \langle \xi_1 , ( \mathcal{N}_++1)^{-j} \; b_p^*b_p  ( \mathcal{N}_++1)^{j} \xi_2 \rangle \notag \\
&\quad + \sum_{p \in \Lambda_+^*}   \sigma_p \gamma_p  \langle \xi_1, \; \left( ( \mathcal{N}_++1)^{-j}b_p^*b^*_{-p} ( \mathcal{N}_++3)^{j}+  ( \mathcal{N}_++1)^{-j}b_p b_{-p}  ( \mathcal{N}_+ - 1)^{j}\right) \xi_2 \rangle  \notag \\
&\quad +\sum_{p \in \Lambda_+^*}  \eta_p \int_0^1 ds \; \langle \xi_1, \; \left( (\gamma_p^{(s)} b_p + \sigma_p^{(s)} b_{-p}^* ) (\cN_+ + 1)^{-j+j}d_p^{(s)} + {\rm h.c.} \right) \xi_2 \rangle \notag \\
&\quad+ \sum_{p \in \Lambda_+^*} \eta_p \int_0^1 ds \; \langle \xi_1, ( d_p^{(s)} (\cN_+ + 1)^{-j+j} d_p^{(s)} +{\rm h.c.} )\xi_2 \rangle  \; .
\end{align}
Now we estimate the terms of the r.h.s. With \eqref{eq:bounds-b1}-\eqref{eq:bounds-b3}, \eqref{eq:bounds-sigma} and \eqref{eq:estimates-dp}-\eqref{eq:estimates-dd}, we find 
\begin{align}
\vert \langle \xi_1, \; \wN \xi_2 \rangle \vert \leq \| ( \mathcal{N} +1)^{(1 -j)/2} \xi_1 \| \; \| ( \mathcal{N} +1)^{(j+1)/2} \xi_2 \| \; . 
\end{align}
and moreover with 
\begin{align}
e^{\kappa \cN_+}  & \wN e^{-\kappa \cN_+} \notag \\
=&  \sum_{p \in \Lambda_+^*} \left[ \left( \gamma_p^2 + \sigma_p^2 \right) b_p^*b_p + e^{2\kappa} \sigma_p \gamma_p \left( b_p^*b_{-p}^* +e^{-2 \kappa} b_p b_{-p} \right) \right] + \| \sigma \|_{\ell^2}^2 \notag \\
&+\sum_{p \in \Lambda_+^*}  \eta_p \int_0^1 ds \;  e^{\kappa \cN_+}\left( (\gamma_p^{(s)} b_p + \sigma_p^{(s)} b_{-p}^* )  d_p^{(s)} + {\rm h.c.} \right)e^{-\kappa \cN_+} \notag \\
&+ \sum_{p \in \Lambda_+^*} \eta_p \int_0^1 ds \;  e^{-\kappa \cN_+}  ( d_p^{(s)} d_p^{(s)} +{\rm h.c.} ) e^{-\kappa \cN_+}  \; 
\end{align}
and Lemma \ref{lemma:eN-dp} the second bound from \eqref{eq:bounds-wN}.

For the remaining estimates \eqref{eq:single-wN}, \eqref{eq:double-wN} we first observe with Lemma \ref{lemma:double} that 
\begin{align}
\langle \xi_1, & \; \left[ e^{\kappa \mathcal{N}_+}, \; \wN \right]  \xi_2 \rangle \notag \\
=&  2\sinh ( \kappa ) \sum_{p \in \Lambda_+^*}   \sigma_p \gamma_p  \langle \xi_1, \; \left( e^{\kappa}b_p^*b^*_{-p} + e^{- \kappa}b_p b_{-p} \right) e^{\kappa \cN_+} \xi_2 \rangle\notag \\
&+\sum_{p \in \Lambda_+^*}  \eta_p \int_0^1 ds \;  \langle \xi_1, [ e^{\kappa \cN_+}, \left( (\gamma_p^{(s)} b_p + \sigma_p^{(s)} b_{-p}^* )  d_p^{(s)} + {\rm h.c.} \right]  \xi_2 \rangle \notag \\
&+ \sum_{p \in \Lambda_+^*} \eta_p \int_0^1 ds \; \langle \xi_1, [ e^{\kappa \cN_+}, ( d_p^{(s)} d_p^{(s)} +{\rm h.c.} ) ] \xi_2 \rangle \; 
\end{align}
and for the last two lines 
\begin{align}
\left[ e^{\kappa \cN_+}, b_p^{\sharp_{1}} d_{\alpha p }^{\sharp_2} \right] =& \left[ e^{\kappa \cN_+}, b_p^{\sharp_{1}} \right]d_{\alpha p }^{\sharp_2} + b_p^{\sharp_{1}} \left[ e^{\kappa \cN_+},  d_{\alpha p }^{\sharp_2} \right]\notag \\
=& (2 \sinh (\kappa/2) e^{ \beta \kappa/2} + 1)  b_p^{\sharp_{1}} \left[ e^{\kappa \cN_+},  d_{\alpha p }^{\sharp_2} \right] + 2 \sinh (\kappa/2) e^{ \beta \kappa/2}   b_p^{\sharp_{1}} d_{\alpha p }^{\sharp_2}   e^{\kappa \cN_+} 
\end{align}
with $\sharp_1, \sharp_2 \in \lbrace \cdot, *\rbrace$ and either $\sharp_1 = *, \sharp_2 = \cdot $ and $\alpha = 1, \beta =1$ or $\sharp_1 = \sharp_2$ and $\alpha = -1$ and $\beta =1$ if $\sharp_1 = *$ and $\beta =-1$ otherwise. Similarly
\begin{align}
\langle \xi_1,  &\; \left[ e^{\kappa \mathcal{N}_+}, \; \wN \right]  \xi_2 \rangle \notag \\
=&  2 \sinh ( \kappa ) \sum_{p \in \Lambda_+^*}   \sigma_p \gamma_p  \langle \xi_1, e^{\kappa \cN_+} \; \left( e^{-\kappa}b_p^*b^*_{-p} + e^{ \kappa}b_p b_{-p} \right)  \xi_2 \rangle\notag \\
&+\sum_{p \in \Lambda_+^*}  \eta_p \int_0^1 ds \;  \langle \xi_1, \left[ e^{\kappa \cN_+}, \big( (\gamma_p^{(s)} b_p + \sigma_p^{(s)} b_{-p}^* )  d_p^{(s)} + {\rm h.c.}\big)  \right]  \xi_2 \rangle \notag \\
&+ \sum_{p \in \Lambda_+^*} \eta_p \int_0^1 ds \; \langle \xi_1, [ e^{\kappa \cN_+}, ( d_p^{(s)} d_p^{(s)} +{\rm h.c.} ) ] \xi_2 \rangle \; 
\end{align}
and for the last line 
\begin{align}
\left[ e^{\kappa \cN_+}, b_p^{\sharp_{1}} d_{\alpha p }^{\sharp_2} \right] =& \left[ e^{\kappa \cN_+}, b_p^{\sharp_{1}} \right]d_{\alpha p }^{\sharp_2} + b_p^{\sharp_{1}} \left[ e^{\kappa \cN_+},  d_{\alpha p }^{\sharp_2} \right]\notag \\
=& 2 \sinh (\kappa/2) e^{ -\beta \kappa/2} e^{\kappa \cN_+} b_p^{\sharp_{1}} d_{\alpha p }^{\sharp_2}  +  e^{ -\beta \kappa} e^{\kappa \cN_+}b_p^{\sharp_{1}} e^{-\kappa \cN_+} \left[ e^{\kappa \cN_+} , d_{\alpha p }^{\sharp_2}   \right]
\end{align}
with $\sharp_1, \sharp_2 \in \lbrace \cdot, *\rbrace$ and either $\sharp_1 = *, \sharp_2 = \cdot $ and $\alpha = 1, \beta =1$ or $\sharp_1 = \sharp_2$ and $\alpha = -1$ and $\beta =1$ if $\sharp_1 = *$ and $\beta =-1$ otherwise.
Moreover, 
\begin{align}
\langle \xi_1,  &\; \left[ e^{\kappa \mathcal{N}_+}, \; \left[ e^{\kappa \mathcal{N}_+}, \; \wN \right]\right]  \xi_2 \rangle\notag \\
 =& 4 \sinh^2 ( \kappa ) \sum_{p \in \Lambda_+^*}   \sigma_p \gamma_p  \langle \xi_1, \;e^{\kappa \cN_+} \left( b_p^*b^*_{-p} + b_p b_{-p} \right) e^{\kappa \cN_+} \xi_2 \rangle \notag \\
&+\sum_{p \in \Lambda_+^*}  \eta_p \int_0^1 ds \;  \langle \xi_1, [ e^{\kappa \cN_+}, \left[ e^{\kappa \cN_+}, \big( (\gamma_p^{(s)} b_p + \sigma_p^{(s)} b_{-p}^* )  d_p^{(s)} + {\rm h.c.} \big) ]\right]  \xi_2 \rangle \notag \\
&+ \sum_{p \in \Lambda_+^*} \eta_p \int_0^1 ds \; \langle \xi_1, [ e^{\kappa \cN_+}, [e^{\kappa \cN_+},  ( d_p^{(s)} d_p^{(s)} +{\rm h.c.} ) ]] \xi_2 \rangle \; 
\end{align}
and for the last line 
\begin{align}
\label{eq:bd-dc}
 & \left[ e^{\kappa \cN_+}, \left[ e^{\kappa \cN_+}, b_p^{\sharp_{1}} d_{\alpha p }^{\sharp_2} \right] \right] \notag \\
&= \left[ e^{\kappa \cN_+}, \left[ e^{\kappa \cN_+}, b_p^{\sharp_{1}} \right] \right] d_{\alpha p }^{\sharp_2} + b_p^{\sharp_{1}}\left[ e^{\kappa \cN_+}, \left[ e^{\kappa \cN_+},  d_{\alpha p }^{\sharp_2} \right] \right]  + 2 \left[ e^{\kappa \cN_+}, b_p^{\sharp_{1}} \right] \left[ e^{\kappa \cN_+},  d_{\alpha p }^{\sharp_2} \right] \notag \\
&= \left[ e^{\kappa \cN_+}, \left[ e^{\kappa \cN_+}, b_p^{\sharp_{1}} \right] \right] e^{-\kappa \cN_+} \left( e^{\kappa \cN_+} d_{\alpha p }^{\sharp_2} e^{-\kappa \cN_+} \right)  + e^{\beta \kappa}e^{\kappa \cN_+} b_p^{\sharp_{1}} e^{- \kappa \cN_+} \left[ e^{\kappa \cN_+}, \left[ e^{\kappa \cN_+},  d_{\alpha p }^{\sharp_2} \right] \right]  \notag \\
&\quad + 2 \left[ e^{\kappa \cN_+}, b_p^{\sharp_{1}} \right] \left[ e^{\kappa \cN_+},  d_{\alpha p }^{\sharp_2} \right] 
\end{align}
with $\sharp_1, \sharp_2 \in \lbrace \cdot, *\rbrace$ and either $\sharp_1 = *, \sharp_2 = \cdot $ and $\alpha = 1, \beta =1$ or $\sharp_1 = \sharp_2$ and $\alpha = -1$ and $\beta =1$ if $\sharp_1 = *$ and $\beta =-1$ otherwise.
Thus with similar ideas as before, we conclude by \eqref{eq:bounds-b1}-\eqref{eq:bounds-b3}, \eqref{eq:bounds-sigma} and Lemma \ref{lemma:eN-dp} with \eqref{eq:single-wN} resp. \eqref{eq:double-wN}. 
\end{proof}

\section{Proof of Proposition \ref{prop:G}}
\label{sec:proof-prop-G}

In this section we will analyze properties of the single contributions $\mathcal{G}_N^{(j)}$ of the excitation Hamiltonian $\cG_N$ in \eqref{def:G-split}, and then conclude Proposition \ref{prop:G} at the end. % in Section \ref{subsec:proof-G}. 

%\subsection{Analysis of $\mathcal{G}_N$}
%\label{subsec:proof-prop-Gj} 
%
%Next we consider the single contributions $\mathcal{G}_N^{(j)}$ of the excitation Hamiltonian $\mathcal{G}_N$ separately and show that the terms themselves and their double commutators with respect to the exponential of $\cN_+$ are bounded from below by multiple of $\cN_+$ resp.$\mathcal{H}_N$ (i.e. that \eqref{eq:G-lb} and \eqref{eq:G-dc} of Proposition \ref{prop:G} hold for every single contribution of $\mathcal{G}_N^{(j)}$). 

For our analysis it will be useful to use the expression of $\mathcal{V}_N$ in \eqref{def:HN} in position space, 
\begin{align}
\label{eq:VN-position}
\mathcal{V}_N  = \frac{1}{N} \int_{\Lambda \times \Lambda} dxdy \;  v_N (x-y) a_x^*a_y^*a_xa_y, \quad v_N (x) = N^{3} v (N x). 
\end{align}
%For this we consider all the contributions $\mathcal{G}_N^{(j)}$ separately. 

\subsection{Analysis of $\mathcal{G}_N^{(0)}$} With \eqref{eq:action-bogo} we obtain 
\begin{align}
\label{def:G0}
\mathcal{G}_N^{(0)} = \mathcal{C}_{\mathcal{G}_N^{(0)}} + \mathcal{G}_N^{(0,1)}
\end{align}
where $\mathcal{G}_N^{(0)}$ is defined by \eqref{def:G-split} and $ \mathcal{C}_{\mathcal{G}_N^{(0)}}$ is a constant term given by 
\begin{align}
\mathcal{C}_{\mathcal{G}_N^{(0)}} = \frac{( N-1 )}{2}\widehat{v} (0) 
\end{align}
and the remaining terms reads with \eqref{def:wN} 
\begin{align}
\label{def:G0} 
\mathcal{G}_N^{(0,1)} =& - \frac{( N-1 )}{ 2N} \wN  + \frac{\widehat{v}(0)}{2N}  \wN ( N- \wN ) \; . 
 \end{align}

%%%%%%%%%%%%%%%%%%

\begin{lemma}
\label{lemma_G0}
Let $\mathcal{G}_N^{(0)}$ be given by \eqref{def:G0}. Then there exists $C>0$ independent of $N$ such that 
\begin{align}
\label{eq:bound-G0}
\mathcal{G}_N^{(0)} - \mathcal{C}_{\mathcal{G}_N^{(0)}} \geq -  C (\mathcal{N}_+ + 1) \; 
\end{align}
as operator inequality on $\mathcal{F}_{\perp u_0}^{\leq N}$.  Furthermore let $\kappa>0$ be sufficiently small, then there exists $C>0$ such that for any $\psi \in \mathcal{F}_{\perp u_0}^{\leq N}$ we have 
\begin{align}
\label{eq:bound-G0-dc}
\vert\langle \psi, \;  \left[ e^{\kappa \mathcal{N}_+}, \;  \left[ e^{\kappa \mathcal{N}_+}, \; \mathcal{G}_N^{(0)} \right]\right] \psi \rangle \vert \leq C\kappa^2 \langle \psi, \; (\mathcal{N} +1) \psi \rangle \; . 
\end{align} 
\end{lemma}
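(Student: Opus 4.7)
Both assertions reduce to the bounds on $\wN$ collected in Lemma~\ref{lemma:bounds-wN}, exploiting the simple polynomial structure of $\mathcal{G}_N^{(0)}$ in $\wN$. My strategy is to isolate the constant $\mathcal{C}_{\mathcal{G}_N^{(0)}}$ so that $\mathcal{G}_N^{(0)} - \mathcal{C}_{\mathcal{G}_N^{(0)}}$ becomes an affine combination of $\wN$ and $\wN^{\,2}/N$ with bounded, $N$-independent coefficients, and then attack the two statements separately.

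For the lower bound \eqref{eq:bound-G0} the decisive observation is that $e^{B(\eta)}$ is unitary on the truncated Fock space $\mathcal{F}_{\perp u_0}^{\leq N}$, so $\wN$ is unitarily equivalent to $\cN_+$ and in particular $0 \leq \wN \leq N$ as operators. Combined with $\widehat v(0) \geq 0$, this renders the quadratic piece $\tfrac{\widehat v(0)}{2N}\wN(N-\wN)$ manifestly non-negative and it may simply be dropped. The remaining linear contribution $-\tfrac{N-1}{2N}\wN$ is controlled by the first bound of Lemma~\ref{lemma:bounds-wN} with $j=0$ applied to $\xi_1=\xi_2=\psi$, which yields $\langle \psi, \wN \psi\rangle \leq C \langle \psi,(\cN_+ + 1)\psi\rangle$ and hence \eqref{eq:bound-G0}.

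For the double commutator \eqref{eq:bound-G0-dc} the scalar $\mathcal{C}_{\mathcal{G}_N^{(0)}}$ drops out. The linear-in-$\wN$ part is handled directly by the estimate \eqref{eq:double-wN}. For the quadratic part $\tfrac{\widehat v(0)}{2N}\wN^{\,2}$, which arises after expanding $\wN(N - \wN)$, I would apply the twofold Leibniz identity
\begin{align*}
[e^{\kappa\cN_+},[e^{\kappa\cN_+},\wN^{\,2}]]
= [e^{\kappa\cN_+},[e^{\kappa\cN_+},\wN]]\,\wN
+ 2\,[e^{\kappa\cN_+},\wN]\,[e^{\kappa\cN_+},\wN]
+ \wN\,[e^{\kappa\cN_+},[e^{\kappa\cN_+},\wN]],
\end{align*}
and pair each summand with the single/double commutator bounds \eqref{eq:single-wN}--\eqref{eq:double-wN} together with the trivial operator estimate $\wN \leq N$ valid on $\mathcal{F}_{\perp u_0}^{\leq N}$. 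The explicit $1/N$ prefactor then compensates the stray factor of $\wN$, and an extra use of $\cN_+ \leq N$ converts any leftover $(\cN_+ + 1)^{2}/N$ back into $(\cN_+ + 1)$, producing the advertised bound $C\kappa^2 \langle \psi, (\cN_+ + 1)\psi\rangle$.

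Since Lemma~\ref{lemma:bounds-wN} is tailor-made for exactly this type of manipulation, I do not expect any genuine obstacle. The only mildly delicate step is the middle term $[e^{\kappa\cN_+},\wN]^2$, where one must choose the exponent $j$ in \eqref{eq:single-wN} so as to split the $(\cN_+ + 1)^{1/2}$-weights symmetrically between the two vectors and use the identity \eqref{eq:bound-tildeN-V} to commute a leftover $e^{\kappa\cN_+}$ past $\wN$ with harmless $e^{O(\kappa)}$ loss.
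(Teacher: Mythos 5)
Your proposal is correct and follows essentially the same route as the paper: the lower bound comes from $0\le\wN\le N$ together with \eqref{eq:bounds-wN}, and the double commutator is expanded by the Leibniz rule into $[e^{\kappa\cN_+},[e^{\kappa\cN_+},\wN]]$, $\wN\,[e^{\kappa\cN_+},[e^{\kappa\cN_+},\wN]]$ and $[e^{\kappa\cN_+},\wN]^2$ terms, each controlled by \eqref{eq:single-wN}--\eqref{eq:double-wN} with the explicit $1/N$ prefactor absorbing the extra factor of $\wN$ (resp. $\cN_+\le N$), exactly as in the paper's proof (which, like yours, in fact produces right-hand sides carrying $e^{\kappa\cN_+}$-weighted vectors, consistent with \eqref{eq:G-dc}).
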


%%%%%%%%%%%%%%%%%%%

\begin{proof}
The first estimate \eqref{eq:bound-G0} immediately follows from the observation $\mathcal{N}_+\leq N$ on $\mathcal{F}_{\perp u_0}^{\leq N}$ and Lemma \ref{lemma:bounds-wN}. For the second bound \eqref{eq:bound-G0-dc}, we find from the properties of the commutator and by definition \eqref{def:G0} of $\mathcal{G}_N^{(0)}$ that 
\begin{align}
 & \left[ e^{\kappa \mathcal{N}_+}, \; \left[ e^{\kappa \mathcal{N}_+}, \; \mathcal{G}_N^{(0)} \right]\right]  \notag \\
  & \quad = \frac{(N-1)}{2N} \widehat{v}(0) \left[ e^{\kappa \mathcal{N}_+}, \;  \left[ e^{\kappa \mathcal{N}_+}, \; \wN \right]\right] + \frac{ \widehat{v}(0)}{2} \left[ e^{\kappa \mathcal{N}_+}, \; \left[ e^{\kappa \mathcal{N}_+}, \; \wN \right]\right]   \notag \\
 &\quad \quad + \frac{ \widehat{v}(0)}{2N} \left[ e^{\kappa \mathcal{N}_+}, \; \left[ e^{\kappa \mathcal{N}_+}, \; \wN \right]\right]  \wN + \frac{ \widehat{v}(0)}{2N} \wN \left[ e^{\kappa \mathcal{N}_+}, \; \left[ e^{\kappa \mathcal{N}_+}, \; \wN \right]\right]  \notag \\
 &+ \frac{ \widehat{v}(0)}{N} \left[ e^{\kappa \mathcal{N}_+}, \;  \wN \right] \left[ e^{\kappa \mathcal{N}_+}, \;  \wN \right]   \; . 
\end{align}
Lemma \ref{lemma:bounds-wN} shows that for any $\xi \in \mathcal{F}_{\perp u_0}^{\leq N}$
\begin{align}
\vert \langle &  \xi, \;  \left[ e^{\kappa \mathcal{N}_+}, \; \left[ e^{\kappa \mathcal{N}_+}, \; \mathcal{G}_N^{(0)} \right]\right]  \xi \rangle \vert \notag \\
\leq&  C \kappa^2 \| ( \mathcal{N}_+ + 1)^{1/2} \xi \|^2  + \frac{C}{N} \kappa^2 \| (\mathcal{N}_+ +1)^{3/2} e^{\kappa \cN_+} \xi \| \| (\mathcal{N}_+ +1)^{1/2} e^{\kappa \cN_+} \wN \xi \| \notag \\
&+ \frac{C}{N} \kappa^2 \| (\mathcal{N}_+ +1) e^{\kappa \cN_+} \xi \|^2 \; . 
\end{align}
%Since 
%\begin{align}
%e^{\kappa \cN_+ } \wN =& \left[ \sum_{p \in \Lambda_+^*} \left[ \left( %\gamma_p^2 + \sigma_p^2 \right)  b_p^*b_p +  \sigma_p \gamma_p \left(e^{2 \kappa} b_p^*b_{-p}^* + e^{-2 \kappa}b_p b_{-p} \right) \right] + \| \sigma \|_{\ell^2}^2  \right] e^{\kappa \cN_+} \notag \\
%&+  \Big[ \sum_{p \in \Lambda_+^*}   (e^{\kappa}\gamma_p b_p^* + e^{- \kappa}\sigma_p b_{-p}) \left( e^{\kappa \cN_+ }d_{-p} e^{- \kappa \cN_+} \right)    \notag \\
%& \quad  +\left( e^{\kappa \cN_+ }d_{p}^* e^{- \kappa \cN_+} \right) ( e^{-\kappa}\gamma_p b_{-p} + e^{\kappa} \sigma_p b_p^* +\left( e^{\kappa \cN_+ }d_{p} e^{- \kappa \cN_+} \right) ) \Big] e^{\kappa \cN_+}
%\end{align}
% we conclude similarly with the bounds \eqref{eq:bounds-a1}-\eqref{eq:bounds-a3}, \eqref{eq:bounds-sigma} and Lemma \ref{lemma:eN-dp} that $\| (\mathcal{N}_+ +1)^{1/2} e^{\kappa \cN_+} \wN \xi \|  \leq e^{2\kappa} \| (\mathcal{N}_+ +1)^{1/2} \wN  e^{\kappa \cN_+} \xi \|$ and thus with Lemma \ref{lemma:double}
%\begin{align}
%\| (\mathcal{N}_+ +1)^{1/2} e^{\kappa \cN_+} \wN \xi \| \leq C e^{2 \kappa} \|  (\mathcal{N}_+ +1)^{3/2} e^{\kappa \cN_+} \xi \| \; . 
% \end{align}
%We recall that $\xi \in \mathcal{F}_{\perp u_0}^{\leq N}$ so that we find for sufficiently small $\kappa>0$ such that $\sinh ( \kappa ) \leq C \kappa $
% \begin{align}
%\vert \langle \xi, \;  \left[ e^{\kappa \mathcal{N}_+}, \; \left[ e^{\kappa \mathcal{N}_+}, \; \mathcal{G}_N^{(0)} \right]\right]  \xi \rangle \vert \leq C \kappa^2 \| ( \cN_+ + 1)^{1/2} e^{\kappa \cN_+} \xi \|^2 \; . 
%\end{align}  
\end{proof}

\subsection{Analysis of $\mathcal{G}_N^{(2)}$.} Recalling the definition \eqref{def:Lj} of $\mathcal{L}_N^{(2)}$ we compute in this section 
\begin{align}
\mathcal{G}_N^{(2)} =& e^{B ( \eta)} \mathcal{L}_N^{(2)} e^{- B( \eta)} = e^{B ( \eta)} \sum_{p \in \Lambda_+^*} p^2 a_p^* a_p e^{- B ( \eta)} \notag \\
&+ e^{B ( \eta)} \sum_{p \in \Lambda_+^*} \widehat{v} (p/N ) \left[ b_p^*b_p - \frac{1}{N} a_p^*a_p \right] e^{-B ( \eta)} \notag \\
&+ e^{B ( \eta )} \frac{1}{2} \sum_{p \in \Lambda_+^*} \widehat{v}(p/N ) \left[ b_p^*b_{-p}^* + b_p b_{-p} \right] e^{- B ( \eta)} \; . 
\end{align}
For the last line, we use the generalized Bogoliubov transform's approximate action on modified creation and annihilation operators \eqref{eq:action-bogo}, while for the terms of the first and second line formulated w.r.t. to standard creation and annihilation operators we use arguments similar as in the proof of Lemma \ref{lemma:bounds-wN} to arrive at 
\begin{align}
\label{def:G2}
\mathcal{G}_N^{(2)} - \frac{1 }{2N} \sum_{p,q \in \Lambda_+^*} \widehat{v} ( (p-q)/N)  \eta_q   \left[ b_p^*b_{-p}^* + b_p b_{-p} \right]  = \mathcal{C}_{\mathcal{G}_N^{(2)}} + \widetilde{\mathcal{G}}_N^{(2)}
\end{align}
where $\mathcal{C}_{\mathcal{G}_N^{(2)}}$ is a constant term given by 
\begin{align}
\mathcal{C}_{\mathcal{G}_N^{(2)}} := \sum_{p \in \Lambda_+^*} \left[ \left(p^2 +  \widehat{v} (  p/N ) \right) \sigma_p^2 +  \widehat{v}(p/N ) \sigma_p \gamma_p \right] 
\end{align}
and the remaining term is given by the sum $\widetilde{\mathcal{G}}_N^{(2)} =\sum_{j=1}^4 \mathcal{G}_N^{(2,j)} $ of 
\begin{align}
\mathcal{G}_N^{(2,1)} 
=& \sum_{p \in \Lambda+^* } F_p b_p^*b_p + \frac{1}{2}\sum_{p \in \Lambda_+^*} G_p \left[ b_p^*b_{-p}^* + b_p b_{-p} \right] \notag \\
&+ \frac{1 }{2N} \sum_{p,q \in \Lambda_+^*} \widehat{v} ( (p-q)/N)  \eta_q  \left[ \gamma_p^2 -1 + \sigma_p^2 \right] \left[ b_p^*b_{-p}^* + b_p b_{-p} \right] \notag \\
\mathcal{G}_N^{(2,2)} 
=&  \sum_{p \in \Lambda_+^*}\widehat{v} (  p/N ) \left[ ( \gamma_p b_p^* + \sigma_p b_{-p}) d_p + {\rm h.c.} \right] + \sum_{p \in \Lambda_+^*} \widehat{v}( p/N) d_p^* d_p   \notag \\
\mathcal{G}_N^{(2,3)} =&  \sum_{p \in \Lambda_+^*} \widehat{v} ( p/N ) \left[ ( \gamma_p b_p^* + \sigma_p b_{-p}) d_{-p}^* + d_{p}^*( \gamma_p b_{-p}^* + \sigma_p b_{p}) + d_p^*d_{-p}^* \right]+ {\rm h.c.} \notag \\
\mathcal{G}_N^{(2,4)} =& \frac{1}{N} \sum_{p \in \Lambda_+^*} \widehat{v} ( p/N) \eta_p \int_0^1 ds \; \left[   (\gamma_p^{(s)} b_p^* + \sigma_{p}^{(s)} b_{-p} ) d_p^{(s)}+ {\rm h.c.} \right]  \notag \\
&+  \frac{1}{N}\sum_{p \in \Lambda_+^*} \widehat{v} ( p/N) \eta_p \int_0^1 ds (d_p^{(s)} )^* d_p^{(s)} +\frac{1}{N} \sum_{p \in \Lambda_+^*} \widehat{v}(p/N) (b_p^*b_{p} - a_p^*a_p) \; \notag \\
\mathcal{G}_N^{(2,5)} =& \sum_{p \in \Lambda_+^*} p^2 \eta_p \int_0^1 ds \; \left[   (\gamma_p^{(s)} b_p^* + \sigma_{p}^{(s)} b_{-p} ) d_p^{(s)}+ {\rm h.c.} \right]  +   \sum_{p \in \Lambda_+^*} p^2 \eta_p \int_0^1 ds (d_p^{(s)} )^* d_p^{(s)}
\end{align}
where we introduced the notation
\begin{align}
F_p =&  \left[  p^2 + \widehat{v} (p/N) \right]  \left[ \gamma_p^2 + \sigma_p^2 \right] + 2 \gamma_p \sigma_p \widehat{v} (p/N), \notag \\
G_p =&   \left[ \gamma_p^2 + \sigma_p^2 \right] \Big( \widehat{v} ( p/N ) -   \frac{1}{2N} \sum_{q \in \Lambda_+^*} \widehat{v} ( (p-q)/N)  \eta_q \Big)  + 2 \gamma_p \sigma_p \left[ p^2 + \widehat{v} ( p/N ) \right] 
\end{align}
and $\sigma_p^{(s)} = \sinh (s \eta_p), \gamma_p^{(s)} = \cosh( s \eta_p)$, and the operator $d_p^{(s)}$ is defined by \eqref{def:d-2} where $\eta_p$ is replaced by $s \eta_p$. 
%%%%%%%%%%%%%%%%%

\begin{lemma}
\label{lemma:G2}
Let $\widetilde{\mathcal{G}}_N^{(2)}$ be given by \eqref{def:G2}. Then there exists $\varepsilon, C_\varepsilon >0$ independent of $N$ such that 
\begin{align}
\label{eq:G2-lb}
\widetilde{\mathcal{G}}_N^{(2)}  \geq \frac{1}{2} \mathcal{K} -  C_\varepsilon  (\mathcal{N}_+ +1 ) - \varepsilon \mathcal{V}_N 
\end{align}
as operator inequality on $\mathcal{F}_{\perp u_0}^{\leq N}$.  Furthermore let $\kappa>0$ be sufficiently small, then there exists $C>0$ such that for any $\psi \in \mathcal{F}_{\perp u_0}^{\leq N}$ we have 
\begin{align}
\label{eq:G2-dc}
\vert \langle \psi, \;  \left[ e^{\kappa \mathcal{N}_+}, \;  \left[ e^{\kappa \mathcal{N}_+}, \; \widetilde{\mathcal{G}}_N^{(2)} \right]\right] \psi \rangle \vert  \leq C \kappa^2 \langle \psi, \; \left[ (\mathcal{N} +1) + \mathcal{V}_N \right] \psi \rangle \; . 
\end{align}
\end{lemma}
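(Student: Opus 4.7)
The plan is to analyze the five summands $\mathcal{G}_N^{(2,j)}$, $j=1,\dots,5$, of $\widetilde{\mathcal G}_N^{(2)}$ separately. The crux will be $\mathcal{G}_N^{(2,1)}$, where the scattering equation \eqref{eq:id-eta} produces a cancellation that turns the pairing quadratic form into a relatively compact perturbation of the diagonal kinetic form $\sum_p p^2 b_p^\ast b_p$; the remaining four summands are genuine error terms of smaller order, to be bounded via the remainder estimates \eqref{eq:estimates-dp}--\eqref{eq:estimates-db} combined with Cauchy--Schwarz and the decay $|\eta_p|\le Cp^{-2}$ from Lemma \ref{lemma:scattering}.

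First, for $\mathcal{G}_N^{(2,1)}$, I would use the splittings \eqref{eq:sigma-splitting} with the uniform bounds \eqref{eq:bounds-sigma} to show that $F_p = p^2 + O(1)$, while after collecting terms
\[
G_p \;=\; \Big(\widehat v(p/N) + \frac{1}{N}\sum_{q\in\Lambda^*} \widehat v\big((p-q)/N\big)\eta_q\Big) \;+\; 2 p^2 \eta_p \;+\; R_p,
\]
with $R_p = O(1)$ collecting the higher-order corrections in $\alpha_p,\beta_p$ together with the $(\gamma_p^2+\sigma_p^2-1)$-factor of the pairing introduced in \eqref{def:G2}. The scattering identity \eqref{eq:id-eta}, multiplied by $2N$, equates the first parenthesis to $-2p^2\eta_p + O\big(\widehat\chi_\ell(p) + (\widehat\chi_\ell * \eta)(p)\big)$, so the $2p^2\eta_p$-term cancels and $G_p$ lies in $\ell^2(\Lambda_+^*)$ uniformly in $N$. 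Cauchy--Schwarz then absorbs $\tfrac12 G_p(b_p^\ast b_{-p}^\ast + b_p b_{-p})$ into a small fraction of $\sum_p F_p b_p^\ast b_p$ up to a constant controlled by $\sum_p |G_p|^2/p^2 < \infty$, giving $\mathcal{G}_N^{(2,1)} \ge \tfrac34 \mathcal K - C(\mathcal N_+ + 1)$.

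Next, for the error pieces $\mathcal{G}_N^{(2,j)}$ with $j\ge 2$, I would apply Cauchy--Schwarz together with \eqref{eq:estimates-dp}--\eqref{eq:estimates-dp*}, the bounds \eqref{eq:bounds-sigma}, and $\|\widehat v(\cdot/N)\|_{\ell^\infty}\le \|v\|_{L^1}$, producing $\pm\mathcal{G}_N^{(2,j)} \le \varepsilon\,\mathcal K + C_\varepsilon(\mathcal N_+ + 1)$ once $\ell$ is chosen small enough that $\|\eta\|$ is sufficiently small; the factor $1/N$ in $\mathcal{G}_N^{(2,4)}$ trivializes its contribution, and $\mathcal{G}_N^{(2,5)}$ uses $|p^2\eta_p|\le C$ together with the $N^{-1}$ gain in \eqref{eq:estimates-dp}. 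For those terms where $\widehat v(p/N)$ combines with a $d$-remainder to produce a potential-like kernel, I would pass to position space via \eqref{def:dx} and \eqref{eq:estimates-db} and absorb $v_N(x-y)$ into a small $\varepsilon\,\mathcal V_N$ using the pointwise inequality $v_N(x-y)|f(x)g(y)|\le v_N(x-y)(|f(x)|^2 + |g(y)|^2)$. Summing the contributions with Step~1 yields \eqref{eq:G2-lb}.

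Finally, for the double commutator \eqref{eq:G2-dc}: the $F_p b_p^\ast b_p$ part of $\mathcal{G}_N^{(2,1)}$ commutes with $\mathcal N_+$, so only the pairing contributes, and Lemma \ref{lemma:double} gives the double commutator explicitly with prefactor $4\sinh^2(\kappa)\le C\kappa^2$; together with the coefficient control from Step~1 this produces a bound by $C\kappa^2\langle\psi, e^{\kappa\mathcal N_+}(\mathcal K + \mathcal N_+ + 1)e^{\kappa\mathcal N_+}\psi\rangle$. For the pieces containing $d$-factors, the Leibniz identity
\[
[e^{\kappa\mathcal N_+},[e^{\kappa\mathcal N_+}, AB]] = [e^{\kappa\mathcal N_+},[e^{\kappa\mathcal N_+},A]]B + A[e^{\kappa\mathcal N_+},[e^{\kappa\mathcal N_+},B]] + 2[e^{\kappa\mathcal N_+},A][e^{\kappa\mathcal N_+},B]
\]
combined with Lemma \ref{lemma:double} for the $b$-factors and Lemma \ref{lemma:eN-dp} for the $d$-factors extracts a factor $\kappa^2$ from each contribution, after which the resulting expressions are exactly of the form treated in Step~2 with $\psi$ replaced by $e^{\kappa\mathcal N_+}\psi$. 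The hard part throughout will be the cancellation of $\widehat v(p/N)$ against $2p^2\eta_p$ in $G_p$ via \eqref{eq:id-eta}: without it $G_p\sim \widehat v(p/N)$ would be $O(1)$ uniformly in $p$ and not in $\ell^2$, so the Cauchy--Schwarz pairing would fail to extract $\mathcal K$ with only a constant remainder. A secondary subtlety will be the position-space argument for absorbing the $\widehat v(p/N)$-weighted $d$-contributions into a small $\varepsilon\,\mathcal V_N$ rather than into $\mathcal K$ (which has no $\varepsilon$-room left after Step~1), for which the refined estimates \eqref{eq:estimates-dd}--\eqref{eq:estimates-db} are essential.
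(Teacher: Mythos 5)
Your proposal follows essentially the same route as the paper's proof: split $\widetilde{\mathcal{G}}_N^{(2)}$ into its summands, use the scattering relation \eqref{eq:id-eta} (the content of the cited bound $|G_p|\le Cp^{-2}$, $F_p\gtrsim p^2$ from BCCS) so that the off-diagonal coefficient is uniformly in $\ell^2(\Lambda_+^*)$ and the pairing is controlled relative to the diagonal part, bound the $d$-remainder pieces via \eqref{eq:estimates-dp}--\eqref{eq:estimates-db} with the $b^*d^*$ and $d^*d^*$ contributions passed to position space and absorbed into $\varepsilon\,\mathcal{V}_N$, and get the double-commutator bound from Lemma \ref{lemma:double} (only the pairing survives, prefactor $\sinh^2$) together with the Leibniz expansion and Lemma \ref{lemma:eN-dp}, which repeats the same estimates with an extra $\kappa^2$ and $\psi$ replaced by $e^{\kappa\mathcal{N}_+}\psi$. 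The only cosmetic deviations are your normalization of the scattering identity (in the paper's conventions one multiplies by $2$, not $2N$) and the superfluous $\mathcal{K}$ in your double-commutator bound for $\mathcal{G}_N^{(2,1)}$ (with $G\in\ell^2$ the pairing is already controlled by $(\mathcal{N}_++1)$ via \eqref{eq:bounds-b2}), neither of which affects the argument.
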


%%%%%%%%%%%%%%%%%%%%

\begin{proof} 
To prove \eqref{eq:G2-lb} we consider every single contribution of $\mathcal{G}_N^{(2)}$ separately and start with $\mathcal{G}_N^{(2,1)}$. Note that $G_p$ is bounded in $\ell^2 ( \Lambda_+^*)$ uniformly in $N$ as with the splitting $\sigma_p = \eta_p + \beta_p, \gamma_p = 1+ \alpha_p$ and we have 
\begin{align}
G_p =& 2 ( p^2  +  \widehat{v} (p/N )) \eta_p +  \widehat{v} ( p/N ) -   \frac{1}{2N} \sum_{q \in \Lambda_+^*} \widehat{v} ((p-q)/N)  \eta_q \notag \\
&+ 2 \left[ \sigma_p \alpha_p + \beta_p \right]  ( p^2  +  \widehat{v} ( p/N )) \eta_p \notag \\
&+  \left[ \gamma_p \alpha_p + \alpha_p  + \sigma_p^2 \right] \Big( \widehat{v} ( p/N ) -   \frac{1}{2N} \sum_{q \in \Lambda_+^*} \widehat{v} ( (p-q)/N)  \eta_q \Big) \; . 
\end{align}
For the first line of the r.h.s. of the formula above we use the identity \eqref{eq:id-eta} for the operator kernel $\eta_p$. In fact it follows from \cite[Lemma 5.1]{BCCS} that 
\begin{align}
\label{eq:Gp}
\vert G_p \vert \leq C  p^{-2}, \quad \text{and} \quad p^2/2 \leq F_p \leq C (1 + p^2)
\end{align}
for some positive constants $C>0$, in particular yielding $\| G_p \|_{\ell^2} \leq C $. Moreover $ \gamma_p^2 -1, \sigma_p \in \ell^2( \Lambda_+^*)$ and 
\begin{align}
\frac{1 }{2N} \sum_{q \in \Lambda_+^*} \widehat{v} ( (p-q)/N)\eta_q  \leq C
\end{align}
and thus with \eqref{eq:bounds-b1}-\eqref{eq:bounds-b3} 
\begin{align}
\label{eq:G21}
\mathcal{G}_N^{(2,1)} \geq \frac{1}{2} \mathcal{K} - C  ( \mathcal{N}_+ + 1 ) \; . 
\end{align}
For the second term $\mathcal{G}_{N}^{(2,2)}$ we use that from \eqref{eq:bounds-sigma} we have 
\begin{align}
\widehat{v}( p/N) ( \gamma_p^2 + \sigma_p^2) \in \ell^\infty, \quad \widehat{v}( p/N) \gamma_p \sigma_p \in \ell^2
\end{align}
with norms independent of $N$. Thus with the bounds \eqref{eq:bounds-b1}-\eqref{eq:bounds-b3} and \eqref{eq:estimates-dp} we obtain 
\begin{align}
\label{eq:G22}
\vert \langle \xi, \mathcal{G}_N^{(2,2)} \xi \rangle \vert \leq C  \| ( \mathcal{N} + 1 )^{1/2} \xi \|^2  \; . 
\end{align}
The third term $\mathcal{G}_N^{(2,3)}$ we split 
\begin{align}
\mathcal{G}_N^{(2,3)} =& \sum_{p \in \Lambda_+^*} \widehat{v} ( p/N ) \Big[ \sigma_p b_{-p} d_{-p}^* +  \sigma_p d_{p}^* b_{p}) \Big] \notag \\
&+ \sum_{p \in \Lambda_+^*}  \Big[ \gamma_p b_{p}^* d_{-p}^* +  \gamma_p d_{p}^* b_{-p}^*  + d_p^*d_{-p}^* \Big] \notag \\
&+ {\rm h.c.} \\
=& \mathcal{G}_N^{(2,3,1)} + \mathcal{G}_N^{(2,3,2)} + {\rm h.c.}\label{eq:G23}
\end{align}
and find for the first term that since $\sigma_p \in \ell^2( \Lambda_+^*)$ (with norm uniform in $N$) that 
\begin{align}
\label{eq:GN231}
\vert \langle \xi, \; \mathcal{G}_N^{(2,3,1)} \xi \rangle \vert \leq C \| ( \mathcal{N}_+ + 1)^{1/2} \xi \|^2 \; . 
\end{align}
The second contribution of \eqref{eq:G23} we estimate more carefully in terms of $\mathcal{V}_N$. For this we can write $\mathcal{V}_N$ in in position space as \eqref{eq:VN-position}, and similarly 
\begin{align}
\label{eq:G232}
\mathcal{G}_N^{(2,3,2)} = \int_{\Lambda \times \Lambda} dxdy \; v_N (x-y) \big[ \check{b}^*( \check{\gamma}_{x})\check{d}^*_y + \check{d}^*_x \check{b}^*( \check{\gamma}_{y}) + \check{d}_x \check{d}_y\big]  + {\rm h.c.}
\end{align}
where we introduced the point-wise modified creation and annihilation operators $\check{b}_x, \check{b}_y$ for $x,y \in \Lambda$. With these notations we find 
\begin{align}
\vert \langle \xi, \mathcal{G}_N^{(2,3,2)} \xi \rangle \vert 
&\leq  \int_{\Lambda \times \Lambda}dxdy \;  v_N (x-y)  \| ( \mathcal{N}_+ + 1)^{1/2} \xi \| \; \notag \\
& \hspace{1cm} \times \Big( \|( \mathcal{N}_+ + 1)^{-1/2} \check{b}( \check{\gamma}_x ) \check{d}_y \xi \|  +  \|  ( \mathcal{N}_+ + 1)^{-1/2}\check{d}_x\check{b}( \check{\gamma}_y ) \xi \|  \notag \\
& \hspace{2cm} + \| ( \mathcal{N}_+ + 1)^{-1/2}\check{d}_x\check{d}_y \xi \|  \Big) . 
\end{align}
From \eqref{eq:estimates-dp}-\eqref{eq:estimates-db} and \eqref{eq:sigma-x-bounds} we get 
\begin{align}
\label{eq:GN232}
\vert \langle \xi, \mathcal{G}_N^{(2,3,2)} \xi \rangle \vert 
&\leq  \frac{C}{\sqrt{N}}\int_{\Lambda \times \Lambda}dxdy \;  v_N (x-y)  \| ( \mathcal{N}_+ + 1)^{1/2} \xi \| \; \notag \\
& \hspace{1cm} \times \Big( \|  \check{a}_x \xi \|  +  \|  \check{a}_y \xi \|  + \| \check{a}_x \check{a}_y \xi \|  + \| ( \mathcal{N}_+ + 1)^{1/2} \xi \| \Big)  \notag \\
&\leq C \| v_N \|_{L^1( \Lambda) } (\| \mathcal{V}_N^{1/2} \xi \| +  \| (\mathcal{N} _+ + 1)^{1/2} \xi \| )  \| (\mathcal{N} _+ + 1)^{1/2} \xi \| \notag \\
&\leq \varepsilon \langle \xi, \mathcal{V}_N \xi \rangle  + C_\varepsilon \langle \xi, ( \mathcal{N} + 1) \xi \rangle \; 
\end{align}
for some $C_\varepsilon, \varepsilon >0$. Summarizing \eqref{eq:GN231}, \eqref{eq:GN232} we get 
\begin{align}
\label{eq:G23}
 \vert \langle \xi, \mathcal{G}_N^{(2,3)} \xi \rangle \vert  \leq \varepsilon \langle \xi, \mathcal{V}_N \xi \rangle + C_\varepsilon \langle \xi, ( \mathcal{N}_+ + 1) \xi \rangle \; .
\end{align}
In order to estimate the forth term of \eqref{def:G2} we proceed similarly as for the second term $\mathcal{G}_N^{(2,2)}$. We estimate 
\begin{align}
\vert \langle \xi, \mathcal{G}_N^{(2,4)} \xi \rangle \vert \leq \sum_{p \in \Lambda_+^*} \widehat{v} (p/N) \eta_p \int_0^1 ds \; \| d_p^{(s)} \xi \| \Big( \| d_p^{(s)} \xi \|  + \vert \gamma_p^{(s)} \vert \|  b_p \xi \| + \vert \sigma_p^{(s)} \vert b_p^* \xi \|\Big) + C \| \xi \|^2
\end{align}
and thus find with \eqref{eq:estimates-dp}, \eqref{eq:bounds-sigma}
\begin{align}
\label{eq:G24}
\vert \langle \xi, \mathcal{G}_N^{(2,4)} \xi \rangle \vert \leq C \| ( \mathcal{N}_+ + 1)^{1/2} \xi \|^2 \; .
\end{align}
For the fifth term we find with similar arguments as $p^2 \eta_p \in \ell^\infty ( \Lambda_+^*)$ from Lemma \ref{lemma:scattering} that 
\begin{align}
\label{eq:G25}
\vert \langle \xi, \mathcal{G}_N^{(2,5)} \xi \rangle \vert \leq C \| ( \mathcal{N}_+ + 1)^{1/2} \xi \|^2 \; .
\end{align}
Summarizing \eqref{eq:G21},\eqref{eq:G22},\eqref{eq:G23},\eqref{eq:G24} and \eqref{eq:G25} we arrive at the first bound \eqref{eq:G2-lb}. 

Next we prove \eqref{eq:G2-dc}. For this we estimate the four terms of $\mathcal{G}_N^{(j)}$ separately. With Lemma \ref{lemma:double} we observe that 
\begin{align}
\label{eq:bound-G0-1}
\left[ e^{\kappa \mathcal{N}_+}, \; \left[  e^{\kappa \cN_+}, \; \mathcal{G}_N^{(2,1)} \right] \right] = 2 \sinh^2( \kappa) \;  e^{\kappa \cN_+} \sum_{p \in \Lambda_+^*} G_p  \left[ b_p^*b_{-p}^* + b_p b_{-p} \right] e^{\kappa \cN_+} 
\end{align}
We recall from \eqref{eq:Gp} that $\| G_p \|_{\ell^2} \leq C$ and thus we arrive with \eqref{eq:bounds-b1}-\eqref{eq:bounds-b3} for sufficiently small $\kappa>0$ at 
\begin{align}
\vert \langle \xi, \; \left[ e^{\kappa \mathcal{N}_+}, \; \left[  e^{\kappa \cN_+}, \; \mathcal{G}_N^{(2,1)} \right] \right] \xi \rangle \vert \leq C \kappa^2 \| ( \mathcal{N} + 1)^{1/2} e^{\kappa \cN_+} \xi \|^2 . \label{eq:G21-dc}
\end{align}
For the second term we write 
\begin{align}
& \Big[   e^{\kappa \mathcal{N}_+},  \; \left[  e^{\kappa \cN_+}, \; \mathcal{G}_N^{(2,2)} \right] \Big] \notag \\
=&  \sinh ( \kappa/2)^2 \sum_{p \in \Lambda_+^*}  \widehat{v}(p/N) e^{\kappa \cN_+} ( \gamma_p e^{\kappa} b_p^* + \sigma_p e^{- \kappa} b_{-p}) ) d_p \notag \\
&+  \sinh ( \kappa/2) \sum_{p \in \Lambda_+^*}  \widehat{v}(p/N) e^{\kappa \cN_+} ( \gamma_p e^{\kappa} b_p^* + \sigma_p e^{- \kappa} b_{-p}) ) \left[ e^{\kappa \cN_+}, d_p \right] \notag \\
&+   \sum_{p \in \Lambda_+^*}  \widehat{v}(p/N) e^{\kappa \cN_+} ( \gamma_p e^{\kappa} b_p^* + \sigma_p e^{- \kappa} b_{-p}) ) e^{- \lambda \cN_+} \left[ e^{\kappa \cN_+}, [ e^{\kappa \cN_+}, d_p] \right] \notag \\
&+   \sum_{p \in \Lambda_+^*}  \widehat{v}(p/N) e^{\kappa \cN_+} \Big( d_p^* \left[ e^{\kappa \cN_+}, [ e^{\kappa \cN_+}, d_p] \right] + \left[ e^{\kappa \cN_+}, [ e^{\kappa \cN_+}, d_p^*]\right] d_p  + 2 [ e^{\kappa \cN_+}, d_p^*] [ e^{\kappa \cN_+}, d_p] \Big) \; . 
\end{align}
Now we can estimate all contributions similarly to \eqref{eq:G22} using instead of the bounds for $d_p,d_p^*$ in \eqref{eq:estimates-dp}, \eqref{eq:estimates-dp*} the estimates of Lemma \ref{lemma:eN-dp}. In fact notice that the bounds \eqref{eq:eN-dp} in Lemma \ref{lemma:eN-dp} differ from \eqref{eq:estimates-dp} only by a factor of $\kappa$ for the single and $\kappa^2$ for the double commutator. Thus we get 
\begin{align}
\vert \langle \xi, \Big[   e^{\kappa \mathcal{N}_+}, & \; \left[  e^{\kappa \cN_+}, \; \mathcal{G}_N^{(2,2)} \right] \Big] \xi \rangle \vert \leq C \kappa^2 \| ( \mathcal{N}_+ +1 ) e^{\kappa \cN_+} \xi \|^2 \; . 
\end{align}
For the third term $\mathcal{G}_N^{(2,3)}$ we use the same splitting as before (see \eqref{eq:G23}) and find using again \eqref{eq:eN-dp} of Lemma \ref{lemma:eN-dp} instead of \eqref{eq:estimates-dp} that 
\begin{align}
\vert \langle \xi, \Big[   e^{\kappa \mathcal{N}_+}, & \; \left[  e^{\kappa \cN_+}, \; \mathcal{G}_N^{(2,3,1)} \right] \Big] \xi \rangle \vert \leq C \kappa^2 \| ( \mathcal{N}_+ +1 ) e^{\kappa \cN_+} \xi \|^2 \; . 
\end{align}
The term $\mathcal{G}_N^{(2,3,2)}$ we estimate again in position space and find 
\begin{align}
\label{eq:GN232-dc}
\vert \langle \xi, \Big[  &  e^{\kappa \mathcal{N}_+},  \; \left[  e^{\kappa \cN_+}, \; \mathcal{G}_N^{(2,3,2)} \right] \Big]  \xi \rangle \vert \notag \\
&\leq  \int_{\Lambda \times \Lambda}dxdy \;  v_N (x-y)  \|e^{\kappa \cN_+} ( \mathcal{N}_+ + 1)^{1/2} \xi \| \; \notag \\
& \hspace{1cm} \times \Big( \|( \mathcal{N}_+ + 1)^{-1/2} \Big[   e^{\kappa \mathcal{N}_+},  \; e^{-\kappa \cN_+}\Big[  e^{\kappa \cN_+},\check{b}( \check{\gamma}_x ) \check{d}_y \Big] \Big] e^{\kappa \cN_+}\xi \|  \notag \\
&\hspace{2cm} +  \|  ( \mathcal{N}_+ + 1)^{-1/2} e^{-\kappa \cN_+}\Big[   e^{\kappa \mathcal{N}_+},  \; \Big[  e^{\kappa \cN_+},\check{d}_x\check{b}( \check{\gamma}_y ) \Big] \Big] e^{\kappa \cN_+}\xi \|  \notag \\
& \hspace{2cm} + \| ( \mathcal{N}_+ + 1)^{-1/2} e^{-\kappa \cN_+} \Big[   e^{\kappa \mathcal{N}_+},  \; \Big[  e^{\kappa \cN_+}, \check{d}_x\check{d}_y \Big] \Big] e^{\kappa \cN_+}\xi \|  \Big) . 
\end{align}
We conclude in the same way as in \eqref{eq:GN232} using instead of \eqref{eq:estimates-dd}, \eqref{eq:estimates-db} the estimates \eqref{eq:eN-dp-double-comm}, \eqref{eq:eN-dx-double} of Lemma \ref{lemma:double} (that again differ by a factor $\lambda^2$ only). Thus we get 
\begin{align}
\vert \langle \xi, \Big[  &  e^{\kappa \mathcal{N}_+},  \; \left[  e^{\kappa \cN_+}, \; \mathcal{G}_N^{(2,3,2)} \right] \Big]  \xi \rangle \vert \leq  C \kappa^2 \langle e^{\kappa \cN_+}\xi, ( \mathcal{N}_+ + 1) + \mathcal{V}_N )e^{\kappa \cN_+} \xi \rangle \; . 
\end{align}
For the remaining contributions $\mathcal{G}_N^{(4)}, \mathcal{G}_N^{(5)}$ we proceed similarly as in \eqref{eq:G24}, \eqref{eq:G25} using Lemma \ref{lemma:eN-dp} instead of \eqref{eq:estimates-dp}-\eqref{eq:estimates-db} and thus arrive at \eqref{eq:G2-dc}. 
\end{proof}

\subsection{Analysis of $\mathcal{G}_N^{(3)}$} Next we consider 
\begin{align}
\mathcal{G}_N^{(3)} = e^{-B (\eta)} \mathcal{L}_N^{(3)} e^{B( \eta)} = \frac{1}{\sqrt{N}} \sum_{\substack{p,q \in \Lambda_+^* \\ p+q \not= 0}} e^{-B( \eta)} \left[ b_{p+q}^* a_{-p}^* a_q + {\rm h.c.} \right]e^{B( \eta)} \; . 
 \end{align}
With \eqref{eq:action-bogo} we can approximately compute $e^{-B( \eta)}  b_{p+q}^* e^{B( \eta)}$ while for $e^{-B( \eta)} a_{-p}^* a_q e^{B( \eta)}$ we use a similar idea as in \eqref{eq:wN-1}. We introduce the splitting 
\begin{align}
\label{def:G3} 
\mathcal{G}_N^{(3)} = \sum_{j=1}^4 \mathcal{G}_N^{(3,j)} + {\rm h.c.}
\end{align}
where the single terms $\mathcal{G}_N^{(3,j)} $ are given by 
\begin{align}
\mathcal{G}_N^{(3,1)} =& \frac{1}{\sqrt{N}} \sum_{\substack{p,q, \in \Lambda_+^* \\ p + q \not=0}} \widehat{v} (  p/N) \left[  \gamma_{p+q} \gamma_p \gamma_q b^*_{p+q} b_{-p}^* b_{-q} +   \gamma_{p+q} \gamma_p \sigma_q b^*_{p+q} b_{-p}^* b^*_{q} \right.  \notag \\
&\hspace{4cm} +    \gamma_{p+q} \sigma_p \gamma_q b^*_{p+q} b_{p} b_{-q} +   \gamma_{p+q} \sigma_p \sigma_pq b^*_{p+q}  b^*_{q} b_{p} \notag \\
&\hspace{4cm} + \sigma_{p+q} \gamma_p \gamma_q b_{-p}^* b_{-p-q}  b_{-q} +   \sigma_{-p-q} \gamma_p \sigma_q  b_{-p}^* b^*_{q}  b_{-p-q} \notag \\
&\hspace{4cm} +  \left.  \sigma_{p+q} \sigma_p \gamma_q b_{-p-q} b_{p} b_{-q} +   \sigma_{-p-q} \sigma_p \sigma_p  b^*_{q} b_{-p-q} b_{p} \right] \label{def:G31}
 \end{align}
and 
  \begin{align}
 \label{def:G32}
\mathcal{G}_N^{(3,2)} =&  \frac{1}{\sqrt{N}} \sum_{\substack{p,q, \in \Lambda_+^* \\ p + q \not=0}} \widehat{v} (  p/N)  [ (\gamma_{p} \gamma_{q} + \sigma_p \sigma_q) d_{p+q}^* b_{-p}^*  b_{q} + \gamma_{p} \sigma_{q} d_{p+q}^* b_{-p}^* b_{-q}^* \notag \\ 
& \hspace{4cm} +   \sigma_p \gamma_q d_{p+q}^* b_{p} b_{q}   \Big] + {\rm h.c.}
 \end{align} 
and 
 \begin{align}
 \label{def:G33}
&\mathcal{G}_N^{(3,3)} =  \frac{1}{\sqrt{N}}   \sum_{\substack{p,q, \in \Lambda_+^* \\ p + q \not=0}} \widehat{v} (  p/N) ( \gamma_{p+q} b_{p+q}^* + \sigma_{p+q} b_{p+q}  ) \times  \\
& \times \int_0^1 ds \eta_q \Big( \gamma_{-p}^{(s)} \gamma_q^{(s)} b_{-p}^* b_q^*  + \gamma_{-p}^{(s)} \sigma_q^{(s)} b_{-p}^* b_q + \sigma_{-p}^{(s)} \gamma_q^{(s)} b_{-p} b_q^* + \sigma_{-p}^{(s)} \sigma_q^{(s)} b_{p} b_{-q} + {\rm h.c.}  \Big) \notag
&+ {\rm h.c.}
 \end{align}
and 
 \begin{align}
 \label{def:G34}
& \mathcal{G}_N^{(3,4)} =  \frac{1}{\sqrt{N}}   \sum_{\substack{p,q, \in \Lambda_+^* \\ p + q \not=0}} \eta_q \widehat{v} (  p/N) d_{p+q}^* \times  \\
&\times \int_0^1 ds \Big( \gamma_{-p}^{(s)} \gamma_q^{(s)} b_{-p}^* b_q^*  + \gamma_{-p}^{(s)} \sigma_q^{(s)} b_{-p}^* b_q + \sigma_{-p}^{(s)} \gamma_q^{(s)} b_{-p} b_q^* + \sigma_{-p}^{(s)} \sigma_q^{(s)} b_{p} b_{-q} + {\rm h.c.}  \Big) + {\rm h.c. } \notag
 \end{align}
 
\begin{lemma}
\label{lemma:G3}
Let $\mathcal{G}_N^{(3)}$ be given by \eqref{def:G3}. Then there exists $  \varepsilon, C_\varepsilon >0$ such that 
\begin{align}
\label{eq:G3-lb}
\mathcal{G}_N^{(3)} \geq  - \varepsilon \mathcal{V}_N - C_\varepsilon  ( \mathcal{N} + 1 ) 
\end{align}
as operator inequality on $\mathcal{F}_{\perp u_0}^{\leq N}$.  Furthermore let $\kappa>0$ be sufficiently small, then there exists $C>0$ such that for any $\psi \in \mathcal{F}_{\perp u_0}^{\leq N}$ we have  
\begin{align}
\label{eq:bound-G3-dc}
\vert  \langle \psi, \; \left[ e^{\kappa \mathcal{N}_+}, \;  \left[ e^{\kappa \mathcal{N}_+}, \; \mathcal{G}_N^{(3)} \right]\right] \psi \rangle \vert \leq C  \kappa^2 \vert \langle \psi, \left[  \mathcal{N}_+ + \mathcal{V}_N + 1 \right] \psi \rangle \vert  \; . 
\end{align}
\end{lemma}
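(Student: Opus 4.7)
The plan is to bound each of the four summands $\mathcal{G}_N^{(3,j)}$ in \eqref{def:G3} separately, since each carries a distinct source of smallness. The genuinely cubic piece $\mathcal{G}_N^{(3,1)}$ is the main term and must be paired against $\mathcal{V}_N$; following the strategy used for $\mathcal{G}_N^{(2,3,2)}$ in the proof of Lemma \ref{lemma:G2}, I would rewrite each of the eight summands in \eqref{def:G31} in position space in the form
\begin{align*}
\frac{1}{\sqrt N} \int_{\Lambda\times\Lambda} v_N(x-y)\;\check b^{\sharp_1}(\check\kappa_\cdot)\; \check b^{\sharp_2}_x\; \check b^{\sharp_3}_y\, dx\, dy,
\end{align*}
where the kernels $\check\kappa\in\{\check\gamma,\check\sigma\}$ encode the appropriate combination of $\gamma_p,\sigma_p$. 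Splitting $\tfrac{1}{\sqrt N}v_N=\sqrt{v_N/N}\cdot\sqrt{v_N}$ and applying Cauchy--Schwarz pairs one $\check a_x\check a_y$ factor against $\mathcal V_N$ (the $N^{-1}$ factor reproduces the $\mathcal V_N$-prefactor after squaring) and leaves the remaining single $\check a$-factor to be bounded by $\|v\|_{L^1}\langle\xi,\mathcal N_+\xi\rangle$ via $\|v_N\|_{L^1}=\|v\|_{L^1}$. The $L^\infty$ bound on $\check\gamma$ and $L^2$ bound on $\check\sigma$ from \eqref{eq:sigma-x-bounds} supply the required constants, yielding $|\langle\xi,\mathcal G_N^{(3,1)}\xi\rangle|\le\varepsilon\langle\xi,\mathcal V_N\xi\rangle+C_\varepsilon\langle\xi,(\mathcal N_++1)\xi\rangle$.

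The three remaining pieces can be controlled without invoking $\mathcal V_N$: for $\mathcal G_N^{(3,2)}$ I would use the $N^{-1}$-smallness of $d_{p+q}^*$ from \eqref{eq:estimates-dp*}, which combined with the explicit $N^{-1/2}$ prefactor is enough to absorb everything into $\langle\xi,(\mathcal N_++1)\xi\rangle$; for $\mathcal G_N^{(3,3)}$ and $\mathcal G_N^{(3,4)}$ I would instead exploit the square-summability $\|\eta\|_{\ell^2}\le C$ (Lemma \ref{lemma:scattering}). A direct Cauchy--Schwarz using \eqref{eq:bounds-b1}--\eqref{eq:bounds-b3} then delivers in each case a bound by $C\langle\xi,(\mathcal N_++1)\xi\rangle$. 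Summing the four contributions proves \eqref{eq:G3-lb}.

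For the double commutator \eqref{eq:bound-G3-dc} the strategy is to reduce to the lower-bound analysis just performed. I would apply Lemma \ref{lemma:double} to each pure $b^\sharp$-monomial and Lemma \ref{lemma:eN-dp} to each monomial containing a $d_p^\sharp$ or $\check d_x^\sharp$; in both cases the nested commutator equals the original operator sandwiched between two copies of $e^{\kappa\mathcal N_+}$, multiplied by a factor of order $\kappa^2$ coming from $\sinh^2(\natural(B)\kappa/2)$ or from the twofold $\kappa$-smallness in \eqref{eq:eN-dp-double-comm}--\eqref{eq:eN-db-double}. Since commuting $e^{\kappa\mathcal N_+}$ through the finitely many $b^\sharp$'s only produces bounded $e^{\pm\kappa}$ factors, repeating the Cauchy--Schwarz of the lower bound on $e^{\kappa\mathcal N_+}\psi$ yields the claimed $C\kappa^2\langle\psi,(\mathcal N_++\mathcal V_N+1)\psi\rangle$. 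The main obstacle throughout is to keep the cubic piece $\mathcal G_N^{(3,1)}$ bounded \emph{linearly} in $\mathcal N_+$ (rather than by $\mathcal N_+^{3/2}$): this forces at least one of the three factors to be contracted against the $L^\infty$-kernel $\check\gamma$, so the subterms of \eqref{def:G31} that carry only $\sigma$-kernels require a separate treatment exploiting the pointwise decay $|\eta_p|\le C/p^2$ from Lemma \ref{lemma:scattering} to make the resulting triple momentum sum absolutely summable against $\widehat v(p/N)$.
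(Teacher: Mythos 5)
Your overall architecture is the paper's: split $\mathcal{G}_N^{(3)}$ into the four pieces \eqref{def:G31}--\eqref{def:G34}, treat the cubic main term in position space with a Cauchy--Schwarz that pairs one $\check a_x\check a_y$ against $\mathcal{V}_N$ (using $\|v_N\|_{L^1}=\|v\|_{L^1}$ and the $L^2/L^\infty$ bounds on $\check\sigma,\check\gamma$), and obtain the double-commutator bound by replacing \eqref{eq:bounds-b1}--\eqref{eq:bounds-b3} and \eqref{eq:estimates-dp}--\eqref{eq:estimates-db} with Lemma \ref{lemma:double} and Lemma \ref{lemma:eN-dp}, which cost exactly an extra factor $\kappa^2$. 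That part is sound and matches the paper's proof.

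The genuine gap is your treatment of $\mathcal{G}_N^{(3,2)}$, $\mathcal{G}_N^{(3,3)}$ and $\mathcal{G}_N^{(3,4)}$, which you claim can be absorbed into $C\langle\xi,(\cN_+ +1)\xi\rangle$ by a direct momentum-space Cauchy--Schwarz, using only the prefactor $N^{-1/2}$, the $N^{-1}$-smallness of $d_{p+q}$ from \eqref{eq:estimates-dp*}, respectively $\|\eta\|_{\ell^2}\le C$. This fails for the same reason that forces the position-space treatment of $\mathcal{G}_N^{(3,1)}$: the coefficient $\widehat v(p/N)$ does not decay for $|p|\lesssim N$, so $\sum_{p\in\Lambda_+^*}|\widehat v(p/N)|^2\sim N^3$ and $\sum_p|\widehat v(p/N)|\sim N^3$. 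For instance, in the first term of \eqref{def:G32}, estimating $\|d_{p+q}\xi\|\le CN^{-1}(\dots)$ and then summing $|\widehat v(p/N)|$ over $p$ by Cauchy--Schwarz produces a factor $N^{3/2}$, which the available $N^{-1/2}\cdot N^{-1}$ only barely cancels \emph{before} the sum over $q$ is performed, and the unsmoothed $b$-operators give no further decay in $p$; the same happens for the triple-creation terms of \eqref{def:G33}, where only the $q$-variable carries an $\eta_q$ weight while the $p$-sum is unweighted. The correct route (the one the paper takes, see \eqref{eq:G32-estimates} and \eqref{eq:estimates-G33}) is to write these terms in position space as well, where the pair $\check b_x^\sharp\check b_y^\sharp$ (or the $\check a_x\check a_y$ produced by the $\check b\check d$-bounds \eqref{eq:estimates-db}) integrated against $v_N(x-y)$ is estimated by $\big(\int v_N\|\check a_x\check a_y\xi\|^2\big)^{1/2}=\big(N\langle\xi,\mathcal{V}_N\xi\rangle\big)^{1/2}$, so these three pieces also contribute $\varepsilon\mathcal{V}_N+C_\varepsilon(\cN_++1)$ rather than $C(\cN_++1)$ alone. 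Note that $\mathcal{V}_N$ cannot simply be bounded by $C(\cN_++1)$ on $\mathcal{F}^{\leq N}_{\perp u_0}$, so your shortcut does not give the stated inequality without this repair; the repair is harmless for the lemma, since \eqref{eq:G3-lb} and \eqref{eq:bound-G3-dc} allow the $\mathcal{V}_N$ term, but it has to be made.
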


\begin{proof} We start with the proof of the lower bounds \eqref{eq:G3-lb} and start with the first summand $\mathcal{G}_N^{(3,1)}$ given by \eqref{def:G31}. To bound the term of the r.h.s. of \eqref{def:G31} we first observe that with the splitting \eqref{eq:sigma-splitting} we have
\begin{align}
\gamma_{p+q} \gamma_{p} \gamma_q  = 1+    \alpha_{q} +\alpha_{p}\gamma_q + \alpha_{p+q}  \gamma_{p} \gamma_{q}  \; .\label{eq:G31-term1-splitting}
\end{align}
To estimate those terms it is convenient to switch to position space. We have 
\begin{align}
 N^{-1/2}\sum_{\substack{p,q \in \Lambda_+^* \\ p+q \not=0}} \widehat{v} ( p/N) \langle \psi, \; b^*_{p+q} b_{-p}^* b_{q} \psi \rangle =  N^{-1/2} \int_{\Lambda \times \Lambda} dxdy \;  v_N (x-y) \langle \psi, \check{b}_x^*\check{b}_y^*\check{b}_x \psi \rangle 
\end{align}
that we can thus estimate using \eqref{eq:bounds-b1}-\eqref{eq:bounds-b3} by 
\begin{align}
\label{eq:G311}
N^{-1/2} & \sum_{\substack{p,q \in \Lambda_+^* \\ p+q \not=0}} \widehat{v} ( p/N) \vert \langle \psi, \; b^*_{p+q} b_{-p}^* b_{q} \psi \rangle\vert \notag \\
& \leq  \left(  N^{-1} \int_{\Lambda \times \Lambda} dxdy \; \vert  v_N (x-y)\vert \;   \| \check{a}_x\check{a}_y \psi \|^2 \right)^{1/2} \left( \int_{\Lambda \times \Lambda} dxdy \; \vert v_N (x-y) \vert \| \check{a}_x \psi \|^2 \right)^{1/2} \; . 
\end{align}
Since $ \sup_x  \int_{\Lambda} dy \; \vert  v_N (x-y) \vert \leq C $ we conclude
\begin{align}
 N^{-1/2} \sum_{\substack{p,q \in \Lambda_+^* \\ p+q \not=0}}&  \widehat{v} ( p/N) \vert \langle \psi, \; b^*_{p+q} b_{-p}^* b_{q} \psi \rangle\vert \leq C  \| \mathcal{V}_N^{1/2} \psi \| \; \| \mathcal{N}^{1/2} \psi \| \; . 
\end{align}

Therefore we find with \eqref{eq:G311} similar arguments as for \eqref{eq:G311} that
\begin{align}
N^{-1/2} & \sum_{\substack{p,q \in \Lambda_+^* \\ p+q \not=0}} \widehat{v} (p/N) \vert \langle \psi, \;\gamma_{p+q} \gamma_{p} \gamma_q  b^*_{p+q} b_{-p}^* b_{q} \psi \rangle\vert \leq C  \| \mathcal{V}_N^{1/2} \psi \| \; \| \mathcal{N}^{1/2} \psi \| \; . 
\end{align}
The second term of the r.h.s. of \eqref{def:G31} we write in position space, too, and find 
\begin{align}
\sum_{\substack{p,q \in \Lambda_+^* \\ p+q \not=0}}&  \widehat{v} ( p/N)  \langle \psi, \; \gamma_{p+q} \gamma_q \sigma_p b^*_{p+q} b_{-p}^* b_{-q}^* \psi \rangle \notag \\
& =  \int_{\Lambda \times \Lambda} dxdy \; v_N (x-y) \;  \langle \psi, \check{b}^*(\check{\gamma}_x) \check{b}^*( \check{\gamma}_y) \check{b}^*( \check{\sigma}_x ) \psi \rangle  \; . 
\end{align}
With the bounds \eqref{eq:bounds-b1}-\eqref{eq:bounds-b3} we find that 
\begin{align}
N^{-1/2} \vert  \sum_{\substack{p,q \in \Lambda_+^* \\ p+q \not=0}}&  \widehat{v} ( p/N)  \langle \psi, \; \sigma_p b^*_{p+q} b_{-p}^* b_{-q}^* \psi \rangle \vert \notag \\
\leq&  \Big(N^{-1}  \int_{\Lambda \times \Lambda} dxdy \; v_N (x-y) \| \check{a}( \check{\gamma}_x) \check{a} ( \check{\gamma}_y) \psi \|^2 \Big)^{1/2} \notag \\
& \quad \times \Big( \int_{\Lambda \times \Lambda} dxdy \; v_N (x-y) \| \check{a}^*( \check{\sigma}_x ) \psi \|^2 \Big)^{1/2}  \; . 
\end{align}
We remark that we have from \cite[Eq. (3.20)-(3.21)]{BCCS}
\begin{align}
\label{eq:sigma-x-bounds-2}
\sup_{x} \| \check{\sigma}_x \|_{L^2( \Lambda)},\quad  \sup_{x} \| \check{\alpha}_x \|_{L^2( \Lambda)}, \quad \sup_{x} \| \check{\beta}_x \|_{L^2( \Lambda)} \leq C
\end{align}
and, in particular, $\| v_N * \| \check{\sigma}_x \|_{L^2( \Lambda)}^2 \|_{L^1 ( \Lambda)} \leq C \| v_N \|_{L^1 ( \Lambda)} \leq C $ so that we arrive with \eqref{eq:sigma-x-bounds} at 
\begin{align}
 N^{-1/2} \vert  \sum_{\substack{p,q \in \Lambda_+^* \\ p+q \not=0}}&  \widehat{v} (p/N)  \langle \psi, \; \gamma_{p+q} \gamma_q \sigma_p b^*_{p+q} b_{-p}^* b_{-q}^* \psi \rangle \vert \leq C  \| \mathcal{V}_N^{1/2} \psi \| \; \| ( \mathcal{N}_+ + 1)^{1/2} \psi \| \; . 
\end{align}
The forth term of the r.h.s. of \eqref{def:G31} can be bounded similarly. For the third term we have 
\begin{align}
N^{-1/2} \sum_{\substack{p,q \in \Lambda_+^* \\ p+q\not=0}} &  \widehat{v} (p/N )  \vert \langle \psi,  \gamma_{p+q} \gamma_q \sigma_p b_{p+q}^*b_p b_q \psi \rangle \vert \notag \\
\leq& \Big( \sum_{\substack{p,q \in \Lambda_+^* \\ p+q\not=0}}  \widehat{v} (p/N )  \sigma_p^2 \| a_{p+q} \psi \|^2 \Big)^{1/2} \notag \\& \hspace{3cm} \times \Big( N^{-1}\sum_{\substack{p,q \in \Lambda_+^* \\ p+q\not=0}}  \widehat{v} (p/N ) \gamma_{p+q}^2 \gamma_q^2 \| a_{p} a_q \psi \|^2 \Big)^{1/2}
\end{align}
Since $\alpha_p, \sigma_p \in \ell^2 (\Lambda_+^*)$ and $\gamma_p \in \ell^\infty ( \Lambda_+^*)$ from \eqref{eq:bounds-sigma} we find that 
\begin{align}
N^{-1/2} \sum_{\substack{p,q \in \Lambda_+^* \\ p+q\not=0}}  &  \widehat{v} (p/N )  \vert \langle \psi,  \gamma_{p+q} \gamma_q \sigma_p b_{p+q}^*b_p b_q \psi \rangle \vert\notag   \leq C  \| ( \mathcal{N} + 1)^{1/2} \psi \|\; \| \mathcal{V}_N^{1/2} \psi \| \; . \label{eq:G314}
\end{align}
The fifth term of the r.h.s. of \eqref{def:G31} follows in the same way while for the sixth term we find with \eqref{eq:bounds-b1}-\eqref{eq:bounds-b3} in position space that  
\begin{align}
N^{-1/2} \sum_{\substack{p,q \in \Lambda_+^* \\ p+q\not=0}} &  \widehat{v} (p/N )  \vert \langle \psi,  \sigma_{p+q} \gamma_p \sigma_q b_{-p}^* b^*_{q}b_{-p-q} \psi \rangle \vert \notag \\
\leq&  N^{-1/2} \Big( \int_{\Lambda \times \Lambda } dxdy v_N (x-y) \| \check{a}( \check{\sigma}_x) \psi \|^2 \Big)^{1/2} \notag \\
& \hspace{3cm} \times \Big( \int_{\Lambda \times \Lambda}dxdy  v_N (x-y)  \| \check{a}( \check{\gamma}_y) \check{a}( \check{\sigma}_x) \psi \|^2 \Big)^{1/2} \; 
\end{align}
and thus we conclude for any $\psi \in \mathcal{F}_{\perp u_0}^{\leq N}$ that 
\begin{align}
N^{-1/2} \sum_{\substack{p,q \in \Lambda_+^* \\ p+q\not=0}} &  \widehat{v} (p/N )  \vert \langle \psi,  \sigma_{p+q} \gamma_p \sigma_q b_{-p}^* b^*_{q}b_{-p-q} \psi \rangle \vert \leq C  \| ( \mathcal{N} + 1)^{1/2} \psi \|^2 \; . 
\end{align}
The remaining terms can be estimated similarly using \eqref{eq:sigma-x-bounds}, \eqref{eq:sigma-x-bounds-2}. For the hermitian conjugate of $\mathcal{G}_N^{(3,1)}$ we can proceed similarly. 

We observe that $\mathcal{G}_N^{(3,2)}$ can be estimated similarly to the first four terms of $\mathcal{G}_N^{(3,1)}$ in \eqref{eq:G311}-\eqref{eq:G314} using \eqref{eq:estimates-dp}-\eqref{eq:estimates-dd}. More precisely we switch in position space and find with \eqref{eq:estimates-dp} for the first term
\begin{align}
N^{-1/2} & \sum_{\substack{p,q \in \Lambda_+^* \\ p+q \not=0}} \gamma_p \gamma_q \vert \langle \xi, d_{p+q}^*b_{-p}^*b_q \xi \rangle \vert \notag \\
 =& N^{-1/2} \int dxdy v_N (x-y) \vert \langle \xi, \check{d}_x^* b^*(\check{\gamma}_y) b ( \check{\gamma}_y) \xi \rangle \vert \notag \\
\leq& \left( N^{-1} \int dxdy v_N (x-y) \| \check{b}( \check{\gamma}_y) \check{d}_x \xi \|^2 \right)^{1/2}  \left( \int dxdy v_N(x-y) \| \check{b}(\check{\gamma}_y) \xi \|^2 \right)^{1/2} 
\end{align}
With \eqref{eq:estimates-db} we get 
\begin{align}
\| b( \check{\gamma}_y) d_x \xi \| \leq CN^{-1} \| ( \cN_+ +1)^2 \xi \|^2 + \| \check{a}_x ( \cN_+ + 1)^{3/2} \xi \| + \| \check{a}_x\check{a}_y ( \cN_+ + 1) \xi \| 
\end{align}
and thus 
\begin{align}
N^{-1/2}  \sum_{\substack{p,q \in \Lambda_+^* \\ p+q \not=0}} \gamma_p \gamma_q \vert \langle \xi, d_{p+q}^*b_{-p}^*b_q \xi \rangle \vert \notag \leq  C \| ( \cN_+ + 1)^{1/2} \xi \| \left( \| ( \cN_+ + 1)^{1/2} \xi \| + \| \mathcal{V}_N^{1/2} \xi \| \right) \; . 
\end{align}
The remaining terms of $\mathcal{G}_N^{(3,2)}$ can be bounded similarly to \eqref{eq:G311}-\eqref{eq:G314} with \eqref{eq:estimates-dp}- \eqref{eq:estimates-dd} and we arrive at 
\begin{align}
\label{eq:G32-estimates}
\vert \langle \xi, \mathcal{G}_N^{(3,2)} \xi \rangle \vert  \leq  C \| ( \cN_+ + 1)^{1/2} \xi \| \left( \| ( \cN_+ + 1)^{1/2} \xi \| + \| \mathcal{V}_N^{1/2} \xi \| \right)  \leq \varepsilon \langle \xi, \mathcal{V}_N \xi \rangle + C_\varepsilon \| ( \mathcal{N}_+ + 1) \xi \|^2 \; . 
\end{align}
The contributions of $\mathcal{G}_N^{(3,3)}$ can be estimated with similar ideas as the second to the seventh term of $\mathcal{G}_N^{(3,1)}$ due to the additional factor $\eta_p$ in the second line of \eqref{def:G33}. In fact we find for the first term 
\begin{align}
N^{-1/2} & \int_0^1 ds \sum_{\substack{p,q \in \Lambda_+^* \\ p+q \not=0}} \eta_q \gamma_{p+q} \gamma_p^{(s)} \gamma_q^{(s)} \vert \langle \xi, b_{p+q}^*b_{-p}^*b_q \xi \rangle \vert \notag \\
 =& N^{-1/2} \int dxdy v_N (x-y) \vert \langle \xi, \check{b}^*( \check{\gamma}_x^{(s)}) \check{b}^*(\check{\gamma}_y^{(s)}) \check{b} ( (\gamma^{(s) \eta)_y}) \xi \rangle \vert \notag \\
\leq& \left( N^{-1} \int dxdy v_N (x-y) \| \check{b}( \check{\gamma}_y) \check{b}( \check{\gamma}_x) \xi \|^2 \right)^{1/2}  \left( \int dxdy v_N(x-y) \| \check{b}( (\gamma \eta)_y) \xi \|^2 \right)^{1/2} \label{eq:estimates-G33}
\end{align}
and similarly as before we get 
\begin{align}
N^{-1/2} & \int_0^1 ds \sum_{\substack{p,q \in \Lambda_+^* \\ p+q \not=0}} \eta_q \gamma_{p+q} \gamma_p^{(s)} \gamma_q^{(s)} \vert \langle \xi, b_{p+q}^*b_{-p}^*b_q \xi \rangle \vert \notag \\
&\leq C \| ( \mathcal{N}_+ + 1)^{1/2} \xi \| ( \| \mathcal{V}_N^{1/2} \xi \| + \| ( \cN_+ + 1)^{1/2} \xi \| ) \; .
\end{align}
The remaining contributions of \eqref{def:G33} can be bounded as in \eqref{eq:G311}-\eqref{eq:G314}. The last term $\mathcal{G}_N^{(3,4)}$ can be bounded as the second term $\mathcal{G}_N^{(3,2)}$ using \eqref{eq:estimates-dp}-\eqref{eq:estimates-db} instead of the bounds \eqref{eq:bounds-b1}-\eqref{eq:bounds-b3} (similarly as for $\mathcal{G}_N^{(3,2)}$). 

To prove \eqref{eq:bound-G3-dc} we again consider the two terms $\mathcal{G}_N^{(3,j)}$ separately. From Lemma \ref{lemma:double} it follows that 
\begin{align}
\vert \langle \psi, \;  \left[ e^{\kappa \mathcal{N}_+}, \;  \left[ e^{\kappa \mathcal{N}_+}, \; \mathcal{G}_N^{(3,1)} \right]\right] \psi \rangle  \leq& C  \sinh^2( \kappa/2) \vert \langle \psi, \; \; e^{\kappa \cN_+}\mathcal{G}_N^{(3,1)} e^{\kappa \cN_+} \psi \rangle \vert 
 \end{align}
and thus with similar arguments as in the first part of this proof we find that for sufficiently small $\kappa >0$
\begin{align}
\vert \langle \psi, &  \left[ e^{\kappa \mathcal{N}_+}, \;  \left[ e^{\kappa \mathcal{N}_+}, \; \mathcal{G}_N^{(3,1)} \right]\right] \psi \rangle \vert \notag \\
\leq& C \kappa^2  \| ( \cN_+ +1 )^{1/2} e^{\kappa \cN_+}\psi \| \; \left(  \| ( \mathcal{N} + 1 ) e^{\kappa \cN_+}\psi \| + \| \mathcal{V}_N^{1/2} e^{\kappa \cN_+} \psi \|  \right) 
\end{align}
For the second term $\mathcal{G}_N^{(3,2)}$ we find with Lemma \ref{lemma:double} 
\begin{align}
& \left[ e^{\kappa \mathcal{N}_+}, \;  \left[ e^{\kappa \mathcal{N}_+}, \; \mathcal{G}_N^{(3,2)} \right]\right] \notag \\
&= \frac{1}{\sqrt{N}} \sum_{\substack{p,q \in \Lambda_+^* \\ p+q\not=0}} \widehat{v}( p/N) \Big( (\gamma_p \gamma_q  + \sigma_p \sigma_q) \left[ e^{\kappa \mathcal{N}_+}, \;  \left[ e^{\kappa \mathcal{N}_+}, \; d^*_{p+q} \right]\right] b^*_{-p} b_q  \notag \\
&\hspace{2cm} + \left[ e^{\kappa \mathcal{N}_+}, \;  \left[ e^{\kappa \mathcal{N}_+}, \; d^*_{p+q} \right]\right] e^{- \kappa \cN_+}(  e^{2 \kappa}\sigma_p \gamma_q b_p b_q+  e^{-2 \kappa}\gamma_p \sigma_q b_pb_q) e^{\kappa \cN_+} \notag \\
&\hspace{2cm} + 2 \sinh( \kappa/2) \left[ e^{\kappa \mathcal{N}_+}, \; d^*_{p+q} \right] (  e^{2 \kappa}\sigma_p \gamma_q b_p b_q+  e^{-2 \kappa}\gamma_p \sigma_q b_pb_q) e^{\kappa \cN_+}  \notag \\
&\hspace{2cm} +  e^{\kappa \cN_+}( e^{-\kappa \cN_+}d^*_{p+q} e^{\kappa \cN_+} )  (  e^{2 \kappa}\sigma_p \gamma_q b_p b_q+  \gamma_p \sigma_q b_pb_q) 
\end{align}
that we can estimate in the same way as $\mathcal{G}_N^{(3,2)}$ using \eqref{eq:eN-db-double}, \eqref{eq:eN-db-single} of Lemma \ref{lemma:eN-dp} instead of \eqref{eq:estimates-dp}. Thus we get 
\begin{align}
\vert \langle \xi, \left[ e^{\kappa \mathcal{N}_+}, \;  \left[ e^{\kappa \mathcal{N}_+}, \; \mathcal{G}_N^{(3,2)} \right]\right] \xi \rangle \vert \leq  C \kappa^2 \Big( \langle e^{\kappa \cN_+} \xi, \mathcal{V}_N e^{\kappa \cN_+} \xi \rangle + \langle e^{\kappa \cN_+}\xi, ( \cN_+ + 1) e^{\kappa \cN_+}\xi \rangle \Big) \; . 
\end{align}
The remaining double commutators of $\mathcal{G}_N^{(3,3)},\mathcal{G}_N^{(3,4)}$ can be bounded with similar ideas, i.e. with Lemma \ref{lemma:double} and Lemma \ref{lemma:eN-dp} instead of \eqref{eq:estimates-dp}, we arrive with similar arguments as in \eqref{eq:estimates-G33} at \eqref{eq:bound-G3-dc}.  
\end{proof}

\subsection{Analysis of $\mathcal{G}_N^{(4)}$} Here we consider the operator 
\begin{align}
\mathcal{G}_N^{(4)} := e^{-B( \eta)} \mathcal{L}_N^{(4)} e^{B ( \eta)} = \frac{1}{2N} \sum_{ \substack{p,q \in \Lambda_+^* \\ r \not=0 -p,q}} \widehat{v}(r/N) e^{-B( \eta)} a_p^*a_q^*a_{q-r} a_{p+r} e^{B( \eta)}
\end{align}
that we compute (following the ideas from \cite[Section 7.4]{BCCS_optimal}
\begin{align}
\mathcal{G}_N^{(4)}  =& \mathcal{V}_N +  \frac{1}{2N} \sum_{ \substack{p,q \in \Lambda_+^*, r \in \Lambda^* \\ r \not= -p,q}} \widehat{v}(r/N) \int_0^1 ds \; e^{-sB( \eta)} [a_p^*a_q^*a_{q-r} a_{p+r}, B( \eta) ] e^{sB( \eta)} \notag\\
=& \mathcal{V}_N + \frac{1}{2N} \sum_{ \substack{p,q \in \Lambda_+^*, r \in \Lambda^* \\ r \not=0 -p,q}} \widehat{v}(r/N) \eta (q+r) \int_0^1 ds \; \Big( e^{-sB( \eta)}  b_q^* b_{-q}^* e^{sB( \eta)} + {\rm h.c.} \Big) \notag \\
&+ \frac{1}{2N} \sum_{ \substack{p,q \in \Lambda_+^*, r \in \Lambda^* \\ r \not=0 -p,q}} \widehat{v}(r/N) \eta (q+r) \int_0^1 ds \; \Big( e^{-sB( \eta)}  b_{p+r}^* b_{q}^* a^*_{-q-r} a_p e^{sB( \eta)} + {\rm h.c.} \Big) \; . 
\end{align}
For the third term of the r.h.s. we observe 
\begin{align}
e^{-sB ( \eta)}  & a^*_{-q-r}a_p e^{sB ( \eta)} \notag \\
=& a^*_{-q-r}a_p + \int_0^s d\tau e^{-\tau B( \eta)} [ a^*_{-q-r} a_p, B ( \eta) ]  e^{\tau B( \eta)}\notag \\
=& a^*_{-q-r}a_p + \int_0^s d\tau e^{-\tau B( \eta)} ( \eta(p) b_{-p}^*b_{-q-r}^* + \eta (q+r) b_pb_{q+r} ) e^{\tau B( \eta)} \; . 
\end{align}
With these formulas we introduce the splitting 
\begin{align}
\label{def:G4}
\mathcal{G}_N^{(4)} = \mathcal{V}_N + \sum_{j=1}^3 \mathcal{G}_N^{(4,j)} + \mathcal{C}_{\mathcal{G}_N^{(4)}} 
\end{align}
with 
\begin{align}
\mathcal{C}_{\mathcal{G}_N^{(4)}} = \frac{1}{2N} \sum_{q \in \Lambda_+^*, r \in \Lambda^*} \widehat{v} (r/N) \eta_{q+r} \eta_q 
\end{align}
and 
\begin{align}
\mathcal{G}_N^{(4,1)} = \frac{1}{2N} \sum_{ \substack{q \in \Lambda_+^*, r \in \Lambda^* }} \widehat{v}(r/N) \eta (q+r) \int_0^1 ds \; \Big( e^{-sB( \eta)}  b_q^* b_{-q}^* e^{sB( \eta)} + {\rm h.c.} \Big) \notag \\
\mathcal{G}_N^{(4,2)} = \frac{1}{2N} \sum_{ \substack{p,q \in \Lambda_+^*, r \in \Lambda^* \\ r \not=0 -p,q}} \widehat{v}(r/N) \eta (q+r) \int_0^1 ds \; \Big( e^{-sB( \eta)}  b_q^* b_{-q}^* e^{sB( \eta)} a^*_{-q-r}a_p  + {\rm h.c.} \Big)
\end{align}
and 
\begin{align}
\mathcal{G}_N^{(4,3)} =& \frac{1}{2N} \sum_{ \substack{p,q \in \Lambda_+^*, r \in \Lambda^* \\ r \not=0 -p,q}} \widehat{v}(r/N) \eta (q+r) \eta (p) \notag \\
& \hspace{2cm} \times \int_0^1 ds\int_0^s d\tau \; \Big( e^{-sB( \eta)}  b_q^* b_{-q}^* e^{sB( \eta)}  e^{-\tau B( \eta)} b_{-p}^*b_{-q-r}^* e^{\tau B( \eta)}  + {\rm h.c.} \Big) 
\end{align}
and 
\begin{align}
\mathcal{G}_N^{(4,4)} =& \frac{1}{2N} \sum_{ \substack{p,q \in \Lambda_+^*, r \in \Lambda^* \\ r \not=0 -p,q}} \widehat{v}(r/N) \eta (q+r)^2 \notag \\
& \hspace{2cm} \times \int_0^1 ds\int_0^s d\tau \; \Big( e^{-sB( \eta)}  b_q^* b_{-q}^* e^{sB( \eta)}  e^{-\tau B( \eta)}b_pb_{q+r} e^{\tau B( \eta)}  + {\rm h.c.} \Big) 
\end{align}
For 
\begin{align}
\label{def:wG4}
\widetilde{\mathcal{G}}_N^{(4)}  = \mathcal{G}_N^{(4)} - \frac{1}{2N} \sum_{\substack{ q \in \Lambda_+^*, r \in \Lambda^*}} \widehat{v}(r/N) \eta_{q+r} ( b_qb_{-q} + b_q^*b_{-q}^*) - \mathcal{C}_{\mathcal{G}_N^{(4)}} 
\end{align}
we then have the following properties. 

\begin{lemma}
\label{lemma:G4}
Let $\mathcal{G}_N^{(4)}$ be given by \eqref{def:G4}. Then there exists $\varepsilon,C_\varepsilon >0$ independent of $N$ such that 
\begin{align}
\label{eq:G4-lb}
\widetilde{\mathcal{G}}_N^{(4)} - \mathcal{V}_N \geq  \varepsilon \mathcal{V}_N -C_\varepsilon  (\mathcal{N}_+ +1) 
\end{align}
as operator inequality on $\mathcal{F}_{\perp u_0}^{\leq N}$.  Furthermore let $\kappa>0$ be sufficiently small, then there exists $C>0$ such that for any $\psi \in \mathcal{F}_{\perp u_0}{\leq N}$ we have 
\begin{align}
\label{eq:G4-dc}
\vert \langle \psi, \;  \left[ e^{\kappa \mathcal{N}_+}, \;  \left[ e^{\kappa \mathcal{N}_+}, \; \widetilde{\mathcal{G}}_N^{(4)} \right]\right] \psi \rangle \vert \leq C\kappa^2  \langle e^{\kappa \cN_+} \psi, \; (\mathcal{V}_N + \mathcal{N}_+  + 1) e^{\kappa \cN_+}\psi \rangle \; . 
 \end{align}
\end{lemma}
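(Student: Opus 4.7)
My plan is to exploit the explicit decomposition \eqref{def:G4}, which already isolates $\mathcal{V}_N$, the constant $\mathcal{C}_{\mathcal{G}_N^{(4)}}$, and the four Duhamel remainders $\mathcal{G}_N^{(4,j)}$ produced by commuting $a_p^*a_q^*a_{q-r}a_{p+r}$ against $B(\eta)$. The subtraction in \eqref{def:wG4} is engineered to cancel exactly the $s=0$ value of $\mathcal{G}_N^{(4,1)}$, so that $\widetilde{\mathcal{G}}_N^{(4)}-\mathcal{V}_N$ is a genuine $O(\|\eta\|)$ remainder in which at least one further commutator with $B(\eta)$ has been performed. Apart from this one cancellation, the strategy parallels the treatment of $\mathcal{G}_N^{(2)}$ and $\mathcal{G}_N^{(3)}$ in Lemmas \ref{lemma:G2} and \ref{lemma:G3}.

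For the lower bound \eqref{eq:G4-lb} I would estimate the four pieces separately. On $\mathcal{G}_N^{(4,1)}$ minus its $s=0$ value I apply Duhamel once more, producing an additional integral of a commutator with $B(\eta)$; after using \eqref{eq:action-bogo} and the splittings \eqref{eq:sigma-splitting}--\eqref{eq:bounds-sigma}, together with the summability of $\widehat{v}(r/N)\eta_{q+r}$ in $\ell^2(\Lambda_+^*\times\Lambda^*)$ (which is uniform in $N$ by Lemma \ref{lemma:scattering}), the bounds \eqref{eq:bounds-b1}--\eqref{eq:bounds-b3} and \eqref{eq:estimates-dp}--\eqref{eq:estimates-dp*} give a contribution bounded by $C(\mathcal{N}_++1)$. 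For $\mathcal{G}_N^{(4,2)}$, which still contains four $\sharp$-operators paired with a single factor of $\eta$, I switch to position space, writing the kernel as
\begin{equation*}
\mathcal{G}_N^{(4,2)}\sim \frac{1}{N}\int_{\Lambda\times\Lambda} dxdy\, v_N(x-y)\; e^{-sB(\eta)}\check{b}_x^*\check{b}_y^* e^{sB(\eta)}\,\check{a}^*(\check{\eta}_y)\check{a}_x,
\end{equation*}
and applying Cauchy--Schwarz to absorb two factors into $\varepsilon\mathcal{V}_N$ while the third factor $\check{a}^*(\check{\eta}_y)$ provides $\|\eta\|\cdot\|(\mathcal{N}_++1)^{1/2}\cdot\|$, exactly as in \eqref{eq:estimates-G33}. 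The terms $\mathcal{G}_N^{(4,3)}$ and $\mathcal{G}_N^{(4,4)}$ carry two factors of $\eta$; after expanding both Bogoliubov conjugations they produce products of four $b^\sharp$-operators multiplied by $N^{-1}\widehat{v}(r/N)\eta_{q+r}\eta_p$ (respectively $N^{-1}\widehat{v}(r/N)\eta_{q+r}^2$), and these can be controlled directly by $\varepsilon\mathcal{V}_N+C_\varepsilon(\mathcal{N}_++1)$ using \eqref{eq:bounds-b1}--\eqref{eq:bounds-b3} in momentum space.

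For the double commutator bound \eqref{eq:G4-dc} I would repeat the same decomposition piece by piece, only replacing each elementary norm estimate by its commutator analogue. Lemma \ref{lemma:double} gives $\|[e^{\kappa\mathcal{N}_+},[e^{\kappa\mathcal{N}_+},B_{p_1,\dots,p_k}]]\psi\|\le C\kappa^2\|B_{p_1,\dots,p_k}e^{\kappa\mathcal{N}_+}\psi\|$ for any monomial in $b^\sharp$, while Lemma \ref{lemma:eN-dp} supplies the analogous bounds for $d_p,d_p^*$ and their position-space counterparts $\check{d}_x,\check{b}_x\check{d}_y$ (needed to treat the remainders coming from $e^{-sB(\eta)}b_q^\sharp b_{-q}^{\sharp'}e^{sB(\eta)}$ after Duhamel expansion). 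The arithmetic of the preceding step then carries over verbatim, the only changes being an extra factor $\kappa^2$ and the insertion of $e^{\kappa\mathcal{N}_+}$ on both sides of each inner product, which leads to \eqref{eq:G4-dc}.

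The main obstacle, inherited from \cite{BCCS_optimal}, is the correct treatment of $\mathcal{G}_N^{(4,2)}$: a naive momentum-space estimate, which discards the point-wise structure of $v_N$, costs a factor of $N^{1/2}$ and cannot be absorbed into $\varepsilon\mathcal{V}_N+C_\varepsilon(\mathcal{N}_++1)$. Only the position-space Cauchy--Schwarz split, combined with $\sup_x\|\check{\eta}_x\|_{L^2(\Lambda)}\le C$, $\sup_x\int|v_N(x-y)|dy\le C$, and $\sup_x\|\check{\sigma}_x\|_{L^2}\le C$ from \eqref{eq:sigma-x-bounds-2}, yields the desired bound. For the double commutator version one must moreover verify that inserting the commutators with $e^{\kappa\mathcal{N}_+}$ on the $\check{b}_x^*\check{b}_y^*$ pair and on $\check{a}^*(\check{\eta}_y)\check{a}_x$ does not destroy the Cauchy--Schwarz factorization; this is exactly what Lemmas \ref{lemma:double} and \ref{lemma:eN-dp} are designed to guarantee.
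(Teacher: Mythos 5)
Your overall architecture matches the paper's proof: you treat the four Duhamel remainders in \eqref{def:G4} separately, you correctly identify that the subtraction in \eqref{def:wG4} removes exactly the bare ($s=0$) part of $\mathcal{G}_N^{(4,1)}$ so that what is left is $O(\eta)$, you use the position-space Cauchy--Schwarz with $\sup_x\|\check{\eta}_x\|_{L^2}\le C$ and $\|v_N\|_{L^1}\le C$ for $\mathcal{G}_N^{(4,2)}$, and for \eqref{eq:G4-dc} you replace each elementary bound by its commutator analogue from Lemmas \ref{lemma:double} and \ref{lemma:eN-dp}, which is precisely how the paper gains the factor $\kappa^2$. The genuine gap is your treatment of $\mathcal{G}_N^{(4,3)}$ and $\mathcal{G}_N^{(4,4)}$: you claim these can be controlled ``directly \dots in momentum space'' using \eqref{eq:bounds-b1}--\eqref{eq:bounds-b3}, and that step fails. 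In $\mathcal{G}_N^{(4,4)}$ the coefficient $N^{-1}\widehat{v}(r/N)\eta_{q+r}^2$ does not depend on $p$, so the operator contains a free sum $\sum_p b_p$ with no decaying weight; the bounds \eqref{eq:bounds-b1}--\eqref{eq:bounds-b3} give nothing for such a sum, and it only becomes controllable after passing to position space, where $\sum_p b_p e^{ip\cdot y}$ is the local operator $\check{b}_y$ integrated against $v_N(x-y)\in L^1(\Lambda)$. For $\mathcal{G}_N^{(4,3)}$ the two decaying factors $\eta_p$ and $\eta_{q+r}$ must be spent on the $p$- and $(q+r)$-sums (either by Cauchy--Schwarz against $b^*_{-p}$, $b^*_{-q-r}$ or through $N^{-1}\sum_r|\widehat{v}(r/N)\eta_{q+r}|\le C$), after which nothing square-summable remains for the free $q$-sum over $b_q^*b_{-q}^*$; estimating that sum crudely produces a $\|\mathcal{N}_+\xi\|$-type (i.e.\ $(\mathcal{N}_++1)^2$ or $N^{1/2}$) loss -- exactly the failure mode you yourself flag for $\mathcal{G}_N^{(4,2)}$ -- which is in particular not admissible for \eqref{eq:G4-dc}, whose right-hand side carries only one power of $\mathcal{N}_++1$.

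The repair is the one the paper uses, following \cite[Section 7]{BCCS_optimal}: treat $\mathcal{G}_N^{(4,3)}$ and $\mathcal{G}_N^{(4,4)}$ exactly like $\mathcal{G}_N^{(4,2)}$, in position space, bounding $\|(\mathcal{N}_++1)^{1/2}e^{-sB(\eta)}\check{b}_x\check{b}_ye^{sB(\eta)}\xi\|$ by the local quantities $\|\check{a}_x\xi\|$, $\|\check{a}_y\xi\|$, $\|\check{a}_x\check{a}_y\xi\|$, $\|(\mathcal{N}_++1)^{1/2}\xi\|$, and the second factor by $\|(\mathcal{N}_++1)^{-1/2}e^{-\tau B(\eta)}b^*(\check{\eta}_x)b^*(\check{\eta}_y)e^{\tau B(\eta)}\xi\|\le C\|\eta\|^2\|(\mathcal{N}_++1)^{1/2}\xi\|$ (respectively with $b(\check{\eta}_x^2)\check{b}_y$ for $\mathcal{G}_N^{(4,4)}$), so that the $x,y$-integration against $v_N$ yields $\varepsilon\mathcal{V}_N+C_\varepsilon(\mathcal{N}_++1)$; the same factorization survives the insertion of $e^{\kappa\mathcal{N}_+}$-commutators via Lemmas \ref{lemma:double} and \ref{lemma:eN-dp}, giving \eqref{eq:G4-dc}. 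A minor further point: the conjugation remainders of $\mathcal{G}_N^{(4,1)}$ containing $d^{(s)}$-operators are not bounded by $C(\mathcal{N}_++1)$ alone but by $C\|(\mathcal{N}_++1)^{1/2}\xi\|\big(\|(\mathcal{N}_++1)^{1/2}\xi\|+\|\mathcal{V}_N^{1/2}\xi\|\big)$, which is harmless since the $\varepsilon\mathcal{V}_N$ term is available on the right-hand side.
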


\begin{proof} The proof of \eqref{eq:G4-lb} follows from arguments in \cite[Section 7]{BCCS_optimal} that we are briefly recalling here. For this we estimate the single contributions $\mathcal{G}_N^{(4,j)}$ separately. We start with the first that is with \eqref{eq:action-bogo} of the form 
\begin{align}
\mathcal{G}_N^{(4,1)} =& \frac{1}{2N} \sum_{ \substack{q \in \Lambda_+^*, r \in \Lambda^*}} \widehat{v}(r/N) \eta (q+r) \notag \\
&\hspace{2cm} \times \int_0^1 ds \; \Big( \gamma_q^{(s)} b_q^*  + \sigma_q^{(s)} b_{-q} + d_q^{(s)}\Big)\Big(\gamma_q^{(s)}  b_{-q}^*  + \sigma_q^{(s)} b_q + d_{-q}^{(s)} \Big) + {\rm h.c.}  
\end{align}
where $\gamma_q^{(s)} = \cosh( s \eta_q), \sigma_q^{(s)} = \sinh ( s \eta_q)$ and $d^{(s)}_q$ defined in \eqref{def:d-2} with $\eta$ replaced by $s \eta$. We write 
\begin{align}
& \mathcal{G}_N^{(4,1)}   - \frac{1}{2N} \sum_{\substack{ q \in \Lambda_+^*, r \in \Lambda^*}} \widehat{v}(r/N) \eta_{q+r} ( b_qb_{-q} + b_q^*b_{-q}^*) \notag \\
=&   \frac{1}{2N} \sum_{ \substack{q \in \Lambda_+^*, r \in \Lambda^* }} \widehat{v}(r/N) \eta (q+r) \notag \\
&\quad \times \int_0^1 ds \; \Big( ((\gamma_q^{(s)})^2 -1) b_q^* b_{-q}^* +  {\rm h.c.} + (\sigma_q^{(s)})^2 b_{-q} b_q  + 2 \sigma_q^{(s)}\gamma_q^{(s)} b_q^*b_q + {\rm h.c.}\Big)  \notag \\
&+   \frac{1}{2N} \sum_{ \substack{q \in \Lambda_+^*, r \in \Lambda^*}} \widehat{v}(r/N) \eta (q+r) \notag \\
&\quad  \times \int_0^1 ds \; \Big( \Big( \gamma_q^{(s)} b_q^*  + \sigma_q^{(s)} b_{-q} \Big) (d_{-q}^{(s)})^* +(d_q^{(s)})^* \Big(\gamma_q^{(s)}  b_{-q}^*  + \sigma_q^{(s)} b_q  \Big) + (d_{q}^{(s)})^* (d_{-q}^{(s)})^* + {\rm h.c.}\Big)    \notag \\
&+   \frac{1}{2N} \sum_{ \substack{q \in \Lambda_+^*, r \in \Lambda^*}} \widehat{v}(r/N) \eta (q+r) \int_0^1 ds \; \Big( \sigma_q^{(s)}\gamma_q^{(s)} [b_q, b_q^*] + {\rm h.c.} \Big)  \notag \\
=& \sum_{j=1}^3 \mathcal{G}_N^{(4,1,j)}  \label{eq:G41}
\end{align}
For the first summand of \eqref{eq:G41} we use that 
\begin{align}
\sup_{q \in \Lambda_+^*} \frac{1}{N} \sum_{r \in \Lambda^*} \vert \widehat{v} (r/N) \vert \vert \eta_{q+r} \vert \leq C
\end{align}
uniformly in $N$ and $\vert (\gamma_p^{(s)})^2 - 1\vert , \vert \sigma_p^{(s)} \vert \leq C \vert \eta_p \vert $. We find 
\begin{align}
\vert \langle \xi, \mathcal{G}_N^{(4,1,1)} \xi \rangle \vert \leq C \sum_{q \in \Lambda_+^*} \Big[ \vert \eta_q \vert  \| b_q  \xi \|^2 + \| \eta_q^2 \| b_q \xi \| \| ( \cN_+ + 1)^{1/2} \xi \| \Big] \leq C \| ( \cN_+ +1 )^{1/2} \xi \|^2 \; . 
\end{align}
To estimate the second summand of \eqref{eq:G41} we switch (similarly to \eqref{eq:G232}) to position space, and arrive with \eqref{eq:estimates-dp}, 
\begin{align}
\frac{1}{N^2} \sum_{r \in \Lambda^*, q \in \Lambda_+^*} \vert \widehat{v} (r/N) \vert \vert \eta_{q+r} \vert \vert \eta_q \vert  \leq C
\end{align}
and \eqref{eq:estimates-dd} at
\begin{align}
\vert \langle \xi, \mathcal{G}_N^{(4,1,2)} \xi \rangle \vert \leq C \| ( \cN_+ +1 )^{1/2} \xi \| \Big( \| ( \cN_+ + 1)^{1/2} \xi \| + \| \mathcal{V}_N^{1/2} \xi \| \Big) \; . 
\end{align}
For more details see for example \cite[formula (7.62)-(7-64)]{BCCS_optimal}. For the third term of \eqref{eq:G41} we find with the commutation relations \eqref{eq:comm-b}
\begin{align}
\mathcal{G}_N^{(4,1,3)} - \mathcal{C}_{\mathcal{G}_N^{(4)}}=& \frac{1}{2N} \sum_{q \in \Lambda_+^*, r \in \Lambda^*} \int_0^1 ds \gamma_q^{(s)} \sigma_q^{(s)} \Big( N^{-1}\cN_+ -N^{-1} a_q^*a_q - s\eta_q)  
\end{align}
and we find with similar arguments as before 
\begin{align}
\langle \xi, \; \Big( \mathcal{G}_N^{(4,1,3)} - \mathcal{C}_{\mathcal{G}_N^{(4)}} \Big) \xi \rangle \leq C \| ( \cN_+ + 1)^{1/2} \xi \|^2 \; . 
\end{align}
Thus summarizing, we get  for 
\begin{align}
\widetilde{\mathcal{G}}_N^{(4,1)} = \mathcal{G}_N^{(4,1)}  - \frac{1}{2N} \sum_{\substack{ q \in \Lambda_+^*, r \in \Lambda^*}} \widehat{v}(r/N) \eta_{q+r} ( b_qb_{-q} + b_q^*b_{-q}^*) - \mathcal{C}_{\mathcal{G}_N^{(4)}}
\end{align}
that 
\begin{align}
\vert \langle \xi, \mathcal{G}_N^{(4,1)} \xi \rangle \vert \leq C \| ( \cN_+ +1 )^{1/2} \xi \| \Big( \| ( \cN_+ + 1)^{1/2} \xi \| + \| \mathcal{V}_N^{1/2} \xi \| \Big) 
\end{align}
To bound $\mathcal{G}_N^{(4,2)}$ we switch to position space and find 
\begin{align}
\vert \langle \xi, \mathcal{G}_N^{(4,2)} \xi \rangle \vert \leq&  \frac{1}{N}\int dxdy v_N (x-y) \int_0^1 ds \notag \\
& \hspace{1cm}\times \| ( \cN_+ + 1)^{1/2} e^{-sB ( \eta)} \check{b}_x \check{b}_y e^{s B ( \eta)} \xi \| \; \| ( \cN_++ 1)^{-1/2} a^*( \eta_x) \check{a}_y \xi \| \; .
\end{align}
On the one hand 
\begin{align}
\|  ( \cN_+ + 1)^{-1/2} a^* ( \eta_x) \check{a}_y \xi \| \leq C \| \eta \| \| \check{a}_y \xi \| \leq C \| \check{a}_y \xi \| \label{eq:bb-eB}
\end{align}
and on the other hand with \eqref{eq:sigma-x-bounds}, \eqref{eq:sigma-x-bounds-2} and \eqref{eq:estimates-dp} 
\begin{align}
\| & ( \cN_+ + 1)^{1/2} e^{-sB ( \eta)} \check{b}_x \check{b}_y e^{s B ( \eta)}  \xi \| \notag \\
& \leq C \Big( N \| ( \cN_+ + 1)^{1/2} \xi \| + N \| \check{a}_x \xi \| + N \| \check{a}_y \xi \| + N^{1/2} \| \check{a}_x \check{a}_y \xi \| \Big) 
\end{align}
so that we arrive at 
\begin{align}
\vert \langle \xi, \mathcal{G}_N^{(4,2)} \xi \rangle \vert \leq& C \| ( \cN_+ +1 )^{1/2} \xi \| \Big( \| ( \cN_+ + 1)^{1/2} \xi \| + \| \mathcal{V}_N^{1/2} \xi \| \Big) \; . 
\end{align}
For the third term we work again in position space and argue similarly as 
\begin{align}
\vert \langle \xi, \mathcal{G}_N^{(4,3)} \xi \rangle \vert &\leq \int dxdy v_N(x-y) \int_0^1 ds \int_0^s d \tau \| ( \cN_+ + 1)^{1/2} e^{-sB (\eta)} \check{b}_x \check{b}_y e^{sB( \eta)} \xi \| \notag \\
& \hspace{2cm} \times \| ( \cN_+ + 1)^{-1/2} e^{-\tau B ( \eta)} b^*( \check{\eta}_x ) b^*( \check{\eta}_y) e^{\tau B ( \eta)} \xi \| 
\end{align}
and \eqref{eq:estimates-dp*}
\begin{align}
\|  ( \cN_+ + 1)^{-1/2} e^{- \tau B (\eta)} \check{b}^* ( \eta_x) \check{b}^*( \eta_y) e^{\tau B( \eta)} \xi \| \leq C \| \eta \|^2 \| ( \cN_+ + 1)^{1/2} \xi \| 
\end{align}
and thus with \eqref{eq:bb-eB} 
\begin{align}
\vert \langle \xi, \mathcal{G}_N^{(4,3)} \xi \rangle \vert \leq& C \| ( \cN_+ +1 )^{1/2} \xi \| \Big( \| ( \cN_+ + 1)^{1/2} \xi \| + \| \mathcal{V}_N^{1/2} \xi \| \Big) \; . 
\end{align}
The forth term can be estimated in position space by 
\begin{align}
\vert \langle \xi, \mathcal{G}_N^{(4,4)} \xi \rangle \vert &\leq \int dxdy v_N(x-y) \int_0^1 ds \int_0^s d \tau \| ( \cN_+ + 1)^{1/2} e^{-sB (\eta)} \check{b}_x \check{b}_y e^{sB( \eta)} \xi \| \notag \\
& \hspace{2cm} \times \| ( \cN_+ + 1)^{-1/2} e^{-\tau B ( \eta)} b( \check{\eta}_x^2 ) \check{b}_y e^{\tau B ( \eta)} \xi \| 
\end{align}
and thus with \eqref{eq:bb-eB} and \eqref{eq:estimates-dp} 
\begin{align}
\vert \langle \xi, \mathcal{G}_N^{(4,4)} \xi \rangle \vert &\leq \int dxdy v_N(x-y) \int_0^1 ds \int_0^s d \tau \| ( \cN_+ + 1)^{1/2} e^{-sB (\eta)} \check{b}_x \check{b}_y e^{sB( \eta)} \xi \| \notag \\
& \hspace{2cm} \times \|e^{-\tau B ( \eta)} \check{b}_y e^{\tau B ( \eta)} \xi \| \notag \\
\leq& C \| ( \cN_+ +1 )^{1/2} \xi \| \Big( \| ( \cN_+ + 1)^{1/2} \xi \| + \| \mathcal{V}_N^{1/2} \xi \| \Big)  \; . 
\end{align}
We finally conclude by 
\begin{align}
\vert \langle \xi, \mathcal{G}_N^{(4,4)} \xi \rangle \vert
\leq& C \| ( \cN_+ +1 )^{1/2} \xi \| \Big( \| ( \cN_+ + 1)^{1/2} \xi \| + \| \mathcal{V}_N^{1/2} \xi \| \Big)
\end{align}

To prove the upper bound \eqref{eq:G4-dc} on the second nested commutator of $\mathcal{G}_N^{(4)}$ we first observe that since $[\mathcal{N}_+, \mathcal{V}_N ] =0$ we have 
\begin{align}
\left[ e^{\kappa \mathcal{N}_+}, \;  \left[ e^{\kappa \mathcal{N}_+}, \; \widetilde{\mathcal{G}}_N^{(4)} \right]\right] = \Big[ e^{\kappa \mathcal{N}_+}, \; \Big[ e^{\kappa \mathcal{N}_+}, \;   \sum_{j=1}^4 \mathcal{G}_N^{(4,j)} \Big]\Big] \; . 
\end{align}
Thus it suffices to control the second nested commutator of the single contributions $\mathcal{G}_N^{(4,j)}$. For this we proceed analogously as in the proof of the previous lemmas on nested commutators of $\mathcal{G}_N^{(2)}, \mathcal{G}_N^{(3)}$. That is that we the estimates before as we the only ingredient for our estimates were either bounds on $b_p^*,b_p$ by \eqref{eq:bounds-b1}-\eqref{eq:bounds-b3} or bounds on $d_p,d_p^*$ and $\check{d}_x \check{d}_y$ by \eqref{eq:estimates-dp}, \eqref{eq:estimates-dp*} or \eqref{eq:estimates-dd} respectively. However, bounds on single and double commutators of $b_p^*,b_p$, $d_p,d_p^*$ and $\check{d}_x \check{d}_y$ are given by Lemmas \eqref{lemma:double}, \ref{lemma:eN-dp} and agree with \eqref{eq:bounds-b1}-\eqref{eq:bounds-b3},  \eqref{eq:estimates-dp}, \eqref{eq:estimates-dp*} and \eqref{eq:estimates-dd} respectively modulus a factor of $\kappa$  for the single and $\kappa^2$ for the double commutator. Thus we conclude with \eqref{eq:G4-dc}. 

\end{proof}

\subsection{Conclusion of Proposition \ref{prop:G}}  

Here we proof Proposition \ref{prop:G} from Lemmas \ref{lemma_G0}-\ref{lemma:G4}. 
\label{subsec:proof-G}

\begin{proof}[Proof of Proposition \ref{prop:G}] First we remark that it follows from \cite[Section 7]{BCCS} that with the choice of $\eta$ in \eqref{def:eta}, we have for $\mathcal{C}_{\mathcal{G}_N} := \mathcal{C}_{\mathcal{G}_N^{(0)}} + \mathcal{C}_{\mathcal{G}_N^{(2)}} + \mathcal{C}_{\mathcal{G}_N^{(4)}}$
\begin{align}
\vert \mathcal{C}_{\mathcal{G}_N} - E_N \vert \leq C 
\end{align}
for a constant $C>0$. In order to prove the lower bound  \eqref{eq:G-lb}, we collect the results from Lemma \ref{lemma_G0}-\ref{lemma:G4} that lead for to 
\begin{align}
 \mathcal{G}_N -E_N \geq \frac{1}{2}  \mathcal{H}_N  - C_1  \langle \xi_N, \; \cN_+  \xi_N \rangle - C_2   \; . \label{eq:bound-G-eins}
\end{align}
(see also \cite[Proposition 3.2]{BCCS}). Furthermore, Ii follows from \cite[p.250]{BCCS} that there exist $C_1,C_2>0$ such that 
\begin{align}
\mathcal{G}_N - E_N \geq C \mathcal{N}_+ - C_2 
\end{align}
that plugging into \eqref{eq:bound-G-eins} yields the first bound \eqref{eq:G-lb} of Proposition \ref{prop:G} (see also  \cite[Eq. (4.5)]{BCCS} ). 

% such that $\langle \xi_N, \cN_+ \xi_N \rangle \leq C$ and thus that 
%\begin{align}
%\langle \xi_N, \; \left( \mathcal{G}_N -E_N \right) \xi_N \rangle \geq \frac{1}{2} \langle \xi_N, \;  \mathcal{H}_N  \; \xi_N \rangle - C   
%\end{align}
%for a positive constant $C>0$.

The second bound \eqref{eq:G-dc} follows immediately from Lemma \ref{lemma_G0}-\ref{lemma:G4}. 
\end{proof}

\section{Proof of main theorems} 
\label{sec:proof-thm}

In this section we conclude the proof of the main results. 
%we prove Theorem \ref{thm:main} from Proposition \ref{prop:G}. 

\subsection{Proof of Theorem \ref{thm:main}}
\begin{proof}
We introduce the notation 
\begin{align}
\xi_N := e^{B( \eta)} \mathcal{U}_N \psi_N
\end{align}
for the ground state of the excitation Hamiltonian $\cG_N$ defined in \eqref{def:G}. First we prove that there exits $C,c_0>0$ such that for sufficiently small $\widetilde{\kappa} >0$ we have 
\begin{align}
\langle \psi_N, e^{\widetilde{\kappa} \cN_+} \psi_N \rangle \leq C e^{\widetilde{\kappa} } \langle \xi_N, e^{c_0 \widetilde{\kappa}  \cN_+} \xi_N \rangle \; . \label{eq:claim0}
\end{align}
and thus, that it sufficies to consider the expectation value of $e^{\kappa \cN_+} = e^{c_0\widetilde{\kappa} \cN_+}$ in the excitation vector $\xi_N$ to prove Theorem \ref{thm:main}. For the proof of \eqref{eq:claim0}, we recall that with the definition of \eqref{def:wN} that 
\begin{align}
\langle \psi_N, e^{\widetilde{\kappa}  \cN_+} \psi_N \rangle = \langle \xi_N, e^{\widetilde{\kappa}  \wN} \xi_N \rangle \;. 
\end{align}
For $s \in [0,1]$ and $c_0 >0$ we define the Fock space vector 
\begin{align}
\label{eq:wN-1}
\xi_N(s) = e^{(1-s) \widetilde{\kappa}  c_0 \cN_+ /2} e^{s \widetilde{\kappa}  \wN/2} \xi_N 
\end{align}
that satisfies 
\begin{align}
\| \xi_N (1) \|^2 = \langle \xi_N, e^{\widetilde{\kappa}  \wN} \xi_N \rangle, \quad \text{and} \quad \| \xi_N (0) \|^2 = \langle \xi_N, e^{c_0 \widetilde{\kappa}  \cN_+} \xi_N \rangle \; . 
\end{align}
Therefore, to prove \eqref{eq:claim0}, we need to control the difference of $\| \xi_N (0) \|^2$ and $\| \xi_N (1) \|^2$. For this we compute 
\begin{align}
\label{eq:partial-xi-s-0}
\partial_s \| \xi_N (s) \|^2 = 2  \widetilde{\kappa}  \Re \langle \xi_N (s),  \left( e^{(1-s) c_0 \widetilde{\kappa}  \cN_+/2}  \wN e^{-(1-s)c_0 \widetilde{\kappa}  \cN_+/2} - c_0 \cN_+ \right) \xi_N (s) \rangle \; . 
\end{align} 
It follows from Lemma \ref{lemma:bounds-wN} that for $\widetilde{\kappa}  c_0 \leq 1$ we have 
\begin{align}
\vert \Re  \langle \xi_N (s),   e^{(1-s) c_0 \widetilde{\kappa}  \cN_+/2}  \wN e^{-(1-s)c_0 \kappa \cN_+/2}  \xi_N (s) \rangle  \vert \leq C \| (\mathcal{N}_+ +1 ) \xi_N(s) \|^2 
\end{align}
for a constant $C >0$. Thus for $c_0 >C$ (that exists for $\kappa>0$ sufficiently small) we have from \eqref{eq:partial-xi-s-0} 
\begin{align}
\partial_s \| \xi_N (s) \|^2  \leq 2 \widetilde{\kappa}  \langle \xi_N (s), \left[ \left( C- c_0\right) \cN_+  + C \right] \xi_N (s) \rangle \leq C \widetilde{\kappa}  \| \xi_N (s) \|^2 \label{eq:wN-2}
\end{align}
yielding with Gronwall's inequality the desired estimate \eqref{eq:claim0}. 

We recall that \eqref{eq:claim0} implies that in order to prove Theorem \ref{thm:main}, it suffices to prove that for sufficiently small $\kappa >0$ there exists $C>0$ such that 
\begin{align}
\langle \xi_N, e^{\kappa \cN_+} \xi_N \rangle \leq  e^{C\kappa}  \;  
\end{align}
To this end we show as a preliminary step that there exists $C>0$ such that 
\begin{align}
\label{eq:bound-N-1}
\langle e^{\kappa \cN_+} \xi_N, \; \mathcal{N}_+ e^{\kappa \cN_+} \xi_N \rangle \leq C  \| e^{\kappa \cN_+} \xi_N\|^2. 
\end{align}
We observe that since $\mathcal{N}_+ \leq C \mathcal{H}_N$, instead of \eqref{eq:bound-N-1}, it suffices to show that 
\begin{align}
\langle e^{\kappa \cN_+} \xi_N, \; \mathcal{H}_N \; e^{2\kappa \cN_+} \xi_N \rangle \leq C  \| e^{\kappa \cN_+} \xi_N\|^2 \; 
\end{align}
for a positive constant $C>0$. From \eqref{eq:G-lb} of Proposition \ref{prop:G} it follows that there exists $C_1, C_2 >0$ such that 
\begin{align}
\label{eq:bound-H1} 
\langle e^{\kappa \cN_+} \xi_N, \; \mathcal{H}_N e^{\kappa \cN_+} \xi_N \rangle \leq C_1 \langle e^{\kappa \cN_+} \xi_N, \; \left( \mathcal{G}_N  - E_N \right) e^{\kappa \cN_+} \xi_N \rangle + C_2  \| e^{\kappa \cN_+} \xi_N \| \; . 
\end{align}
We recall that $\xi_N$ is the ground state of $\mathcal{G}_N$, i.e. satisfies $\mathcal{G}_N \xi_N = E_N \xi_N $. Therefore we have 
\begin{align}
2 \langle \xi_N, \; & e^{ \kappa \mathcal{N}_+} (\mathcal{G}_N - E_N ) e^{ \kappa \mathcal{N}_+} \xi_N \rangle \notag \\
& = \langle \xi_N , \; \left[ e^{ \kappa \mathcal{N}_+} , \; \mathcal{G}_N \right] \;  e^{ \kappa \mathcal{N}_+} \xi_N \rangle +  \langle \xi_N,  \;  e^{ \kappa \mathcal{N}_+} \; \left[ \mathcal{G}_N,  \;  e^{ \kappa \mathcal{N}_+} \right] \xi_N \rangle \notag \\
&= \langle \xi_N , \; \left[ e^{ \kappa \mathcal{N}_+} , \; \mathcal{G}_N \right] \;  e^{ \kappa \mathcal{N}_+} \xi_N \rangle - \langle \xi_N,  \;  e^{ \kappa \mathcal{N}_+} \; \left[  e^{ \kappa \mathcal{N}_+} ,  \;  \mathcal{G}_N, \right] \xi_N \rangle \notag \\ 
&= -   \langle \xi_N,  \;  \left[ e^{ \kappa \mathcal{N}_+} , \; \; \left[  e^{ \kappa \mathcal{N}_+} ,  \;  \mathcal{G}_N \right] \right]  \xi_N \rangle\; .  \label{eq:claim11} 
\end{align}
yielding with \eqref{eq:bound-H1}
\begin{align}
\langle e^{\kappa \cN_+} \xi_N, \; \mathcal{H}_N e^{\kappa \cN_+} \xi_N \rangle \leq C_1 \langle \xi_N, \; \left[ e^{\kappa \cN_+} , \; \left[ e^{\kappa \cN_+} , \mathcal{G}_N  \right] \right] e^{\kappa \cN_+} \xi_N\rangle + C_2 \|  e^{\kappa \cN_+} \xi_N \|^2 \; . 
\end{align}
From \eqref{eq:G-dc} of Proposition \ref{prop:G} we furthermore find 
\begin{align}
\langle e^{\kappa \cN_+} \xi_N, \; \mathcal{H}_N e^{\kappa \cN_+} \xi_N \rangle \leq C_1 \kappa^2  \langle e^{\kappa \cN_+} \xi_N, \; \mathcal{H}_N e^{\kappa \cN_+} \xi_N \rangle + C_2 \| e^{\kappa \cN_+} \xi \|^2 \
\end{align}
for sufficiently small $\kappa>0$. Thus 
\begin{align}
( 1- C_1 \kappa^2) \langle e^{\kappa \cN_+} \xi_N, \; \mathcal{H}_N e^{\kappa \cN_+} \xi_N \rangle \leq  C_2 \| e^{\kappa \cN_+} \xi \|^2  \label{eq:bound-H-end}
\end{align}
and we arrive with for sufficiently small $\kappa>0$ at 
\begin{align}
 \langle e^{\kappa \cN_+} \xi_N, \; \mathcal{N}_+ e^{\kappa \cN_+} \xi_N \rangle \leq \langle e^{\kappa \cN_+} \xi_N, \; \mathcal{H}_N e^{\kappa \cN_+} \xi_N \rangle \leq C \| e^{\kappa \cN_+} \xi \|^2 \;  \label{eq:bound-N-final}
\end{align}
where the first estimate follows from the gap of the kinetic energy and $v \geq 0$. 
Next we use \eqref{eq:bound-N-1} to prove Theorem \ref{thm:main}. To this end we define for $s \in [0,1]$ the Fock space vector 
\begin{align}
\label{eq:xi-s-beginning}
\xi_{N} (s) := e^{s \kappa \mathcal{N}_+} \xi_N \; . 
\end{align}
Then we have 
\begin{align}
 \| \xi_N (1) \|^2 =  \| e^{\kappa \cN_+} \xi_N \|^2 \quad \text{and} \quad \| \xi_N (0) \|^2 = \| \xi_N \|^2 =1  \label{eq:xsi-eins}
\end{align}
thus, to control $\| \xi_N (1) \|^2$ for sufficiently small $\kappa$ it thus suffices to control the derivative $\partial_s \| \xi_N (s) \|^2$. We compute 
\begin{align}
\partial_s \| \xi_N (s) \|^2 =  2\kappa \langle \xi_N (s), \mathcal{N}_+  \xi_N (s) \rangle 
\end{align}
and arrive with \eqref{eq:bound-N-final} for sufficiently small $\kappa>0$ at 
\begin{align}
\vert \partial_s \| \xi_N (s) \|^2 \vert \leq    C \kappa   \langle \xi_N (s),   \xi_N (s) \rangle \; . 
\end{align}
With Gronwall's inequality we obtain $\| \xi_N (1) \|^2 \leq e^{C \kappa} \| \xi_N (0) \|^2 =  e^{C \kappa} $. Thus  the desired estimate 
\begin{align}
\label{eq:xi-s-end}
 \langle \xi_N, \; e^{2 \kappa \mathcal{N}_+} \xi_N \rangle \leq  e^{C \kappa} \; . 
\end{align}
follows.

The proof for excited states $\xi_N^{(k)}$ with $k \in \mathbb{N}$ satisfying \eqref{ass:ev} follows similarly, i.e. by estimating 
\begin{align}
\langle e^{\kappa \cN_+} \xi_N^{(k)}, \; \mathcal{H}_N \; e^{2\kappa \cN_+} \xi_N^{(k)} \rangle \leq  C_1 \langle e^{\kappa \cN_+} \xi_N^{(k)}, \; \left( \mathcal{G}_N  - E_N \right) e^{\kappa \cN_+} \xi_N^{(k)} \rangle + C_2  \| e^{\kappa \cN_+} \xi_N^{(k)} \| \; . 
\end{align}
For the first term we use instead of the eigenvalue equation $( \mathcal{G}_N - E_N) \xi_N =0$ for the ground state, that $( \mathcal{G}_N - E^{(k)}_N) \xi_N^{(k)} = 0 $ and $E_N^{(k)} = E_N + C$ for some $C>0$ so that \eqref{eq:claim11} becomes 
 \begin{align}
 2 \langle \xi_N^{(k)}, \; & e^{ \kappa \mathcal{N}_+} (\mathcal{G}_N - E_N ) e^{ \kappa \mathcal{N}_+} \xi_N^{(k)} \rangle \notag \\
&= -   \langle \xi_N^{(k)},  \;  \left[ e^{ \kappa \mathcal{N}_+} , \; \; \left[  e^{ \kappa \mathcal{N}_+} ,  \;  \mathcal{G}_N \right] \right]  \xi_N^{(k)} \rangle + C \| e^{\kappa \cN_+} \xi_N^{(k)} \|^2  \; . 
 \end{align}
that leads to 
\begin{align}
\langle e^{\kappa \cN_+} \xi_N^{(k)}, \; \mathcal{H}_N \; e^{2\kappa \cN_+} \xi_N^{(k)} \rangle \leq  C_3 \vert \langle \xi_N^{(k)},  \;  \left[ e^{ \kappa \mathcal{N}_+} , \; \; \left[  e^{ \kappa \mathcal{N}_+} ,  \;  \mathcal{G}_N \right] \right]  \xi_N^{(k)} \rangle \vert + C_4 \| e^{\kappa \cN_+} \xi_N^{(k)} \|^2 
\end{align}
and we can follows the lines of the proof for the ground state. 
\end{proof}

\subsection{Proof of Corollary \ref{thm:LDE}}
\label{sec:proof-LDE}

In this section, we provide the proofs of Corollary \ref{thm:LDE} and Eq. \ref{eq:LDE-O}. 

\begin{proof}[Proof of Corollary \ref{thm:LDE}] The proof is based on Theorem \ref{thm:main} and ideas from \cite[Appendix A]{BCCS}, where the expectation value  $ \cN_+ $ in the ground state $\psi_N$ is computed explicitly. First, let us explain how to extend the computation of $\langle \psi_N, \big( \cN_+ - \mu \big) \psi_N\rangle$ from \cite[Appendix A]{BCCS} to get $\langle \psi_N, \big(\cN_+ - \mu \big)^2 \psi_N\rangle$. The key input from \cite[Eq. (6.7)]{BCCS} is that, up to a phase factor of $\psi_N$, we have the norm approximation  
\begin{align} \label{eq:psiN-norm}
\| U_N \psi_N - e^{B(\eta)} e^A e^{B(\tau)} \Omega\|^2 \le C N^{-1/4}.  
\end{align}
where $e^{A}$ and $e^{B(\tau)}$ are unitary transformations with the cubic kernel
$$
A = \frac{1}{\sqrt{N}} \sum_{r\in P_H, v\in P_L} \eta_r (\sigma_v b^*_{r+v} b^*_{-r}b^*_{-v}+ \gamma_v b^*_{r+v}b^*_{-r}b_v - {\rm h.c.}) = A_\sigma+ A_\gamma- {\rm h.c.}
$$
and the quadratic kernel
$$
B(\tau)= \frac{1}{2}\sum_{p\in \Lambda_+^*} (b_p^* b_{-p}^* - b_{-p}b_p),\quad \tanh(2\tau_p) = -\frac{G_p}{F_p}. 
$$
(see \cite[Eq. (3.34)]{BCCS} and  \cite[Eq.  (5.9)]{BCCS}, respectively). 

We will prove that the norm approximation \eqref{eq:psiN-norm} still holds true if the cubic transformation $e^{A}$ is removed (a similar idea was used recently in \cite{COS} to study the norm approximation in the dynamical problem). To see this, we use the pointwise estimates 
\begin{align}
\label{eq:bounds-eta,tau}
 |\tau_p|\le C |p|^{-2},\quad |\sigma_p|\le C,\quad |\gamma_p|\le C 
\end{align}
for all $p\in \Lambda^*_+$, as well as the stability estimates 
\begin{align}\label{eq:stability-B-A}
e^{-tB(\tau)} (\cN_++1)^k e^{tB(\tau)} \le C_k  (\cN_++1)^k, \quad e^{-tA} (\cN_++1)^k e^{tA} \le C_k  (\cN_++1)^k.
\end{align}
Here in \eqref{eq:stability-B-A}, the first estimate is similar to the bound for  $e^{B(\eta)}$, while the second estimate was discussed in \cite[Prop. 4.2]{BCCS}. We start by noting that
\begin{align*}
\| b_v e^{B(\tau)}\Omega\|^2 &= \int_0^1 dt \partial_t \| b_v e^{tB(\tau)}\Omega\|^2 = \int_0^1 dt \langle \Omega, e^{-tB(\tau)}[b_v^* b_v,B(\tau)] e^{tB(\tau)}\Omega \rangle \\
&\le \int_0^1 dt \langle \Omega, e^{-tB(\tau)} (\eta_v (\cN_+ +1) )e^{tB(\tau)}\Omega \rangle \le C |v|^{-2}
\end{align*}
for all $v\in \Lambda^*_+$.  Therefore, 
\begin{align*}
|\langle \xi_1, A_\sigma \xi_2 \rangle| &\le \frac{1}{\sqrt{N}} \sum_{r\in P_H, v\in P_L} |\eta_r \sigma_v|  | \langle b_{-v} b_{r+v} b_{-r}\xi_1,\xi_2\rangle|  \\
&\le  \frac{C}{\sqrt{N}} \sum_{|r| \ge N^{1/2}\ge v} |r|^{-2} |v|^{-2} \| b_{r+v} b_{-r}\xi_1 \| \|\xi_2\| \\
& \le  \frac{C}{\sqrt{N}} \left( \sum_{|r| \ge N^{1/2}\ge v} |r|^{-4} |v|^{-4} \|\xi_2\|^2 \right)^{1/2} \left( \sum_{|r| \ge N^{1/2}\ge v}  \| b_r b_{-v} \xi_1\|^2 \right)^{1/2}\\
&\le  \frac{C}{N^{3/4}} \|\xi_2\|^2 \| (\cN+1)^2 \xi_1\| 
\end{align*}
for all vectors $\xi_1,\xi_2\in \cF_+^{\le N}$. By similar estimates for $A_\gamma$, we also find that 
\begin{align}
|\langle \xi_1, A \xi_2 \rangle| \le \frac{C}{N^{3/4}} \| (\cN+1)^2 \xi_1\| \| (\cN+1)^2 \xi_2\|.  
\end{align}
Consequently,
\begin{align*}
1- \langle e^{B(\tau)}  \Omega, e^{A}e^{B(\tau)} \Omega \rangle &= - \int_0^1 dt \partial_t \langle e^{B(\tau)}  \Omega, e^{tA}e^{B(\tau)}  \Omega \rangle = - \int_0^1 dt   \langle e^{B(\tau)}  \Omega, A e^{tA}e^{B(\tau)}  \Omega\rangle \\
&\le  \frac{C}{N^{3/4}} \| (\cN+1)^2 e^{B(\tau)}  \Omega\| \| (\cN+1)^2 e^{tA}e^{B(\tau)}  \Omega \| \le  \frac{C}{N^{3/4}}
\end{align*}
which is equivalent to 
\begin{align} \label{eq:psiN-norm-2}
\| e^{B(\eta)} e^{B(\tau)}  \Omega - e^{B(\eta)} e^{A}e^{B(\tau)} \Omega\|^2  = \| e^{B(\tau)}  \Omega - e^{A}e^{B(\tau)} \Omega\|^2 \le  \frac{C}{N^{3/4}}. 
\end{align}  
Using \eqref{eq:psiN-norm}, \eqref{eq:psiN-norm-2} and the triangle inequality we obtain 
 \begin{align} \label{eq:psiN-norm-3}
\| U_N \psi_N - e^{B(\eta)}e^{B(\tau)} \Omega\|^2 \le  \frac{C}{N^{1/4}}. 
\end{align}  

Next, we use the simplified norm approximation  \eqref{eq:psiN-norm-3} to compute the r.h.s. of \eqref{eq:log-mom} for which we need the first and second moment of the operator $(\cN_+ - \mu\big)$ in expectation of the ground state. We argue similarly as in  \cite[Eq.  (A.1)]{BCCS} to find that
\begin{align}\label{eq:Psi-cNk-trans}
\Big| \langle \psi_N, \big(\cN_+ - \mu\big)^k \psi_N\rangle - \langle  e^{B(\eta)}  e^{B(\tau)}  \Omega, \big(\cN_+ - \mu\big)^k  e^{B(\eta)}  e^{B(\tau)}  \Omega \rangle \Big| \le C_k N^{-1/8}  
\end{align}
for every $k=1,2$. Moreover, 
\begin{align}
\label{eq:Psi-cNk-trans-b}
\Big|  \langle  e^{B(\eta)}  e^{B(\tau)}  \Omega, \big( \cN_+ - \mu \big)^k - \bigg( \sum_{p\in \Lambda_+^*} b_p^* b_p - \mu \bigg)^k  e^{B(\eta)}  e^{B(\tau)}  \Omega \rangle \Big| \le C_k N^{-1/8}  .
\end{align}
for $k =1,2$ and thus it remains to compute 
$$ \langle  e^{B(\eta)}  e^{B(\tau)}  \Omega, \bigg(\sum_{p\in\Lambda_+^*} b_p^* b_p - \mu\bigg)^k  e^{B(\eta)}  e^{B(\tau)}  \Omega \rangle.$$
for $k=1,2$. By using \eqref{eq:action-bogo} and \eqref{eq:estimates-dp*} we can estimate 
$$ e^{-B(\eta)} b_p  e^{B(\eta)} \approx \gamma_p b_p +\sigma_p b_{-p}^*.$$
and 
$$ e^{-B(\tau)}   e^{-B(\eta)}  b_p  e^{B(\eta)} e^{B(\tau)}  \approx  (\gamma_p \cosh \tau_p+\sigma_p \sinh \tau_p)  b_p^* + (\gamma_p \sinh \tau_p + \sigma_p \cosh \tau_p) b_{-p}  .$$
%$$
%=\sum_{p\ne 0} \sum_{p\in \Lambda^*_+} ( (\gamma_p \cosh \tau_p+\sigma_p \sinh \tau_p) ^2 b_p^* b_p  + (\gamma_p \sinh \tau_p + \sigma_p \cosh \tau_p)^2 b_{-p} b_{-p}) + 2(\gamma_p \cosh \tau_p+\sigma_p \sinh \tau_p) (\gamma_p \sinh \tau_p + \sigma_p \cosh \tau_p)  b_p b_{-p}
%$$
%The expectation against $\Omega$ is 
More precisely, using that by the properties of the hypergeometric functions 
\begin{align}
\gamma_p \cosh \tau_p+\sigma_p \sinh \tau_p = \cosh (\eta_p  + \tau_p), \quad 
\gamma_p \sinh \tau_p + \sigma_p \cosh \tau_p = \sinh (\eta_p + \tau_p)  
\end{align}
and $\vert \eta_p + \tau_p - \nu_p \vert \leq C N^{-1}$ from \cite[Section 3]{RS} with $\nu_p$ given by \eqref{def:sigma0}, we have introducing the notations 
\begin{align}
\widetilde{\gamma}_p := \cosh ( \nu_p ), \quad \widetilde{\sigma}_p := \sinh( \nu_p ) 
\end{align}
we find from \eqref{eq:action-bogo}, \eqref{eq:estimates-dp*}, \eqref{eq:bounds-eta,tau} and Theorem \ref{thm:main} 
\begin{align}
\big\vert \langle  & e^{B( \eta)} e^{B( \tau)} \Omega, \bigg( \sum_{p \in \Lambda_+^*} b_p^*b_p -\mu \bigg)^k e^{B( \eta)} e^{B( \tau)} \Omega \rangle \notag \\
& \hspace{2cm} - \langle \Omega, \bigg( \sum_{p \in \Lambda_+^*} \big[ \widetilde{\gamma}_p b_p^* + \widetilde{\sigma}_p b_{-p} \big]  \big[ \widetilde{\gamma}_p b_p + \widetilde{\sigma}_p b_{-p}^* \big] - \mu \bigg)^k \Omega \rangle \big\vert \leq CN^{-1/2} \; . 
\end{align}
It remains to compute the expectation value 
\begin{align}
\label{eq:leftover}
\langle \Omega, \bigg( \sum_{p \in \Lambda_+^*} \big[ \widetilde{\gamma}_p b_p^* + \widetilde{\sigma}_p b_{-p} \big]  \big[ \widetilde{\gamma}_p b_p + \widetilde{\sigma}_p b_{-p}^* \big] - \mu \bigg)^k \; .  \Omega \rangle \; . 
\end{align}
for $k=1,2$. For $k=1$, we find with $b_p \Omega =0$ and the commutation relations \eqref{eq:comm-b} 
\begin{align}
\langle \Omega, \sum_{p \in \Lambda_+^*} \big[ \widetilde{\gamma}_p b_p^* + \widetilde{\sigma}_p b_{-p} \big]  \big[ \widetilde{\gamma}_p b_p + \widetilde{\sigma}_p b_{-p}^* \big]  \Omega \rangle - \mu = \sum_{p \in \Lambda_+^*} \widetilde{\sigma}_p^2 - \mu  = 0 \label{eq:firstmoment}
\end{align}
by \eqref{def:sigma0}. Thus, the term linear in $\lambda$ of the r.h.s. of \eqref{eq:log-mom} vanishes. To compute the quadratic term, we find with \eqref{eq:firstmoment} 
\begin{align}
\label{eq:leftover}
\langle \Omega, & \bigg( \sum_{p \in \Lambda_+^*} \big[ \widetilde{\gamma}_p b_p^* + \widetilde{\sigma}_p b_{-p} \big]  \big[ \widetilde{\gamma}_p b_p + \widetilde{\sigma}_p b_{-p}^* \big] - \mu \bigg)^2 \Omega \rangle \notag \\
&= \langle \Omega, \bigg( \sum_{p \in \Lambda_+^*} \big[ \widetilde{\gamma}_p b_p^* + \widetilde{\sigma}_p b_{-p} \big]  \big[ \widetilde{\gamma}_p b_p + \widetilde{\sigma}_p b_{-p}^* \big]\bigg)^2 \Omega \rangle - \mu^2 \; . 
\end{align}
To compute the remaining expectation value, we observe that expectations of operators in the vacuum vanish, whenever the number of (modified) creation operators does not match the number of (modified) annihilation operators. Thus, the expectation value of \eqref{eq:leftover} in the vacuum reduces with the commutation relations \eqref{eq:comm-b} to 
\begin{align}
\langle \Omega, & \bigg( \sum_{p \in \Lambda_+^*} \big[ \widetilde{\gamma}_p b_p^* + \widetilde{\sigma}_p b_{-p} \big]  \big[ \widetilde{\gamma}_p b_p + \widetilde{\sigma}_p b_{-p}^* \big]- \mu \bigg)^2 \Omega \rangle \notag \\
=& \sum_{p,q \in \Lambda_+^*}    \widetilde{\sigma}_p \widetilde{\gamma}_p \widetilde{\gamma}_q \widetilde{\sigma}_q  \langle \Omega, b_{-p} b_p b_q^* b_{-q}^* \Omega \rangle +\sum_{p,q \in \Lambda_+^*}   \widetilde{\sigma}_p^2  \widetilde{\sigma}_q^2  \langle \Omega, b_{-p} b_{-p}^* b_{-q} b_{-q}^* \Omega \rangle - \mu^2 \notag \\
=& \bigg( 1 - \frac{1}{N}\bigg)  \sum_{p\in \Lambda_+^*}    \widetilde{\sigma}_p^2 \widetilde{\gamma}_p^2  \; . \label{eq:secondmoment}
\end{align} 
Note that   $\sum_{p\in \Lambda_+^*}    \widetilde{\sigma}_p^2 \widetilde{\gamma}_p^2$ is exactly $\sigma^2$ defined in \eqref{def:sigma0}. Thus we have proved that 
$$\lim_{N\to \infty}\mathbb{E}((\cN_+-\mu)^2)= \sigma^2.$$

In conclusion,  by Taylor's theorem and Theorem \ref{thm:main}, there exits $\lambda_0 >0$ such that for all $\lambda < \lambda_0$
\begin{align}\label{eq:log-mom}
\mathbb{E} \big[ e^{\lambda (  \mathcal{N}_+ - \mu) } \big] &= \bigg( 1 + \lambda \langle\psi_N, \big( \mathcal{N}_+ - \mu \big) \psi_N \rangle + \frac{\lambda^2}{2} \langle \psi_N, \big(\mathcal{N}_+ - \mu \big)^2 \psi_N \rangle + \mathcal{O}( \lambda^3 ) \bigg) \nonumber\\
& = 1 +  \frac{\lambda^2}{2} \sigma^2 + \mathcal{O}( \lambda^3 )  + o(1)_{N\to \infty}
\end{align}
for all $\lambda<\lambda_0$. In the last line we have used $ \langle\psi_N, \big( \mathcal{N}_+ - \mu \big) \psi_N \rangle\to 0$ and  $\langle\psi_N, \big( \mathcal{N}_+ - \mu \big)^2 \psi_N \rangle\to \sigma^2$. Thus we have proved \eqref{eq:exp-N+-comp}.

%, we find that
%%Summarizing, we get with \eqref{eq:firstmoment} and \eqref{eq:secondmoment} from \eqref{eq:log-mom} 
%\begin{align}\label{eq:log-mom}
%\limsup_{N \rightarrow \infty} \left| \log \mathbb{E}\big[ e^{\lambda ( \mathcal{N}_+ - \mu)} \big] -   \frac{\lambda^2}{2} \sigma^2 \right| \le \mathcal{O}( \lambda^3). \end{align}

Finally, we observe that by Markov's inequality for any $\lambda >0$
\begin{align}
\mathbb{P} \big[  \mathcal{N}_+ - \mu > x \big] \leq \;  e^{- \lambda x} \mathbb{E} \big[ e^{\lambda (  \mathcal{N}_+ - \mu) } \big], 
\end{align}
leading to 
\begin{align}
\log \mathbb{P} \big[  \mathcal{N}_+ - \mu > x \big] \leq \;  -\lambda x + \log  \mathbb{E} \big[ e^{\lambda (  \mathcal{N}_+ - \mu) } \big]. \label{eq:log-P}
\end{align}
Inserting \eqref{eq:log-mom} in the last estimate and taking the infimum over all $\lambda <\lambda_0$ we obtain \eqref{eq:fund-cor-2}. The bound \eqref{eq:fund-cor-3} follows easily from \eqref{eq:fund-cor-2}.
%  the second bound of Corollary corollary
%
%\eqref{eq:LDE-N}. 
%
%and thus it remains to compute the logarithmic moment generating function of $\mathcal{N}_+ - \mu$. For this we note that by Theorem \ref{thm:main} and 
%
%for all $\lambda < \lambda_0$. We plug this back into \eqref{eq:log-P},  and thus arrive at \eqref{eq:LDE-N}. 
%
%
%
\end{proof}

\begin{proof}[Proof of Eq. \eqref{eq:LDE-O}] We use similar ideas as for the proof of Corollary \ref{thm:LDE}.To be precise, we find by Markov's ineqquality for all $\lambda>0$, Theorem \ref{thm:main} and Taylor's theorem 
\begin{align}
\log &\mathbb{P} \bigg[ \sum_{i=1}^N O_i - \widetilde{\mu} > x \bigg] \notag \\
\leq& - \lambda x + \log \mathbb{E} \bigg[ e^{\lambda (\sum_{i=1}^N O_i - \widetilde{\mu}_0 )} \bigg] \notag \\
=&\lambda x + \log \bigg( \lambda  \langle \psi_N, \bigg( \sum_{i=1}^N O_i - \widetilde{\mu} \bigg) \psi_N \rangle + \frac{\lambda^2}{2} \langle \psi_N, \bigg( \sum_{i=1}^N O_i - \widetilde{\mu} \bigg)^2 \psi_N \rangle + O( \lambda^3 )\bigg]\label{eq:log-momO}
\end{align}
where we used that $\sum_{i=1}^N O_i = d \Gamma (O)$ for $N$-particle wave functions, the estimate $ \| d \Gamma (O) \psi_N \| \leq \| O \|_{\rm op} \| \mathcal{N}_+ \psi_N \| $ and $[\mathcal{N}_+ , d \Gamma (O)] =0$. We follow the lines of the proof of Theorem \ref{thm:LDE} and thus are left with computing the expectation value 
\begin{align}
\langle \Omega, & \big( \sum_{p,q \in \Lambda_+^*} O_{p,q} \big[ \widetilde{\gamma}_p b_p^* + \widetilde{\sigma}_p b_{-p} \big]  \big[ \widetilde{\gamma}_q b_q + \widetilde{\sigma}_q b_{-q}^* \big]- \widetilde{\mu} \big)^k \Omega \rangle
\end{align}
for $k=1,2$. Similarly as before, we find for $k=1$ 
\begin{align}
\langle \Omega, &  \sum_{p,q \in \Lambda_+^*} O_{p,q} \big[ \widetilde{\gamma}_p b_p^* + \widetilde{\sigma}_p b_{-p} \big]  \big[ \widetilde{\gamma}_q b_q + \widetilde{\sigma}_q b_{-q}^* \big] \Omega \rangle = \sum_{p \in \Lambda_+^*} O_{p,p} \widetilde{\sigma}_p^2 - \widetilde{\mu} = 0 
\end{align}
and 
\begin{align}
& \langle  \Omega,  \bigg( \sum_{p,q \in \Lambda_+^*} O_{p,q} \big[ \widetilde{\gamma}_p b_p^* + \widetilde{\sigma}_p b_{-p} \big]  \big[ \widetilde{\gamma}_q b_q + \widetilde{\sigma}_q b_{-q}^* \big] - \widetilde{\mu} \bigg)^2 \Omega \rangle \notag \\
&= \sum_{p,q,m,n \in \Lambda_+^*} O_{p,q}O_{m,n}  \big[ \widetilde{\sigma}_p \widetilde{\gamma}_q\widetilde{\sigma}_n \widetilde{\gamma}_m \langle \Omega,   b_{-p}  b_q b^*_m b^*_{-n} \Omega \rangle +  \widetilde{\sigma}_p \widetilde{\sigma}_q\widetilde{\sigma}_n \widetilde{\sigma}_m \langle \Omega,   b_{-p}  b_{-q}^* b_{-m} b^*_{-n} \Omega \rangle\big] - \widetilde{\mu}^2 \notag \\
&= \bigg( 1- \frac{1}{N} \bigg) \sum_{p,q \in \Lambda_+^*} \vert O_{p,q} \vert^2 \widetilde{\gamma}_q^2 \widetilde{\sigma}_p^2 + \widetilde{\mu}^2 - \widetilde{\mu}^2 \; . 
\end{align}
Plugging this back into \eqref{eq:log-momO} and optimizing w.r.t. to $0 < \lambda < \lambda_0$, arrive at Eq. \eqref{eq:LDE-O}. 
\end{proof}

\subsection{Proof of Theorem \ref{thm:posT}}
%\label{sec:proof-thm2}
%In this section we use similarly as in the previous Section Proposition \ref{prop:G} and Lemma \ref{lemma:bounds-wN},\ref{lemma:double} to prove Theorem \ref{thm:posT}. 

\begin{proof}As a preliminary step, we show that for any positive inverse temperature $\beta = 1/T >0$ the partition function satisfies 
\begin{align}
 c_\beta \leq  e^{\beta E_N} Z(\beta ) :=e^{\beta E_N}\Tr e^{-\beta H_N }  \leq C_\beta \label{eq:Z-bounds}
\end{align}
for positive constants $c_\beta ,C_\beta>0$. We start with the upper bound of \eqref{eq:Z-bounds}. To this end, we write by cyclicity of the trace 
\begin{align}
e^{\beta E_N}Z( \beta) =\Tr e^{-\beta (\mathcal{G}_N - E_N ) }  
\end{align}
with $\mathcal{G}_N $ defined in \eqref{def:G}. By Proposition \ref{prop:G} we find that the partition function is bounded from above by 
\begin{align}
e^{\beta E_N} Z( \beta) \leq e^{ C_1 \beta } \Tr e^{- C_2 \beta\mathcal{H}_N}  \leq e^{ C_1 \beta } \Tr e^{- C_2 \beta\mathcal{K}}  
\end{align}
for positive constants $C_1,C_2 >0$ and for $\mathcal{K}$ given by \eqref{def:HN}. We write the trace in terms of the eigenbasis of $\mathcal{K}$ and find with the exponential laws 
\begin{align}
e^{\beta ( E_N - C_1 )}Z( \beta) \leq \sum_{n_p \in \mathbb{Z}} e^{- C_2  \beta \sum_{p \in \Lambda_+^*} n_p p^2 } = \sum_{n_p \in \mathbb{Z}} \prod_{p \in \Lambda_+^*}\left(  e^{- \beta p^2 } \right)^{n_p} = \prod_{p \in \Lambda_+^*} \frac{1}{1-e^{-C_2 \beta p^2}} \; 
\end{align}
where we concluded by the geometric series in the last step. We proceed with the logarithmic laws 
\begin{align}
\ln e^{\beta E_N} Z( \beta) &\leq \beta C_1  - \sum_{p \in \Lambda_+^*} \ln ( 1- e^{- C_2 \beta p^2} ) \leq \beta C_1 +  C_3 \sum_{p \in \Lambda_+^* } e^{-C_2 \beta p^2} \notag\\ 
&\leq \beta C_1 +  C_3 \sum_{p \in \Lambda_+^* } e^{-C_2 \beta p} = \beta C_1 +  C_3 \frac{1}{ 1- e^{-C_2 \beta} } 
\end{align}
for some positive constant $C_3 >0$ and thus, the upper bound in \eqref{eq:Z-bounds} follows. 

For the lower bound in \eqref{eq:Z-bounds} we remark that it follows from \cite[Prop. 3.2]{BCCS} (with similar arguments as in the proof of Proposition \ref{prop:G}) that 
\begin{align}
\mathcal{G}_N - E_N \leq C_1 \mathcal{H}_N  + C_2 \mathcal{N}_+ \leq  C \mathcal{H}_N
\end{align}
for some constants $C,C_1,C_2 >0$. Moreover, it follows from Sobolev inequality that 
\begin{align}
\mathcal{V}_N \leq C \mathcal{K}^2 
\end{align}
and thus 
\begin{align}
\mathcal{G}_N - E_N \leq C (\mathcal{K}^2 + 1  )  \; . 
\end{align}
Again by cyclicity of the trace, we find in the eigenbasis of $\mathcal{K}$ that 
\begin{align}
e^{\beta E_N - \beta C} Z( \beta) \geq \sum_{n_p \in \mathbb{Z}} e^{- C_2  \beta \sum_{p \in \Lambda_+^*} n_p p^4 } = \sum_{n_p \in \mathbb{Z}} \prod_{p \in \Lambda_+^*}\left(  e^{- \beta p^4 } \right)^{n_p} = \prod_{p \in \Lambda_+^*} \frac{1}{1-e^{-C_2 \beta p^4}} . \; 
\end{align}
We conclude with the logarithmic laws that 
\begin{align}
\ln e^{\beta E_N} Z( \beta) \geq \beta C_1  - \sum_{p \in \Lambda_+^*} \ln ( 1- e^{- C_2 \beta p^2} ) \geq \beta C_1  +   \sum_{p \in \Lambda_+^* } e^{-C_2 \beta p^4}  \geq \beta C_1  + e^{- C_2 \beta} 
\end{align}
and thus the lower bound in \eqref{eq:Z-bounds} follows.

Now, we prove \eqref{eq:N-posT}. Since $\mathcal{U}_N \cN_+ \mathcal{U}_N^* = \cN_+$ we find by cyclicity of the trace and definitions \eqref{def:G}, \eqref{def:wN}  
\begin{align}
e^{\beta E_N}\Tr \left[  e^{-\beta H_N} e^{2\widetilde{\kappa} \cN_+ } \right] =  \Tr \left[  e^{-\beta( \mathcal{G}_N - E_N )} e^{2\widetilde{\kappa} \wN } \right] \; . 
\end{align}
This time, we write the trace in the eigenbasis $\lbrace \xi_j \rbrace_{j \in \mathbb{N}}$ of the excitation Hamiltonian $\mathcal{G}_N$ with corresponding eigenvalues $E_j$ . With these notations we get 
\begin{align}
e^{\beta E_N} \Tr \left[  e^{-\beta H_N} e^{2\widetilde{\kappa} \cN_+ } \right] = \sum_{j \in \mathbb{N}} e^{ - \beta (E_N - E_j )} \langle \xi_j, \; e^{2\widetilde{\kappa} \wN } \xi_j \rangle . 
\end{align}
 With similar arguments as in \eqref{eq:wN-1}-\eqref{eq:wN-2} we find that 
\begin{align}
e^{\beta E_N} \Tr \left[  e^{-\beta H_N} e^{2\widetilde{\kappa} \cN_+ } \right] = \sum_{j \in \mathbb{N}} e^{ - \beta (E_N - E_j ) + C \kappa} \langle \xi_j, \; e^{2 \kappa \cN_+ } \xi_j \rangle \label{eq:T1}
\end{align}
for $\kappa = c_O \widetilde{\kappa}$ and some $c_0, C>0$ and thus it remains to estimate the r.h.s. of \eqref{eq:T1}. 
%To this end, we consider eigenvalues $E_j$ of $\mathcal{G}_N$ with $E_N - E_j =o (N)$ resp. $E_N - E_j \geq c_j N$ for some $c_j >0$ as $N \rightarrow \infty$ separately. 
%Assume first that $E_j \geq c_j N$ for some $c_j >0$ as $N \rightarrow \infty$. Then, with the trivial bound $\cN_+ \leq N$ we find 
%\begin{align}
% \langle \xi_j, \; e^{C \kappa \cN_+ } \xi_j \rangle \leq e^{C \kappa N } \leq e^{\beta  (E_N - E_j) /2} \;  \label{eq:T-case1}
%\end{align}
%for sufficiently large $\beta >0$. 
%Next, assume that $E_N - E_j = o (N)$. 
Similarly as in \eqref{eq:xsi-eins} we define for $s \in [0,1]$
\begin{align}
\xi_j (s) := e^{s \kappa \cN_+} \xi_j 
\end{align}
satisfying $\| \xi_j (1) \| = \langle \xi_j, e^{2 \kappa \cN_+} \xi_j \rangle$ and $\| \xi_j (0) \|^2 = 1$. As in Section \ref{sec:proof-thm} we perform a Gronwall argument and compute 
\begin{align}
\partial_s \| \xi_j (s) \|^2 = \langle \xi_j (s), \; \cN_+ \xi (s) \rangle
\end{align}
Similarly as in \eqref{eq:bound-H1}-\eqref{eq:bound-H-end} we find for sufficiently small $\kappa>0$ with the eigenvalue equation $ ( \mathcal{G}_N - E_N ) \xi_j = ( E_j - E_N ) \xi_j $ that 
\begin{align}
\langle \xi_j (s), \; \cN_+ \xi_j (s) \rangle \leq \langle \xi_j (s), \; \mathcal{H}_N \xi_j (s) \rangle \leq \frac{C}{1- \kappa^2}  (E_j - E_N + 1 ) \| \xi_N (s) \|^2  \; . 
\end{align}
Thus, we arrive with Gronwall's inequality at 
\begin{align}
\langle \xi_j, e^{2 \kappa \cN_+} \xi_j \rangle =  \| \xi_j (1) \|^2 \leq e^{ C ( E_j - E_N + 1)} \| \xi_j (0) \|^2 = e^{ C \kappa ( E_j - E_N + 1)} \; . 
\end{align}
For sufficiently large $\beta >0$ we thus find 
\begin{align}
\langle \xi_j, e^{2 \kappa \cN_+} \xi_j \rangle \leq e^{C \kappa + \beta (E_j - E_N)/2} \; .   \label{eq:T-case2}
\end{align}
Thus, from \eqref{eq:T1} and \eqref{eq:T-case2} we find that 
\begin{align}
\frac{ \Tr \left[  e^{-\beta H_N} e^{2 \widetilde{\kappa} \cN_+ } \right]}{Z( \beta)}  \leq e^{C \kappa} \frac{ e^{\beta E_N/2}Z( \beta /2)  }{ e^{\beta E_N} Z( \beta )} \leq C_\beta e^{C \kappa}
\end{align}
and we conclude with \eqref{eq:Z-bounds}. 
\end{proof}

\subsection*{Data availability} Data sharing not applicable to this article as no datasets were generated or analysed during the current study.

\section*{Declarations}

\subsection*{Conflict of interest} The authors declare that they have no conflict of interest.

\subsection*{Funding} This work was partially funded by the Deutsche Forschungsgemeinschaft via the DFG project Nr. 426365943 (PTN) and by the European Research Council via the ERC CoG RAMBAS  - Project-Nr. 10104424 (PTN and SR).

\end{document}